\newtheorem{theorem}{Theorem}[section]
\newtheorem{lemma}[theorem]{Lemma}
\newtheorem{corollary}[theorem]{Corollary}
\newcommand{\define}{\stackrel{\triangle}{=}}
\newcommand{\xSNR}{\textrm{SNR}}
\newcommand{\xINR}{\textrm{INR}}
\newcommand{\xV}{\mathbf{V}}
\newcommand{\xS}{\mathbf{S}}
\begin{document}
\bibliographystyle{ieeetr}

\title{On the Capacity and Generalized Degrees of Freedom of the $X$ Channel}
\author{\authorblockN{Chiachi Huang, Viveck R. Cadambe, and Syed A. Jafar}\\
\authorblockA{Electrical Engineering and Computer Science\\
University of California Irvine\\
Irvine, California, USA\\
Email: \{chiachih, vcadambe, syed\}@uci.edu}}

\maketitle
\vspace{10pt}
\begin{abstract}
We explore the capacity and generalized degrees of freedom of the two-user Gaussian $X$ channel, i.e. a generalization of the $2$ user interference channel where there is an independent message from each transmitter to each receiver. There are three main results in this paper. First, we characterize the sum capacity of the deterministic X channel model under a symmetric setting. Second, we characterize the generalized degrees of freedom of the Gaussian $X$ channel under a similar symmetric model. Third, we extend the noisy interference capacity characterization previously obtained for the interference channel to the $X$ channel. Specifically, we show that the $X$ channel associated with noisy (very weak) interference channel has the same sum capacity as the noisy interference channel.
\end{abstract}
\newpage
\section{Introduction}
Recent research in multi-user information theory has been characterized by a surge of interest in the study of capacity regions of wireless Gaussian networks. Much of this interest has been fueled by significant recent progress in the search of the capacity region of wireless interference networks, a classical problem of multi-user information theory. In their seminal work \cite{Etkin_Tse_Wang}, Etkin, Tse and Wang approximated the capacity region of the two-user Gaussian interference channel to within one bit. Further insight into the capacity of the two-user Gaussian interference network was revealed in \cite{MK_int,Shang_Kramer_Chen,Sreekanth_Veeravalli}. These references found that the decoding strategy of treating interference as noise at each receiver in the interference network is capacity optimal for a class of interference channels, known as the ``noisy interference'' channels.  Recent results have also found approximations to the capacity regions of certain classes of the  $K$-user interference channel in the high signal-to-noise ratio (SNR) regime. Reference \cite{Cadambe_Jafar_int} approximated the capacity region of the fully connected $K$-user interference channel with time-varying channel coefficients as $$C(\xSNR) = \frac{K}{2} \log(\xSNR)+o(\log(\xSNR))$$ where $\xSNR$ represents the total transmit power of all nodes when the local noise power at each receiver is normalized to unity. In other words, it was shown that the time-varying $K$-user interference channel has $\frac{K}{2}$ degrees of freedom. Similar capacity approximations of the $K$-user  ($K>2$) interference channel with constant channel coefficients (i.e., not time-varying or frequency-selective) are not known in general.

From the recent advances in the study of interference channels, many interesting and powerful tools related to the study of general wireless networks have emerged. Reference \cite{Etkin_Tse_Wang} introduced the notion of \emph{generalized degrees of freedom} to study the performance of various interference management schemes in the interference channel. As its name suggests, the idea of generalized degrees of freedom is a generalization of the concept of degrees of freedom originally introduced in \cite{Zheng_Tse_Diversity}. The idea of generalized degrees of freedom is powerful because in the multiple access, broadcast and two-user interference channels, achievable schemes that are optimal from a generalized degrees of freedom perspective also achieve within a constant number of bits of capacity \cite{Bresler_Tse}. A useful technique in the characterization of the generalized degrees of freedom of a wireless network is the deterministic approach, originally introduced in the context of relay networks \cite{Avestimehr_Diggavi_Tse}. The deterministic approach essentially maps a Gaussian network to a deterministic channel, i.e, a channel whose outputs are deterministic functions of its inputs. The deterministic channel captures the essential structure of the Gaussian channel, but is significantly simpler to analyze. Reference \cite{Bresler_Tse} showed that the deterministic approach leads to a characterization of the generalized degrees of freedom of wireless networks in the two-user interference network, which leads to a constant bit approximation of its capacity. 

In this paper, we explore the two-user $X$ channel - a network with two transmitters, two receivers and four independent messages - one corresponding to each transmitter-receiver pair. The degrees of freedom of the Gaussian $X$ channel have been found in \cite{MMK,Jafar_Shamai}. This work pursues a more refined characterization in terms of the \emph{generalized} degrees of freedom. Unlike the conventional degrees of freedom perspective where all signals are approximately equally strong in the $dB$ scale, the generalized degrees of freedom perspective provides a richer characterization by allowing the full range of relative signal strengths in the $dB$ scale. For example, consider the interference channel. The strong and weak interference scenarios are not visible in the conventional degrees of freedom perspective but become immediately obvious in the generalized degrees of freedom framework. Now consider the $X$ channel which is a generalization of the interference channel to a scenario where every transmitter has a message to every receiver. One of the key features of the $X$ channel is that, unlike the two-user interference channel, it provides the possibility of \emph{interference alignment} \cite{MMK, Jafar_Shamai}. Interference alignment refers to the construction of signals such that they overlap at receivers where they cause interference, but remain distinguishable at receivers where they are desired. Interference alignment is the key to the degrees of freedom characterizations of the $X$ channel with $2$ or more users \cite{Cadambe_Jafar_X}, and for the interference channel with $3$ or more users \cite{Cadambe_Jafar_int}. Since the potential for interference alignment does not arise in the $2$ user interference channel, the two-user $X$ channel provides the simplest possible setting for interference alignment, in terms of the number of transmitters/receivers and channel coefficients. It is shown in \cite{Jafar_Shamai} that, due to interference alignment, the $2$ user $X$ channel has $4/3$ degrees of freedom (assuming time-varying channels), while the $2$ user interference channel has only $1$ degree of freedom. In this work, we explore this capacity advantage of the $X$ channel over the interference channel in the richer context of the generalized degrees of freedom. Specifically we quantify the benefits of interference alignment in terms of generalized degrees of freedom and identify operating regimes where alignment helps the $X$ channel outperform the interference channel. For simplicity, we will keep the number of channel parameters to a minimum by using the symmetric interference channel as our benchmark and presenting our main results for the corresponding symmetric $X$ channel.

\begin{figure}
\begin{center}\setlength{\unitlength}{0.00054167in}
\begingroup\makeatletter\ifx\SetFigFont\undefined%
\gdef\SetFigFont#1#2#3#4#5{%
  \reset@font\fontsize{#1}{#2pt}%
  \fontfamily{#3}\fontseries{#4}\fontshape{#5}%
  \selectfont}%
\fi\endgroup%
{\renewcommand{\dashlinestretch}{30}
\begin{picture}(10421,3170)(0,-10)
\path(1200,190)(1725,190)
\path(1605.000,160.000)(1725.000,190.000)(1605.000,220.000)
\path(1350,2965)(1875,2965)
\path(1755.000,2935.000)(1875.000,2965.000)(1755.000,2995.000)
\thicklines
\path(3300,2965)(5625,2965)
\blacken\thinlines
\path(5505.000,2920.000)(5625.000,2965.000)(5505.000,3010.000)(5541.000,2965.000)(5505.000,2920.000)
\path(6900,190)(7425,190)
\path(7305.000,160.000)(7425.000,190.000)(7305.000,220.000)
\path(6975,2965)(7500,2965)
\path(7380.000,2935.000)(7500.000,2965.000)(7380.000,2995.000)
\put(6525,115){\makebox(0,0)[lb]{\smash{{{\SetFigFont{9}{10.8}{\rmdefault}{\mddefault}{\updefault}$Y_2$}}}}}
\put(6525,2890){\makebox(0,0)[lb]{\smash{{{\SetFigFont{9}{10.8}{\rmdefault}{\mddefault}{\updefault}$Y_1$}}}}}
\put(5625,1090){\makebox(0,0)[lb]{\smash{{{\SetFigFont{8}{9.6}{\rmdefault}{\mddefault}{\updefault}$Z_2$}}}}}
\put(5625,1915){\makebox(0,0)[lb]{\smash{{{\SetFigFont{8}{9.6}{\rmdefault}{\mddefault}{\updefault}$Z_1$}}}}}
\put(1950,115){\makebox(0,0)[lb]{\smash{{{\SetFigFont{9}{10.8}{\rmdefault}{\mddefault}{\updefault}$X_2$}}}}}
\put(0,115){\makebox(0,0)[lb]{\smash{{{\SetFigFont{9}{10.8}{\rmdefault}{\mddefault}{\updefault}$W_{12},W_{22}$}}}}}
\put(75,2890){\makebox(0,0)[lb]{\smash{{{\SetFigFont{9}{10.8}{\rmdefault}{\mddefault}{\updefault}$W_{11},W_{21}$}}}}}
\put(2025,2890){\makebox(0,0)[lb]{\smash{{{\SetFigFont{9}{10.8}{\rmdefault}{\mddefault}{\updefault}$X_1$}}}}}
\put(7575,115){\makebox(0,0)[lb]{\smash{{{\SetFigFont{9}{10.8}{\rmdefault}{\mddefault}{\updefault}$\hat{W}_{21},\hat{W}_{22}$}}}}}
\put(7575,2890){\makebox(0,0)[lb]{\smash{{{\SetFigFont{9}{10.8}{\rmdefault}{\mddefault}{\updefault}$\hat{W}_{11},\hat{W}_{12}$}}}}}
\path(2325,190)(2850,190)
\path(2730.000,160.000)(2850.000,190.000)(2730.000,220.000)
\path(5775,3040)(5775,2890)
\path(5700,2965)(5850,2965)
\path(5775,265)(5775,115)
\path(5700,190)(5850,190)
\thicklines
\put(3057,190){\ellipse{336}{336}}
\put(3132,2965){\ellipse{336}{336}}
\put(5757,2965){\ellipse{336}{336}}
\put(5757,190){\ellipse{336}{336}}
\thinlines
\path(5925,190)(6450,190)
\path(6330.000,160.000)(6450.000,190.000)(6330.000,220.000)
\path(3300,2965)(5625,265)
\blacken\path(5512.598,326.569)(5625.000,265.000)(5580.797,385.296)(5570.188,328.653)(5512.598,326.569)
\path(5925,2965)(6450,2965)
\path(6330.000,2935.000)(6450.000,2965.000)(6330.000,2995.000)
\path(3225,190)(5625,2890)
\blacken\path(5578.910,2770.415)(5625.000,2890.000)(5511.643,2830.207)(5569.193,2827.218)(5578.910,2770.415)
\path(5775,2215)(5775,2815)
\blacken\path(5820.000,2695.000)(5775.000,2815.000)(5730.000,2695.000)(5775.000,2731.000)(5820.000,2695.000)
\blacken\path(5730.000,460.000)(5775.000,340.000)(5820.000,460.000)(5775.000,424.000)(5730.000,460.000)
\path(5775,340)(5775,940)
\thicklines
\path(3225,190)(5625,190)
\blacken\thinlines
\path(5505.000,145.000)(5625.000,190.000)(5505.000,235.000)(5541.000,190.000)(5505.000,145.000)
\path(2400,2965)(2925,2965)
\path(2805.000,2935.000)(2925.000,2965.000)(2805.000,2995.000)
\end{picture}
}
\caption{The two-user Gaussian $X$ channel}
\end{center}
\end{figure}

Our approach to solving the generalized degrees of freedom of the $X$ channel follows the deterministic approach of \cite{Jafar_Vishwanath_GDOF}. We first introduce the deterministic $X$ channel, and find a tight outerbound and achievable scheme for the sum capacity of this channel in Section \ref{sec:dtrmnstcxc}. In Section \ref{sec:gaussianxc}, we extend the achievability and outerbound arguments of Section \ref{sec:dtrmnstcxc} to the Gaussian $X$ channel yielding its generalized degrees of freedom. A second result we obtain is a generalization of the results of \cite{MK_int,Shang_Kramer_Chen,Sreekanth_Veeravalli} to find the capacity of the Gaussian $X$ channel for a class of channel coefficients.
We introduce the system model, formally define the notion of generalized degrees of freedom, and present the main results in the next section.

\section{System Model}
\label{sec:model}
\subsection{Deterministic $X$ Channel}
\begin{figure}
\begin{center}
\epsfig{figure=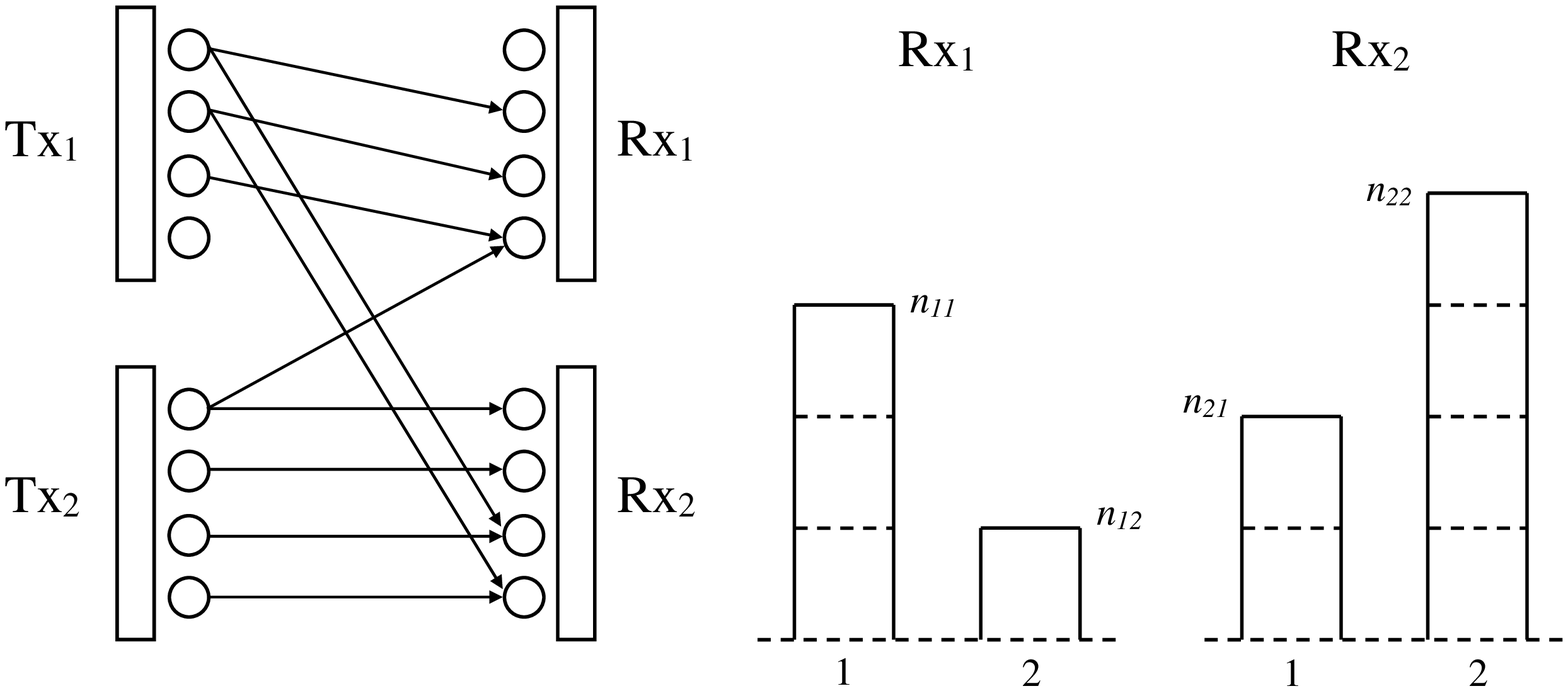, height=3.7in, width=6.1in}
\caption{On the left is an example of the deterministic interference channel. On the right is the figure that shows only the signal levels observed at each receiver.}\label{fig:DtrChnnl}
\end{center}
\end{figure}
The deterministic $X$ channel is physically the same channel as the deterministic interference channel introduced in \cite{Bresler_Tse}, except that the $X$ channel has $4$ independent messages $\{W_{11},W_{12},W_{21},W_{22}\}$ where $W_{ij}$ is the message that originates at transmitter $j$ and is intended for receiver $i$. Note that the interference channel has only $2$ independent messages, e.g.,  $\{W_{11}, W_{22}\}$ or $\{W_{12},W_{21}\}$. The deterministic channel is shown is Fig. \ref{fig:DtrChnnl} and described by the input output equations
\begin{eqnarray}
\mathbf{Y}_{1}(t) = \mathbf{S}^{q-n_{11}} \mathbf{X}_{1}(t) + \mathbf{S}^{q-n_{12}} \mathbf{X}_{2}(t) \\
\mathbf{Y}_{2}(t) = \mathbf{S}^{q-n_{21}} \mathbf{X}_{1}(t) + \mathbf{S}^{q-n_{22}} \mathbf{X}_{2}(t)
\end{eqnarray}
where $q=\max(n_{11}, n_{21}, n_{12}, n_{22})$, $\mathbf{X}_{i}(t), \mathbf{Y}_{i}(t) \in \mathcal{F}_2^q$ for $i=1,2$, and $\mathbf{S}$ is a $q \times q$ shift matrix,
\begin{eqnarray}
\mathbf{S} = \left(
               \begin{array}{ccccc}
                 0      & 0 & 0 & \cdots & 0 \\
                 1      & 0 & 0 & \cdots & 0 \\
                 0      & 1 & 0 & \cdots & 0 \\
                 \vdots &   &   & \ddots & \vdots \\
                 0      & 0 & 0 & \cdots & 1 \\
               \end{array}
             \right)
\end{eqnarray}
The message set and standard definitions and notations of the achievable rates are similar to those in the Gaussian setting. To avoid confusion, sometimes we add the subscript \textit{det} to distinguish the notations for the deterministic channel from those for the Gaussian channel.

\subsection{The Gaussian $X$ Channel}
\label{sec:model_gaussian}
The two-user Gaussian $X$ channel is described by the input-output equations
\begin{eqnarray}
\label{eqn:inoutput1}
Y_{1}(t) &=& H_{11}X_{1}(t) + H_{12} X_{2}(t) + Z_{1}(t) \\
\label{eqn:inoutput2}
Y_{2}(t) &=& H_{21}X_{1}(t) + H_{22} X_{2}(t) + Z_{2}(t)
\end{eqnarray}
where at symbol index $t$, $Y_{j}(t)$ and $Z_{j}(t)$ are the channel output symbol and additive white Gaussian noise (AWGN) respectively at receiver $j$. $X_{i}(t)$ is the channel input symbol at transmitter $i$, and $H_{ji}$ is the channel gain coefficient between transmitter $i$ and receiver $j$ for all $i,j \in \{1,2\}$. All symbols are real and the channel coefficients do not vary w.r.t symbol index. In the remainder of this paper, we suppress time index $t$ if no confusion would be caused. The AWGN is normalized to have zero mean and unit variance and the input power constraint is given by
\begin{eqnarray}
\mathbf{E} \left[ X_{i}^2\right] \leq P_i, & i = 1,2.
\end{eqnarray}

There are four independent messages in the $X$ channel: $W_{11},W_{12},W_{21},W_{22}$ where $W_{ij}$ represents the message from transmitter $j$ to receiver $i$. We indicate the size of the message by $|W_{ij}|$. For codewords spanning $T$ symbols, rates $R_{ij} = \frac{\log|W_{ij}|}{T}$ are achievable if the probability of error for all messages can be simultaneously made arbitrarily small by choosing an appropriate large $T$. The capacity region $\mathcal{C}$ of the $X$ channel is the set of all achievable rate tuples $\mathbf{R} = (R_{11},R_{12},R_{21},R_{22})$. We indicate the sum capacity of the $X$ channel by $C_{\Sigma}$.

\subsubsection{Generalized Degrees of Freedom (GDOF)}
To motivate our problem formulation, we briefly revisit the framework for the generalized degrees of freedom characterization of the symmetric \emph{interference} channel. The interference channel is defined as:
\begin{eqnarray}
\label{eqn:syminout1int}
Y_{1}(t) &=& \sqrt{\xSNR} X_{1}(t) + \sqrt{\xINR} X_{2}(t) + Z_{1}(t) \\
\label{eqn:syminout2int}
Y_{2}(t) &=& \sqrt{\xINR}X_{1}(t) + \sqrt{\xSNR} X_{2}(t) + Z_{2}(t)
\end{eqnarray}
and with the parameter $\alpha$ defined as follows
\begin{eqnarray}
\alpha \triangleq \frac{\log(\xINR)}{\log(\xSNR)}
\end{eqnarray}
the GDOF metric is defined as \cite{Etkin_Tse_Wang},
\begin{eqnarray}
\label{eqn:gdofDef}
d(\alpha) = \limsup_{\xSNR \rightarrow \infty} \frac{C_\Sigma (\xSNR, \alpha)}{\frac{1}{2}\log(\xSNR)}
\end{eqnarray}
where $C_\Sigma (\xSNR, \alpha)$ is the sum capacity of the interference channel.

Since our main goal is to compare GDOF of the $X$ channel with the interference channel, we use the same symmetric interference channel model described above as the physical channel model for the $X$ channel. There is however, one notational difference. Since the terminology $\xSNR, \xINR$ is not as appropriate for the $X$ channel, we instead use the parameter $\rho$ to substitute for these notions, resulting in the following system model for the $X$ channel GDOF characterization:
\begin{eqnarray}
\label{eqn:syminout1}
Y_{1}(t) &=& \sqrt{\rho} X_{1}(t) + \sqrt{\rho^{\alpha }} X_{2}(t) + Z_{1}(t) \\
\label{eqn:syminout2}
Y_{2}(t) &=& \sqrt{\rho^{\alpha}}X_{1}(t) + \sqrt{\rho} X_{2}(t) + Z_{2}(t)
\end{eqnarray}
In other words, we have set $H_{11}=H_{22} = \sqrt{\rho}$, $H_{12} = H_{21} = \sqrt{\rho^{\alpha}}$, and $P_1 = P_2 = 1$. Note that (\ref{eqn:syminout1}), (\ref{eqn:syminout2}) represent the same physical channel as (\ref{eqn:syminout1int}), (\ref{eqn:syminout2int}). However, as mentioned earlier, unlike the interference channel the $X$ channel has $4$ independent messages - one from each transmitter to each receiver. The GDOF characterization for the $X$ channel is defined as:
\begin{eqnarray}
\label{eqn:gdofDefX}
d(\alpha) = \limsup_{\rho \rightarrow \infty} \frac{C_\Sigma (\rho, \alpha)}{\frac{1}{2}\log(\rho)}
\end{eqnarray}
where $C_\Sigma (\rho, \alpha)$ is the sum capacity of the $X$ channel.

Note that we use $\limsup$ to ensure that $d(\alpha)$ always exits. The half in the denominator is because all signals and channel gains are real.

\section{Main Results}
\subsection{Sum Capacity of the Symmetric Deterministic $X$ Channel}
The first main result of the paper is the characterization of the sum capacity of the symmetric deterministic $X$ channel in the symmetric setting where $n_{11}=n_{22}=n_d$ and $n_{12}=n_{21}=n_c$. This result is given in the following theorem.
\begin{theorem}
\label{thm:dtrmnstc_SymSumRate}
The sum capacity $C_\Sigma(n_c,n_d)$ of the symmetric deterministic $X$ channel, i.e., the deterministic $X$ channel where $n_{11}=n_{22}=n_c$ and $n_{12}=n_{21}=n_d$, is
\begin{eqnarray}
\label{eqn:dtrmnstc_SymSumRate}
C_\Sigma(n_c, n_d) &=& \left\{ \begin{array}{ll}
   2n_d - 2n_c,  & 0 \leq \frac{n_c}{n_d} < \frac{1}{2} \\
   2n_c,         & \frac{1}{2} \leq \frac{n_c}{n_d} < \frac{3}{4} \\
   2(n_d - \frac{1}{3}n_c), & \frac{3}{4} \leq \frac{n_c}{n_d} < 1 \\
   n_d,          & n_c = n_d \\
   2(n_c - \frac{1}{3}n_d), & 1 < \frac{n_c}{n_d} \leq \frac{4}{3} \\
   2n_d,         & \frac{4}{3} < \frac{n_c}{n_d} \leq 2 \\
   2n_c - 2n_d,  & \frac{n_c}{n_d} > 2
\end{array}
\right.
\end{eqnarray}
\end{theorem}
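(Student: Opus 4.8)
The plan is to establish the achievable (inner) bound and the converse (outer) bound separately, and to halve the work by first exploiting a symmetry. Relabeling the two receivers interchanges the direct and cross link strengths, and in the $X$ channel it merely permutes the four messages among themselves without changing their total rate; hence $C_\Sigma(n_c,n_d)=C_\Sigma(n_d,n_c)$. Since the right-hand side of (\ref{eqn:dtrmnstc_SymSumRate}) is itself invariant under $n_c\leftrightarrow n_d$, it suffices to treat $n_c\le n_d$, i.e. the three regimes $\frac{n_c}{n_d}\in[0,\frac{1}{2})$, $[\frac{1}{2},\frac{3}{4})$, $[\frac{3}{4},1)$, plus the degenerate point $n_c=n_d$. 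The last is immediate: when $n_c=n_d$ both outputs equal $\mathbf{S}^{q-n_d}(\mathbf{X}_1+\mathbf{X}_2)$, so $\mathbf{Y}_1=\mathbf{Y}_2$ and any decoder has access to at most $n_d$ output levels, giving $C_\Sigma\le n_d$, which is trivially achievable.

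For achievability in the remaining regimes I would give an explicit bit-level allocation. Viewing each transmit vector in $\mathcal{F}_2^q$ as a stack of $q$ levels, I would partition the levels of $\mathbf{X}_1$ (which carries $W_{11}$ and $W_{21}$) and of $\mathbf{X}_2$ (which carries $W_{12}$ and $W_{22}$) so that, after the appropriate shift $\mathbf{S}^{q-n_d}$ or $\mathbf{S}^{q-n_c}$, (i) each desired message lands on received levels free of interference at its intended receiver, and (ii) the two interfering messages seen at each receiver are \emph{aligned}, i.e. forced onto a common set of received levels so as to consume as few levels as possible. In the weak regime $\frac{n_c}{n_d}<\frac{1}{2}$ no alignment is needed and the ordinary interference-channel scheme (cross messages switched off) already achieves $2(n_d-n_c)$; as $\frac{n_c}{n_d}$ increases the cross messages are activated, and near $\frac{n_c}{n_d}=1$ the alignment in (ii) is precisely what yields the $\frac{4}{3}$-type gain reflected in $2(n_d-\frac{1}{3}n_c)$. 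For each regime I would then check decodability level by level and sum $R_{11}+R_{12}+R_{21}+R_{22}$ to match (\ref{eqn:dtrmnstc_SymSumRate}).

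For the converse I would use Fano's inequality together with the defining feature of the deterministic model, namely that every output is a deterministic function of the inputs, so conditional output entropies vanish once enough inputs are revealed. For each regime I would supply the receivers with a regime-tailored genie (an interfering message, or a carefully chosen sub-vector of an interfering transmit signal) and chain the resulting entropy inequalities across the two receivers to obtain a matching bound: $2(n_d-n_c)$ in the weak regime, $2n_c$ in the intermediate regime, and $2(n_d-\frac{1}{3}n_c)$ in the near-symmetric regime. These bounds are not simultaneously valid, since the capacity curve is non-concave, so each must be proved within its own range. The main obstacle will be the near-symmetric bound $2(n_d-\frac{1}{3}n_c)$: the genie must be engineered so that the interference-alignment constraint is reflected in the entropy count and the coefficient $\frac{1}{3}$ emerges, typically by summing three suitably conditioned copies of the sum-rate inequality and dividing by three, while ensuring that the argument sharpens to the separate bound $n_d$ exactly at $n_c=n_d$, where $\mathbf{Y}_1$ and $\mathbf{Y}_2$ coincide and the alignment gain vanishes. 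Combining both directions for $n_c\le n_d$ and invoking $C_\Sigma(n_c,n_d)=C_\Sigma(n_d,n_c)$ then yields (\ref{eqn:dtrmnstc_SymSumRate}) in full.
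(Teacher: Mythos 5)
Your overall architecture (the symmetry reduction $C_\Sigma(n_c,n_d)=C_\Sigma(n_d,n_c)$, the MAC bound $C_\Sigma\le n_d$ at $n_c=n_d$, genie-aided converses, interference-channel operation in the weak regime, alignment elsewhere) matches the paper's. But there is a genuine gap in your achievability for the regime $\frac{3}{4}\le\frac{n_c}{n_d}<1$, which is exactly the hard part of this theorem. Your plan is to partition the levels of each transmit vector and ``check decodability level by level.'' This cannot work as stated. First, the target sum rate $2(n_d-\frac{1}{3}n_c)$ requires each cross message to carry $n_c/3$ bits per channel use; when $3\nmid n_c$ this is not an integer, so no single-symbol level assignment can achieve it, and the paper must code over a $3$-symbol extension of the channel (e.g., for $(n_c,n_d)=(4,5)$ the target $22/3$ exceeds any integer sum rate achievable per symbol below the bound). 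Second, even when $3\mid n_c$, assigning disjoint sets of levels to messages is in general insufficient: the paper's scheme uses genuine $\mathcal{F}_2$-linear precoding, choosing a matrix $\xV$ whose columns are linear combinations of levels such that $\left[\xV~~\xS^{n_d-n_c}\xV~~\xS^{2n_d-2n_c}\xV~~\xV_{null}\right]$ has full rank $n_d$; the existence of such a $\xV$ is nontrivial and is the content of Lemma \ref{lem:extdecomp}, proved in an appendix via a cyclic decomposition of $\mathcal{F}_2^{n_d}$ into $\mathbf{S}^{n_d-n_c}$-invariant subspaces and a grouping (box-packing) algorithm, with a separate extended-channel version for the $3\nmid n_c$ case. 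Your proposal contains no substitute for this construction, so the regime that produces the $\frac{1}{3}$ coefficient is not actually proved.

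On the converse side your plan is workable but rests on a false premise. You claim the three bounds ``are not simultaneously valid, since the capacity curve is non-concave, so each must be proved within its own range.'' In fact all of the paper's genie bounds hold simultaneously for every $(n_c,n_d)$: four triple-rate genie bounds are summed (each rate appears in exactly three of the four triples) and divided by $3$ to give $R_\Sigma\le\frac{4}{3}\max(n_c,n_d)+\frac{2}{3}(n_d-n_c)^++\frac{2}{3}(n_c-n_d)^+$, while two four-rate genie bounds give $R_\Sigma\le 2\max\left(n_c,(n_d-n_c)^+\right)$ and $R_\Sigma\le 2\max\left(n_d,(n_c-n_d)^+\right)$; only the MAC bound $R_\Sigma\le n_d$ is specific to the point $n_c=n_d$. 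Non-concavity of the capacity is no obstruction to simultaneous validity, because the four-rate bounds are maxima of linear functions, hence convex, and a pointwise minimum of globally valid bounds of this form can perfectly well be non-concave. This slip would not make your converse wrong---a bound proved only within a regime still bounds there---but it indicates you have not pinned down the actual inequalities; note also that the $\frac{1}{3}$ emerges from summing \emph{four} triple-rate bounds, not ``three copies of the sum-rate inequality.''
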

\subsection{Generalized Degrees of Freedom of the Symmetric Gaussian $X$ Channel}
The second main result of this paper builds upon the result of Theorem \ref{thm:dtrmnstc_SymSumRate} to find the generalized degrees of freedom characterization (shown in Figure \ref{fig:gdofxc}) for the Gaussian $X$ channel.
\begin{theorem}
\label{thm:gdof}
The generalized degrees of freedom $d(\alpha)$ of the symmetric Gaussian $X$ channel can be characterized as 
\begin{eqnarray}
\label{eqn:gdof}
d(\alpha)= \left\{ \begin{array}{ll}
2-2\alpha,  & 0 \leq \alpha < \frac{1}{2} \\
2\alpha,    & \frac{1}{2} \leq \alpha < \frac{3}{4} \\
2-\frac{2}{3}\alpha   & \frac{3}{4} \leq \alpha < 1 \\
1           & \alpha = 1 \\
2\alpha - \frac{2}{3} & 1 < \alpha \leq \frac{4}{3} \\
2           & \frac{4}{3} < \alpha \leq 2 \\
2\alpha - 2 & \alpha > 2 \end{array} \right.
\end{eqnarray}
\end{theorem}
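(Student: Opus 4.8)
The plan is to use Theorem \ref{thm:dtrmnstc_SymSumRate} as an exact blueprint and to transfer both its converse and its achievability to the Gaussian channel, losing only $o(\log\rho)$ in each direction. The bridge is the standard identification of one deterministic signal level with one ``unit'' of received signal strength on the $\log$ scale: set $n_d=\lfloor\tfrac12\log\rho\rfloor$ and $n_c=\lfloor\tfrac{\alpha}{2}\log\rho\rfloor$, so that $n_c/n_d\to\alpha$ and $n_d/(\tfrac12\log\rho)\to 1$ as $\rho\to\infty$. Under this identification the deterministic sum capacity, normalized by $n_d$, is $C_\Sigma(n_c,n_d)/n_d$ evaluated at $n_c/n_d=\alpha$, and direct substitution into (\ref{eqn:dtrmnstc_SymSumRate}) reproduces exactly the seven branches of (\ref{eqn:gdof}); e.g. $2n_d-2n_c$ normalizes to $2-2\alpha$, $2(n_d-\tfrac13 n_c)$ to $2-\tfrac23\alpha$, and $2(n_c-\tfrac13 n_d)$ to $2\alpha-\tfrac23$. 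Thus the theorem reduces to two matching estimates: a Gaussian converse $C_\Sigma(\rho,\alpha)\le C_\Sigma(n_c,n_d)+o(\log\rho)$ and a Gaussian achievable scheme attaining $C_\Sigma(n_c,n_d)-o(\log\rho)$, after which I take $\limsup_{\rho\to\infty}$ of the normalized rate and read off each regime. Because the definition (\ref{eqn:gdofDefX}) uses $\limsup$, existence of $d(\alpha)$ is automatic, and the isolated value $d(1)=1$ (versus the one-sided limits $\tfrac43$) will emerge naturally from the rank-one degeneration of the channel matrix at $\alpha=1$, exactly as the separate line $n_c=n_d$ appears in Theorem \ref{thm:dtrmnstc_SymSumRate}.

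For the converse I would mirror, inequality by inequality, the genie-aided outer bound arguments that establish the deterministic result, replacing each discrete-entropy bound on bit levels by its differential-entropy counterpart. Each cut-set / genie step that bounded a sum of rates by a count of distinguishable levels becomes a bound of the form $\tfrac12\log(\text{power})+c$, where the additive constant $c$ collects the noise entropy and the penalty for supplying continuous rather than quantized side information. Since every such $c$ is $O(1)$, the aggregate gap between the Gaussian outer bound and $C_\Sigma(n_c,n_d)$ is $O(1)=o(\log\rho)$, which vanishes after normalization by $\tfrac12\log\rho$. The only care needed here is to match the particular combination of genie signals used in each of the seven regimes, so that the Gaussian bound collapses to the correct branch.

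The achievability is where I expect the real difficulty, and it is the step that genuinely distinguishes the $X$ channel from the interference channel. The deterministic achievable scheme partitions the $q$ received levels and performs \emph{interference alignment}: streams carrying different messages are placed so that they collide on the same levels at the receivers where they are unwanted while remaining separable where they are wanted. Translating this combinatorial alignment over $\mathcal{F}_2^q$ into the real Gaussian channel over a \emph{constant} (non time-varying) channel cannot be done with random Gaussian codebooks; it requires structured (lattice/superposition) codes together with a Han--Kobayashi-style power-level assignment, so that the level collisions of the deterministic model become alignments of power fractions in $\rho$. The residual misalignment, the lattice shaping loss, and the self-noise from decoding aligned interference must all be controlled to be $o(\log\rho)$. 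I anticipate the hardest regimes to be $\tfrac34\le\alpha<1$ and $1<\alpha\le\tfrac43$, where the GDOF slope is $\tfrac23$ rather than the interference-channel slope $2$: these are precisely the regimes whose gain comes from alignment, so the Gaussian scheme there is the most delicate. The endpoints $\alpha\in\{\tfrac12,\tfrac34,\tfrac43,2\}$ and the degenerate point $\alpha=1$ (where no alignment gain is available and a simple time-sharing/single-message scheme achieves $d=1$) would then be verified separately, and continuity at the interior breakpoints checked against the normalized deterministic formula. Assembling the seven regime-by-regime achievable and converse bounds completes the characterization (\ref{eqn:gdof}).
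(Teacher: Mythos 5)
Your overall architecture (deterministic blueprint, converse and achievability each transferred with $o(\log\rho)$ loss) and your converse plan do match the paper: the paper's Gaussian outer bounds (Lemma \ref{lem:knownob} and Theorem \ref{thm:etw_X}) are exactly the differential-entropy, genie-aided analogues of the deterministic bounds, and after normalization they give $d(\alpha)\le 2\min\left(\max(\alpha,1-\alpha),1-\alpha/3\right)$ for $\alpha\le 1$, with $\alpha>1$ handled by the symmetry relation $d(\alpha)=\alpha\, d(1/\alpha)$ (Lemma \ref{lemma:sym}) rather than by re-deriving genies regime by regime. The genuine gap is in your achievability. You assert that the deterministic alignment scheme ``cannot be done with random Gaussian codebooks'' and must instead be replaced by a lattice/superposition construction with a Han--Kobayashi power assignment, whose shaping losses and residual misalignment you would then need to control --- but you never construct this scheme, and you yourself flag the alignment regimes $\tfrac34\le\alpha<1$ and $1<\alpha\le\tfrac43$ as unresolved. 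Since those regimes are precisely where the theorem differs from the interference channel, the hardest part of the statement is left unproven in your proposal.

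The paper closes this gap by a much more direct transfer, which your plan misses: it does not build a new Gaussian coding scheme at all, but forces the Gaussian channel to \emph{emulate} the deterministic one. Each transmitter sends $X_i=\frac{1}{\sqrt{\rho}}\sum_{k=0}^{Nn_d-1}x_{i,k}Q^k$ with base-$Q$ digits restricted to $\{1,\ldots,\lfloor\frac{Q-1}{4}-1\rfloor\}$ so that superposition of the two signals produces no carry-overs, and the SNR is tied to the digit count by $\rho=Q^{2Nn_d/(1-\epsilon)}$; the cross gain $\rho^{(\alpha-1)/2}$ then becomes a pure digit shift by $N(n_d-n_c)$ places, the noise corrupts only a bounded number of low-order digits (contributing $o(N)$ by the multi-level coding argument), and the finite-field alignment scheme of Lemma \ref{lem:extdecomp} is applied verbatim to the digits. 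This is why the paper needs base $Q$ large rather than bits: with your identification of one level per bit, carries would destroy the emulation, which is precisely the issue your sketch glosses over. Your lattice/HK route may well be made to work, but as written it is a statement of intent with acknowledged open difficulties, not a proof.
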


\begin{figure}
\begin{center}
\includegraphics[angle=90, height=5.3in, width=7.8 in, trim=150 0 0 0, clip]{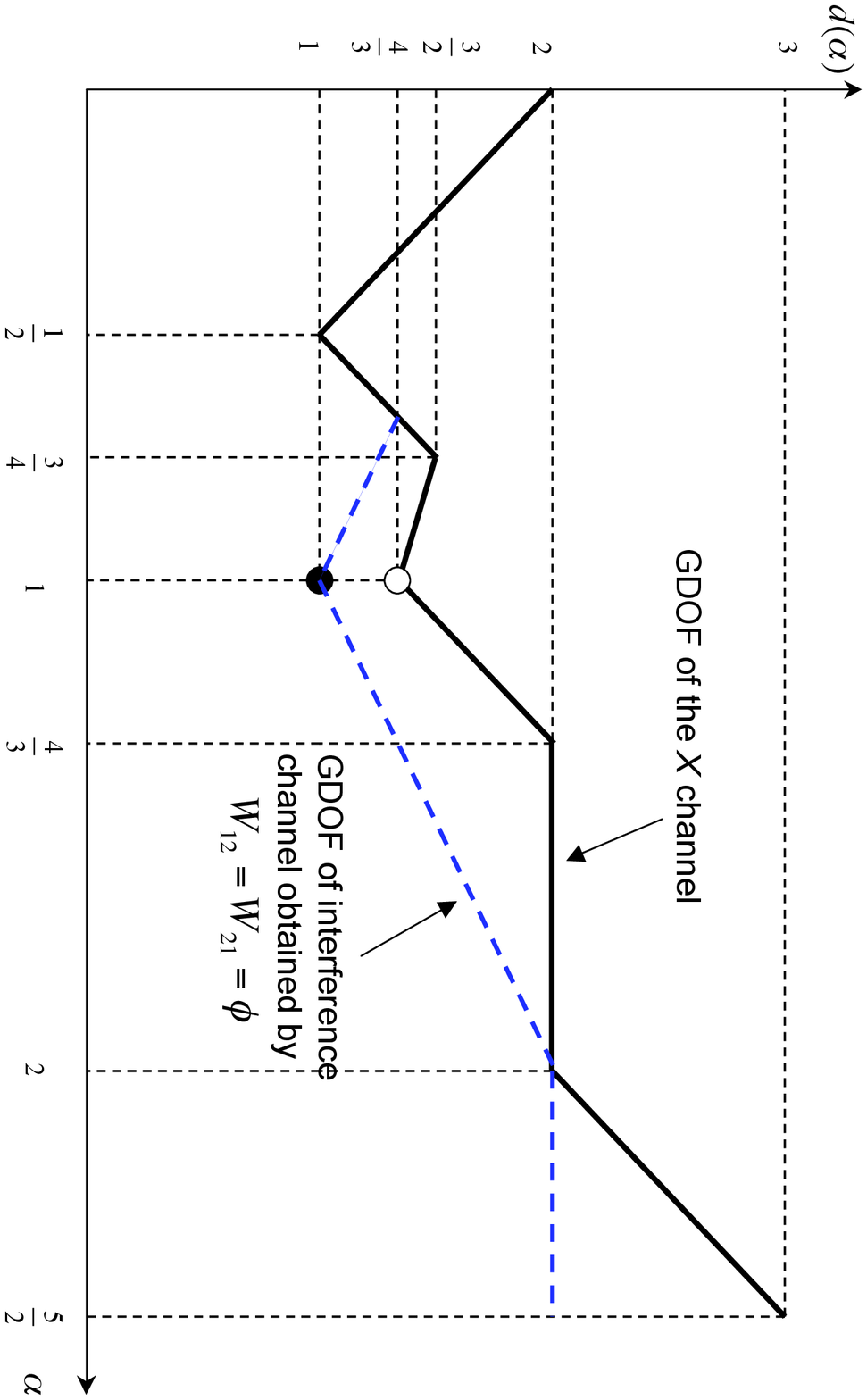}
\caption{Generalized Degrees of Freedom of the symmetric $X$ channel, and a comparison the the $2$ user interference channel}
\label{fig:gdofxc}
\end{center}
\end{figure}

For comparison, Figure \ref{fig:gdofxc} also shows the generalized degrees of freedom characterization of the symmetric interference channel as obtained in \cite{Etkin_Tse_Wang}. For values of $\alpha < 2/3$, characterization of $d(\alpha)$ is identical for both the symmetric two-user Gaussian $X$ channel and the symmetric two-user Gaussian interference channel (See \cite{Etkin_Tse_Wang}  Figure \ref{lem:extdecomp}). We prove this by showing that the Etkin-Tse-Wang (ETW) outerbound derived for the interference channel \cite{Etkin_Tse_Wang} holds for the $X$ channel as well (See Theorem \ref{thm:etw_X}). The ETW outerbound is tight from a GDOF perspective in the interference channel for $\alpha \leq 2/3$. Therefore, our extension of this outerbound implies that for $\alpha \leq 2/3$ a GDOF optimal achievable scheme is to set $W_{12}=W_{12}=\phi$, so that the $X$ channel operates as an interference channel. For example, if $\alpha \leq 1/2$, setting $W_{21}=W_{12}=\phi$ and treating interference as noise is GDOF optimal in the $X$ channel, since it is optimal in the corresponding interference channel \cite{Etkin_Tse_Wang}. Similarly, we show that for $\alpha > \frac{3}{2}$, it is GDOF optimal to set $W_{22}=W_{11}=\phi$ and operate the $X$ channel as an interference channel with messages $W_{12}$ and $W_{21}$. It must be noted that for both $\alpha \leq 2/3$ and $\alpha > 3/2$ the GDOF optimal achievable scheme operates the $X$ channel as \emph{weak} interference channel by setting the appropriate messages to null. For $2/3 < \alpha \leq 3/2$, we propose an interference alignment based achievable scheme for the $X$ channel. Thus, in this regime, the $X$ channel performs better than the interference channel by exploiting the possibility of interference alignment.

\subsection{Capacity of the ``Noisy'' Gaussian $X$ Channel}
References \cite{MK_int,Shang_Kramer_Chen,Sreekanth_Veeravalli} showed that in the interference channel, for a class of channel coefficients, encoding messages using Gaussian codebooks and decoding desired messages by treating interference as noise at each receiver is capacity optimal. Our last main result extends this conclusion to the $X$ channel as well. We show that if a $2$ user interference channel satisfies  the noisy interference conditions obtained in \cite{MK_int,Shang_Kramer_Chen,Sreekanth_Veeravalli} then the corresponding $X$ channel obtained by allowing all transmitters to communicate with all receivers, has the same sum capacity as the original noisy interference channel. This is a surprising result since it implies that for a class of $X$ channels, interference alignment has no capacity benefit. The result holds for the general (asymmetric) $X$ channel and is stated as such in Theorem \ref{thm:AsymNoisyX} in Section \ref{sec:noisyX}. For simplicity we re-state the result here for the symmetric case ($H_{11}=H_{22}=1, H_{12}=H_{21}=h,P_1=P_2=P)$ in a notation consistent with \cite{Sreekanth_Veeravalli}, as follows.

{\it Noisy ``Symmetric'' X Channel Result: If $\left| h \left( 1 + h^2 P \right) \right|  \leq \frac{1}{2},$ then the sum capacity of the Gaussian $X$ channel is given by
$C_{\Sigma} = \log \left( 1 + \frac{ P}{1 + h^2 P} \right)$.
Similarly, if $|h|\geq 2(1 + P)$
then the sum capacity of the Gaussian $X$ channel is given by $C_{\Sigma} =\log \left( 1 + \frac{h^2 P}{1 +  P} \right)$.}

The condition $\left| h \left( 1 + h^2 P \right) \right|  \leq \frac{1}{2}$ is the same as the noisy interference condition in \cite{Sreekanth_Veeravalli}. It means that when the cross-links are too weak, there is no sum-capacity benefit in communicating messages over those links ($X$ channel operation), even though it rules out interference alignment, and we are better off just communicating on the direct links while treating the weak interference as noise. Thus, in this case messages $W_{12},W_{21}$ do not increase sum capacity of the $X$ channel.

The other condition $|h|\geq 2(1 + P)$ refers to a strong cross-channel scenario. It says that when the cross-links are too strong relative to direct links, then sum capacity is achieved by communicating only over the strong cross-links and treating the weak interference received over the direct links as noise. In this case, messages $W_{11},W_{22}$ do not increase the sum capacity of the $X$ channel.

\noindent{\it Notation:} In the rest of this paper, we use the notation 
$$ A^{(T)} \define \left( A(1),A(2), \ldots A(T)\right)$$ 
for any sequence $A$.

\section{Sum Capacity of the Symmetric Deterministic $X$ Channel}
\label{sec:dtrmnstcxc}

The deterministic $X$ channel model is described in the symmetric setting by:
\begin{eqnarray}
\mathbf{Y}_{1}(t) = \mathbf{S}^{q-n_d} \mathbf{X}_{1}(t) + \mathbf{S}^{q-n_c} \mathbf{X}_{2}(t) \\
\mathbf{Y}_{2}(t) = \mathbf{S}^{q-n_c} \mathbf{X}_{1}(t) + \mathbf{S}^{q-n_d} \mathbf{X}_{2}(t)
\end{eqnarray}
where $q = \max(n_c, n_d)$. 

To prove Theorem \ref{thm:dtrmnstc_SymSumRate}, we use the following lemma.
\begin{lemma}
\label{lemma:dtrmnstc_SumRate}
\begin{eqnarray}
\label{eqn:dtrmnstc_SymSwtch}
C_{\Sigma}(n_c, n_d) =
C_{\Sigma}(n_d, n_c) 
\end{eqnarray}
\end{lemma}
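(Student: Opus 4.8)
The plan is to establish the symmetry by a relabeling argument that exploits the defining feature of the $X$ channel, namely that an independent message is present for every transmitter--receiver pair. I first note that since $q = \max(n_c, n_d)$ is invariant under interchanging $n_c$ and $n_d$, both the channel with parameters $(n_c, n_d)$ and the channel with parameters $(n_d, n_c)$ are defined over the same space $\mathcal{F}_2^q$ and use the same shift matrix $\mathbf{S}$, so that a relabeling of nodes maps one to the other without altering dimensions.

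Next I would exhibit an explicit equivalence between the two channels obtained by permuting the transmitter labels. Starting from the channel with parameters $(n_c, n_d)$,
\begin{align}
\mathbf{Y}_1 &= \mathbf{S}^{q-n_d}\mathbf{X}_1 + \mathbf{S}^{q-n_c}\mathbf{X}_2,\notag\\
\mathbf{Y}_2 &= \mathbf{S}^{q-n_c}\mathbf{X}_1 + \mathbf{S}^{q-n_d}\mathbf{X}_2,\notag
\end{align}
and substituting $\mathbf{X}_1' = \mathbf{X}_2$, $\mathbf{X}_2' = \mathbf{X}_1$, the input--output relations become
\begin{align}
\mathbf{Y}_1 &= \mathbf{S}^{q-n_c}\mathbf{X}_1' + \mathbf{S}^{q-n_d}\mathbf{X}_2',\notag\\
\mathbf{Y}_2 &= \mathbf{S}^{q-n_d}\mathbf{X}_1' + \mathbf{S}^{q-n_c}\mathbf{X}_2',\notag
\end{align}
which is exactly the channel with parameters $(n_d, n_c)$. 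Thus the two channels differ only by a renaming of the two transmitters.

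I would then translate this channel equivalence into an equality of sum capacities at the level of codes. Because the $X$ channel carries all four messages $\{W_{11}, W_{12}, W_{21}, W_{22}\}$, swapping the transmitter labels induces a bijection on the message set: the message $W_{ij}$ (from transmitter $j$ to receiver $i$) of one channel plays the role of $W_{i\bar j}$ (with $\bar 1 = 2$, $\bar 2 = 1$) of the other. Hence any reliable code for the channel $(n_c, n_d)$ achieving a rate tuple $(R_{11}, R_{12}, R_{21}, R_{22})$ yields, after this relabeling, a reliable code for the channel $(n_d, n_c)$ achieving the permuted tuple $(R_{12}, R_{11}, R_{22}, R_{21})$, and conversely. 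Since a permutation of the coordinates leaves the sum $R_{11}+R_{12}+R_{21}+R_{22}$ unchanged, the two channels have identical sets of achievable sum rates, and therefore $C_\Sigma(n_c, n_d) = C_\Sigma(n_d, n_c)$.

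The argument is essentially a symmetry observation, so there is no hard analytic step; the one point that requires care---and the reason the identity is special to the $X$ channel---is the bijection on messages. For the ordinary interference channel, where only the two direct messages $W_{11}, W_{22}$ exist, relabeling the transmitters would turn direct links into cross links and leave no message on the new direct links, destroying the message structure, so no such symmetry is available. It is precisely the presence of a message for every transmitter--receiver pair that makes the relabeling a genuine equivalence of $X$ channels and hence guarantees the stated invariance.
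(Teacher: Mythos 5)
Your proof is correct and is exactly the argument the paper has in mind: the paper simply states that the lemma ``follows trivially from the symmetry in the $X$ channel,'' and your transmitter-relabeling map, together with the induced permutation of the four messages, is the detailed form of that symmetry. Your closing observation about why the relabeling needs all four messages (and thus fails for the interference channel) is a nice touch but not part of the paper's (one-line) proof.
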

The lemma follows trivially from the symmetry in the $X$ channel.
We now proceed to derive the converse argument for Theorem \ref{thm:dtrmnstc_SymSumRate}.


\subsection{Upperbounds}
\label{sec:dtrmnstcxc_up}
In this section, we start from the capacity outerbounds for the (asymmetric) deterministic $X$ channel, and then we use the results to derive the capacity outerbounds for the symmetric setting. The following lemma provides a set of outerbounds for the achievable rate tuple $(R_{11}, R_{21}, R_{12}, R_{22})$ of the (asymmetric) deterministic $X$ channel.
\begin{theorem}
\label{lemma:dtrmnstcxc_up}
The achievable rate tuple $(R_{11}, R_{21}, R_{12}, R_{22})$ of the deterministic $X$ channel satisfies the following inequalities.
\begin{eqnarray}
& &
\label{eqn:dtrmnstc_up1}
R_{11} + R_{12} + R_{22} \leq
\max \left( n_{11}, n_{12} \right) + (n_{22} - n_{12})^+ \\
& &
\label{eqn:dtrmnstc_up2}
R_{11} + R_{21} + R_{22} \leq
\max \left( n_{21}, n_{22} \right) + (n_{11} - n_{21})^+ \\
& &
\label{eqn:dtrmnstc_up3}
R_{11} + R_{21} + R_{12} \leq
\max \left( n_{11}, n_{12} \right) + (n_{21} - n_{11})^+ \\
& &
\label{eqn:dtrmnstc_up4}
R_{21} + R_{12} + R_{22} \leq
\max \left( n_{21}, n_{22} \right) + (n_{12} - n_{22})^+ \\
& &
\label{eqn:dtrmnstc_up5}
R_{11} + R_{21} + R_{12} + R_{22} \leq
\max \left( n_{12}, (n_{11} - n_{21})^+ \right) +
\max \left( n_{21}, (n_{22} - n_{12})^+ \right) \\
& &
\label{eqn:dtrmnstc_up6}
R_{11} + R_{21} + R_{12} + R_{22} \leq
\max \left( n_{11}, (n_{12} - n_{22})^+ \right) +
\max \left( n_{22}, (n_{21} - n_{11})^+ \right)
\end{eqnarray}
\smallskip
\end{theorem}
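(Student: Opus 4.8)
The plan is to derive each of the six outer bounds by a genie-aided argument combined with Fano's inequality, exploiting the deterministic structure of the channel. In the deterministic model, each received vector $\mathbf{Y}_i$ is a sum of shifted versions of the transmitted vectors, and the key observation is that a receiver's output, together with the knowledge of one interfering transmitter's message, often determines another transmitter's signal at the level of accuracy needed to decode a third message. I would treat the four bounds (\ref{eqn:dtrmnstc_up1})--(\ref{eqn:dtrmnstc_up4}) as one family (three-message bounds obtained by giving a genie a subset of messages to one receiver) and the two bounds (\ref{eqn:dtrmnstc_up5})--(\ref{eqn:dtrmnstc_up6}) as a second family (sum-rate bounds that symmetrically combine both receivers).

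For the three-message bounds, consider (\ref{eqn:dtrmnstc_up1}), which involves $R_{11}+R_{12}+R_{22}$. The idea is to reveal a genie signal to receiver $1$ so that it can decode $W_{11}$ and $W_{12}$ (both destined for it), and simultaneously enable decoding of $W_{22}$ (destined for receiver $2$). First I would write $T(R_{11}+R_{12}+R_{22}) \le I(W_{11},W_{12},W_{22};\text{observations})+T\epsilon_T$ via Fano. The natural observation set is $\mathbf{Y}_1^{(T)}$ augmented by a genie giving the portion of $\mathbf{X}_2^{(T)}$ that carries $W_{22}$ but is drowned out at receiver $1$. Because $\mathbf{Y}_1 = \mathbf{S}^{q-n_{11}}\mathbf{X}_1 + \mathbf{S}^{q-n_{12}}\mathbf{X}_2$, the top $\max(n_{11},n_{12})$ levels of $\mathbf{Y}_1$ carry the superposition of $\mathbf{X}_1$ and $\mathbf{X}_2$, and the genie need only supply the extra $(n_{22}-n_{12})^+$ levels of $\mathbf{X}_2$ that are visible at receiver $2$ but not at receiver $1$. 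Bounding the resulting entropy by the number of available signal levels gives exactly $\max(n_{11},n_{12}) + (n_{22}-n_{12})^+$. The remaining bounds (\ref{eqn:dtrmnstc_up2})--(\ref{eqn:dtrmnstc_up4}) follow by the obvious symmetries: swapping the roles of the two receivers, and swapping direct and cross links, maps one bound to another, so I would prove (\ref{eqn:dtrmnstc_up1}) carefully and invoke symmetry for the other three, much as Lemma \ref{lemma:dtrmnstc_SumRate} is used.

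For the four-message sum bounds (\ref{eqn:dtrmnstc_up5}) and (\ref{eqn:dtrmnstc_up6}), I would give genie side information to \emph{both} receivers simultaneously and add the two Fano bounds. The structure of the right-hand side, a sum of two $\max$ terms each of the form $\max(n_{cross},(n_{direct}-n_{cross})^+)$, signals that after providing genie information, the entropy of each receiver's augmented observation splits into a part seen through the cross link and a residual part of the direct link that the cross link does not cover. Concretely, I expect to bound $H(\mathbf{Y}_1^{(T)}\mid\text{genie})$ by $\max(n_{12},(n_{11}-n_{21})^+)$ per symbol and the analogous quantity for receiver $2$ by $\max(n_{21},(n_{22}-n_{12})^+)$, then add. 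The cross-conditioning is what makes these tighter than naively summing the individual-receiver capacities.

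\textbf{The main obstacle} is choosing the genie signals correctly so that (i) each receiver can reliably decode all the messages assigned to it in that bound, which forces the genie to reveal \emph{enough}, while (ii) the conditional entropy of the revealed signal does not inflate the bound beyond the stated level count, which forces the genie to reveal no \emph{more} than the levels that are ``invisible'' to that receiver. Getting the level-counting exactly right — tracking which of the $q$ bit-levels of each transmitted vector survive the shift $\mathbf{S}^{q-n}$, and verifying that the genie-aided observation determines the desired messages bit-for-bit — is the delicate step, and the appearance of the $(\,\cdot\,)^+$ and $\max$ operators in the statement is precisely the bookkeeping of these surviving levels across the direct and cross links.
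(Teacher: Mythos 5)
Your plan for the three-message bounds (\ref{eqn:dtrmnstc_up1})--(\ref{eqn:dtrmnstc_up4}) — a single application of Fano's inequality at receiver $1$ with observation set $\left({\mathbf{Y}_{1}}^{(T)}, G^{(T)}\right)$, where $G$ consists only of the bottom $(n_{22}-n_{12})^+$ levels of $\mathbf{X}_{2}$ — has a genuine gap, and it is exactly the feature that distinguishes the $X$ channel from the interference channel. Fano requires that $W_{22}$ be reliably decodable from $\left({\mathbf{Y}_{1}}^{(T)}, G^{(T)}\right)$, and in general it is not: since ${\mathbf{Y}_{1}} = \mathbf{S}^{q-n_{11}}\mathbf{X}_{1} + \mathbf{S}^{q-n_{12}}\mathbf{X}_{2}$, receiver $1$ would need to strip off $\mathbf{X}_{1}$ to recover $\mathbf{S}^{q-n_{12}}\mathbf{X}_{2}$, but even after decoding $W_{11},W_{12}$ it cannot reconstruct $\mathbf{X}_{1}$, because $\mathbf{X}_{1}$ is a function of $(W_{11},W_{21})$ and $W_{21}$ is unknown to it. A concrete counterexample: take $n_{11}=n_{21}=n_{22}=2$, $n_{12}=1$, let transmitter $1$ send a fresh bit of $W_{21}$ on its bottom level and $0$ on top, and transmitter $2$ send a fresh bit of $W_{22}$ on its top level and $0$ on bottom, with $W_{11}=W_{12}=\phi$. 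Receiver $2$ sees $(W_{22},W_{21})$ and decodes perfectly, so this is a reliable code; but receiver $1$ sees $(0, W_{21}+W_{22})$ (addition over $\mathcal{F}_2$) and your genie signal is identically zero, so $H\left(W_{22}\mid {\mathbf{Y}_{1}}^{(T)},G^{(T)}\right) = TR_{22}$ and the proposed Fano step is false. The repair is to let the genie \emph{also} reveal $W_{21}$ (or ${\mathbf{X}_{1}}^{(T)}$); this costs nothing in the level count since $W_{21}$ is independent of $(W_{11},W_{12},W_{22})$, and with it your one-shot argument does yield $\max(n_{11},n_{12})+(n_{22}-n_{12})^+$. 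Note that the paper instead avoids ever asking one receiver to decode three messages: its genie hands $\left(\mathbf{S}^{q-n_{12}}{\mathbf{X}_{2}}^{(T)}, W_{12}, {\mathbf{X}_{1}}^{(T)}\right)$ to receiver $2$, bounds $R_{22}$ there and $R_{11}+R_{12}$ at receiver $1$ with no genie, and adds the two inequalities so that the term $H\left(\mathbf{S}^{q-n_{12}}{\mathbf{X}_{2}}^{(T)}\mid W_{12}\right)$ appears once with each sign and cancels.

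That same cancellation is also what is missing from your sketch of the sum-rate bounds (\ref{eqn:dtrmnstc_up5})--(\ref{eqn:dtrmnstc_up6}), where your outline otherwise matches the paper (genie to both receivers, add two Fano bounds). Giving receiver $1$ the genie $\left(\mathbf{S}^{q-n_{21}}{\mathbf{X}_{1}}^{(T)}, W_{21}\right)$ and applying Fano does \emph{not} give $T(R_{11}+R_{12}-\epsilon)\leq H\left({\mathbf{Y}_{1}}^{(T)}\mid \mathbf{S}^{q-n_{21}}{\mathbf{X}_{1}}^{(T)}, W_{21}\right)$, as your ``bound each conditional entropy and add'' step presumes; it gives that term \emph{plus} $H\left(\mathbf{S}^{q-n_{21}}{\mathbf{X}_{1}}^{(T)}\mid W_{21}\right)$ \emph{minus} $H\left(\mathbf{S}^{q-n_{12}}{\mathbf{X}_{2}}^{(T)}\mid W_{12}\right)$. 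Silently dropping the positive term invalidates the derivation; the clean bound emerges only after adding receiver $2$'s symmetric inequality, in which the same two entropies appear with opposite signs and cancel. Your phrase ``cross-conditioning'' gestures at this, but the mechanism must be written out for the proof to stand. With these two repairs — enlarging the genie so the decoding hypothesis of Fano actually holds in (\ref{eqn:dtrmnstc_up1})--(\ref{eqn:dtrmnstc_up4}), and carrying the explicit cancellation in (\ref{eqn:dtrmnstc_up5})--(\ref{eqn:dtrmnstc_up6}) — your argument becomes a correct proof, essentially the paper's for the sum-rate bounds and a valid one-receiver restructuring of the paper's argument for the three-message bounds.
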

\begin{proof}


The bound on $R_{11} + R_{12} + R_{22}$, (\ref{eqn:dtrmnstc_up1}), is proved by a genie upperbound. Consider a genie-aided channel where a genie provides $\mathbf{S}^{q-n_{12}} \mathbf{X}_{2}$, $W_{12}$, and $\mathbf{X}_{1}$ to receiver $2$. For a block length $T$, we can bound $R_{22}$ as follows.
\begin{eqnarray}
T (R_{22}-\epsilon)
& \leq &
I \left( W_{22}; {\mathbf{Y}_{2}}^{(T)}, \mathbf{S}^{q-n_{12}}{\mathbf{X}_{2}}^{(T)}, W_{12}, {\mathbf{X}_{1}}^{(T)} \right) \\
& = &
I \left( W_{22}; {\mathbf{X}_{1}}^{(T)} \right) +
I \left( W_{22}; {\mathbf{Y}_{2}}^{(T)}, \mathbf{S}^{q-n_{12}}{\mathbf{X}_{2}}^{(T)}, W_{12} \mid {\mathbf{X}_{1}}^{(T)} \right) \\
& = &
I \left( W_{22}; \mathbf{S}^{q-n_{22}}{\mathbf{X}_{2}}^{(T)}, \mathbf{S}^{q-n_{12}}{\mathbf{X}_{2}}^{(T)}, W_{12} \right) \label{eq:dtob1}\\
& = &
I \left( W_{22}; \mathbf{S}^{q-n_{22}}{\mathbf{X}_{2}}^{(T)}, \mathbf{S}^{q-n_{12}}{\mathbf{X}_{2}}^{(T)} \mid W_{12} \right) \label{eq:dtob2} \\
& = &
I \left( W_{22}; \mathbf{S}^{q-n_{12}}{\mathbf{X}_{2}}^{(T)} \mid W_{12} \right) +
I \left( W_{22}; \mathbf{S}^{q-n_{22}}{\mathbf{X}_{2}}^{(T)} \mid W_{12}, \mathbf{S}^{q-n_{12}}{\mathbf{X}_{2}}^{(T)} \right) \\
& = &
H \left( \mathbf{S}^{q-n_{12}}{\mathbf{X}_{2}}^{(T)} \mid W_{12} \right) -
H \left( \mathbf{S}^{q-n_{12}}{\mathbf{X}_{2}}^{(T)} \mid W_{12}, W_{22} \right) \notag \\
& &
+ H \left( \mathbf{S}^{q-n_{22}}{\mathbf{X}_{2}}^{(T)} \mid W_{12}, \mathbf{S}^{q-n_{12}}{\mathbf{X}_{2}}^{(T)} \right) -
H \left( \mathbf{S}^{q-n_{22}}{\mathbf{X}_{2}}^{(T)} \mid W_{12}, \mathbf{S}^{q-n_{12}}{\mathbf{X}_{2}}^{(T)}, W_{22} \right) \\
& = &
\label{eqn:dtrmnstc_up1_PrfTmp1}
H \left( \mathbf{S}^{q-n_{12}}{\mathbf{X}_{2}}^{(T)} \mid W_{12} \right) +
H \left( \mathbf{S}^{q-n_{22}}{\mathbf{X}_{2}}^{(T)} \mid W_{12}, \mathbf{S}^{q-n_{12}}{\mathbf{X}_{2}}^{(T)} \right)
\end{eqnarray}
where (\ref{eq:dtob1}) and (\ref{eq:dtob2}) hold because all messages are independent of each other. (\ref{eqn:dtrmnstc_up1_PrfTmp1}) follows from the fact that $\mathbf{X}_{2}$ is a function of $W_{12},W_{22}$.
Using Fano's inequality, $R_{11} + R_{12}$ can be bounded as follows.
\begin{eqnarray}
T (R_{11} + R_{12} - \epsilon)
& \leq &
I \left( W_{11}, W_{12}; {\mathbf{Y}_{1}}^{(T)} \right) \\
& = &
H \left( {\mathbf{Y}_{1}}^{(T)} \right) -
H \left({\mathbf{Y}_{1}}^{(T)} \mid  W_{11}, W_{12} \right) \\
& \leq &
H \left( {\mathbf{Y}_{1}}^{(T)} \right) -
H \left({\mathbf{Y}_{1}}^{(T)} \mid  W_{11}, W_{12}, W_{21} \right) \\
& = &
H \left( {\mathbf{Y}_{1}}^{(T)} \right) -
H \left({\mathbf{Y}_{1}}^{(T)} \mid  W_{11}, W_{12}, W_{21}, {\mathbf{X}_{1}}^{(T)} \right) \\
& = &
\label{eqn:dtrmnstc_up1_PrfTmp2}
H \left( {\mathbf{Y}_{1}}^{(T)} \right) -
H \left( \mathbf{S}^{q-n_{12}}{\mathbf{X}_{2}}^{(T)} \mid  W_{12} \right)
\end{eqnarray}
Adding (\ref{eqn:dtrmnstc_up1_PrfTmp1}) and (\ref{eqn:dtrmnstc_up1_PrfTmp2}), we have
\begin{eqnarray}
T (R_{11} + R_{12} + R_{22} - \epsilon)
& \leq &
H \left( {\mathbf{Y}_{1}}^{(T)} \right) +
H \left( \mathbf{S}^{q-n_{22}}{\mathbf{X}_{2}}^{(T)} \mid W_{12}, \mathbf{S}^{q-n_{12}}{\mathbf{X}_{2}}^{(T)} \right) \\
& \leq &
T \left( \max \left( n_{11}, n_{12} \right) + (n_{22} - n_{12})^+ \right).
\end{eqnarray}
Letting $T \rightarrow \infty (\epsilon \rightarrow 0)$, we prove (\ref{eqn:dtrmnstc_up1}). Similarly, we can prove (\ref{eqn:dtrmnstc_up2}), (\ref{eqn:dtrmnstc_up3}), and (\ref{eqn:dtrmnstc_up4}).

Next, the first bound on $R_{11} + R_{21} + R_{12} + R_{22}$, (\ref{eqn:dtrmnstc_up5}), can be proved as follows. Consider a genie-aided channel where a genie provides $\mathbf{S}^{q-n_{21}} \mathbf{X}_{1}$ and $W_{21}$ to receiver $1$. For a block length $T$, using Fano's inequality, we can bound $R_{11} + R_{12}$ as the following.
\begin{eqnarray}
T(R_{11} + R_{12} - \epsilon)
& \leq &
I( W_{11}, W_{12}; {\mathbf{Y}_{1}}^{(T)}, \mathbf{S}^{q-n_{21}} {\mathbf{X}_{1}}^{(T)}, W_{21} ) \\
& = &
I( W_{11}, W_{12}; {\mathbf{Y}_{1}}^{(T)}, \mathbf{S}^{q-n_{21}} {\mathbf{X}_{1}}^{(T)} \mid W_{21} ) \label{eq:dtob3} \\
& = &
I( W_{11}, W_{12}; \mathbf{S}^{q-n_{21}} {\mathbf{X}_{1}}^{(T)} \mid W_{21} ) +
I( W_{11}, W_{12}; {\mathbf{Y}_{1}}^{(T)} \mid \mathbf{S}^{q-n_{21}} {\mathbf{X}_{1}}^{(T)}, W_{21} ) \\
& = &
H( \mathbf{S}^{q-n_{21}} {\mathbf{X}_{1}}^{(T)} \mid W_{21} ) -
H( \mathbf{S}^{q-n_{21}} {\mathbf{X}_{1}}^{(T)} \mid W_{21}, W_{11}, W_{12}) \notag \\
& &
+ H( {\mathbf{Y}_{1}}^{(T)} \mid \mathbf{S}^{q-n_{21}} {\mathbf{X}_{1}}^{(T)}, W_{21} ) -
H( {\mathbf{Y}_{1}}^{(T)} \mid \mathbf{S}^{q-n_{21}} {\mathbf{X}_{1}}^{(T)}, W_{21}, W_{11}, W_{12} ) \\
& = &
H( \mathbf{S}^{q-n_{21}} {\mathbf{X}_{1}}^{(T)} \mid W_{21} ) +
H( {\mathbf{Y}_{1}}^{(T)} \mid \mathbf{S}^{q-n_{21}} {\mathbf{X}_{1}}^{(T)}, W_{21} ) \notag \\
& &
- H( {\mathbf{Y}_{1}}^{(T)} \mid \mathbf{S}^{q-n_{21}} {\mathbf{X}_{1}}^{(T)}, W_{21}, W_{11}, W_{12}, {\mathbf{X}_{1}}^{(T)} ) \\
& = &
H( \mathbf{S}^{q-n_{21}} {\mathbf{X}_{1}}^{(T)} \mid W_{21} ) +
H( {\mathbf{Y}_{1}}^{(T)} \mid \mathbf{S}^{q-n_{21}} {\mathbf{X}_{1}}^{(T)}, W_{21} ) \notag \\
& &
- H( {\mathbf{Y}_{1}}^{(T)} \mid W_{12}, {\mathbf{X}_{1}}^{(T)} ) \\
& = &
\label{eqn:dtrmnstc_up5_PrfTmp1}
H( \mathbf{S}^{q-n_{21}} {\mathbf{X}_{1}}^{(T)} \mid W_{21} ) +
H( {\mathbf{Y}_{1}}^{(T)} \mid \mathbf{S}^{q-n_{21}} {\mathbf{X}_{1}}^{(T)}, W_{21} )
\notag \\
& &
- H( \mathbf{S}^{q-n_{12}} {\mathbf{X}_{2}}^{(T)} \mid W_{12})
\end{eqnarray}
Similarly, we have
\begin{eqnarray}
\label{eqn:dtrmnstc_up5_PrfTmp2}
T(R_{21} + R_{22}-\epsilon) \leq
H( \mathbf{S}^{q-n_{12}} {\mathbf{X}_{2}}^{(T)} \mid W_{12} ) +
H( {\mathbf{Y}_{2}}^{(T)} \mid \mathbf{S}^{q-n_{12}} {\mathbf{X}_{2}}^{(T)}, W_{12} ) -
H( \mathbf{S}^{q-n_{21}} {\mathbf{X}_{1}}^{(T)} \mid W_{21})
\end{eqnarray}
Adding (\ref{eqn:dtrmnstc_up5_PrfTmp1}) and (\ref{eqn:dtrmnstc_up5_PrfTmp2}), we have
\begin{eqnarray}
T(R_\Sigma - \epsilon)
& \leq &
H( {\mathbf{Y}_1}^{(T)} \mid \mathbf{S}^{q-n_{21}} {\mathbf{X}_{1}}^{(T)}, W_{21} ) +
H( {\mathbf{Y}_2}^{(T)} \mid \mathbf{S}^{q-n_{12}} {\mathbf{X}_{2}}^{(T)}, W_{12} ) \\
& \leq &
H( {\mathbf{Y}_1}^{(T)} \mid \mathbf{S}^{q-n_{21}} {\mathbf{X}_{1}}^{(T)} ) + H( {\mathbf{Y}_2}^{(T)}
		\mid \mathbf{S}^{q-n_{12}} {\mathbf{X}_{2}}^{(T)}) \\
& \leq &
T \left( \max ( n_{12}, (n_{11} - n_{21})^+ ) +
\max ( n_{21}, (n_{22} - n_{12})^+ ) \right).
\end{eqnarray}
Letting $T \rightarrow \infty$, we prove (\ref{eqn:dtrmnstc_up5}). Similarly, we can prove (\ref{eqn:dtrmnstc_up6}).
\end{proof}

After obtaining capacity outerbounds for the deterministic $X$ channel, we use them to derive sum-capacity upperbounds for the symmetric case.
\begin{corollary}
\label{cor:dtrmnstc_SymSumRateUp}
For any achievable scheme, the sum-rate $R_{\Sigma}$ can be bounded as 
\begin{eqnarray*}
R_{\Sigma} \leq R_{\Sigma,up} &\define& \left\{ \begin{array}{ll}
   2n_d - 2n_c,  & 0 \leq \frac{n_c}{n_d} < \frac{1}{2} \\
   2n_c,         & \frac{1}{2} \leq \frac{n_c}{n_d} < \frac{3}{4} \\
   2(n_d - \frac{1}{3}n_c), & \frac{3}{4} \leq \frac{n_c}{n_d} < 1 \\
   n_d,          & n_c = n_d \\
   2(n_c - \frac{1}{3}n_d), & 1 < \frac{n_c}{n_d} \leq \frac{4}{3} \\
   2n_d,         & \frac{4}{3} < \frac{n_c}{n_d} \leq 2 \\
   2n_c - 2n_d,  & \frac{n_c}{n_d} > 2
\end{array}
\right.
\end{eqnarray*}
\end{corollary}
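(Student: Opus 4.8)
The plan is to obtain $R_{\Sigma,up}$ piece by piece by specializing the six outerbounds of Theorem~\ref{lemma:dtrmnstcxc_up} to the symmetric substitution $n_{11}=n_{22}=n_d$, $n_{12}=n_{21}=n_c$, and then, in each regime of $n_c/n_d$, reporting the smallest of the resulting bounds. Before doing any work I would reduce to the case $0\le n_c\le n_d$: by Lemma~\ref{lemma:dtrmnstc_SumRate} we have $C_\Sigma(n_c,n_d)=C_\Sigma(n_d,n_c)$, and the claimed expression $R_{\Sigma,up}$ is itself invariant under the swap $n_c\leftrightarrow n_d$ (one checks that the seven listed pieces map onto one another, mirroring the fact that the bounds (\ref{eqn:dtrmnstc_up5}) and (\ref{eqn:dtrmnstc_up6}) interchange while (\ref{eqn:dtrmnstc_up1})--(\ref{eqn:dtrmnstc_up4}) permute). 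Hence any achievable $R_\Sigma\le C_\Sigma(n_c,n_d)=C_\Sigma(n_d,n_c)\le R_{\Sigma,up}(n_d,n_c)=R_{\Sigma,up}(n_c,n_d)$, so the regimes $n_c/n_d>1$ follow once $n_c\le n_d$ is handled.

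The workhorse for small $\alpha$ is the four-message bound (\ref{eqn:dtrmnstc_up5}). Substituting the symmetric values and using $(n_d-n_c)^+=n_d-n_c$ for $n_c\le n_d$ collapses both $\max$ terms to the same quantity, giving $R_\Sigma\le 2\max(n_c,\,n_d-n_c)$. This single inequality yields the first two pieces at once: for $n_c<n_d/2$ it reads $2(n_d-n_c)$, covering $0\le n_c/n_d<1/2$, and for $n_c\ge n_d/2$ it reads $2n_c$, covering $1/2\le n_c/n_d<3/4$. I would then verify that this is the \emph{binding} bound on these two ranges, i.e. that it does not exceed the averaged bound derived next; the crossover computation shows $2n_c\le 2(n_d-\tfrac13 n_c)$ exactly when $n_c/n_d\le 3/4$, so (\ref{eqn:dtrmnstc_up5}) is indeed tightest up to that point.

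For the third piece I would add the four three-message bounds (\ref{eqn:dtrmnstc_up1})--(\ref{eqn:dtrmnstc_up4}). In the symmetric case their right-hand sides are $\max(n_c,n_d)+(n_d-n_c)^+$ twice and $\max(n_c,n_d)+(n_c-n_d)^+$ twice. Since each left-hand side omits exactly one of the four rates, every rate appears in three of the four inequalities, so the left sides sum to $3R_\Sigma$, while the right sides total $4\max(n_c,n_d)+2|n_d-n_c|=6n_d-2n_c$ for $n_c\le n_d$. Dividing by three gives $R_\Sigma\le 2\bigl(n_d-\tfrac13 n_c\bigr)$, which is precisely the claimed value on $3/4\le n_c/n_d<1$ and, by the crossover check above, the smaller of the two candidates there.

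The one regime where the generic inequalities fail to give the stated answer is the knife-edge $n_c=n_d$: there (\ref{eqn:dtrmnstc_up5}) gives $2n_d$ and the averaged bound gives $\tfrac43 n_d$, both strictly above the claimed $n_d$. I expect this to be the main obstacle, because it is invisible to the algebra of the six bounds, which are continuous in $n_c/n_d$, and must instead be recognized as a degenerate phenomenon. When $n_c=n_d=q$ the two shift exponents coincide, so $\mathbf{Y}_1=\mathbf{Y}_2=\mathbf{S}^{0}(\mathbf{X}_1+\mathbf{X}_2)$ and both receivers observe the identical vector. A single genie-aided decoder holding this common observation must recover all four messages, whence $T(R_\Sigma-\epsilon)\le H\bigl({\mathbf{Y}_1}^{(T)}\bigr)\le Tn_d$ and $R_\Sigma\le n_d$. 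Collecting the four arguments over $0\le n_c\le n_d$ and invoking the symmetry reduction completes the proof.
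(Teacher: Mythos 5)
Your proposal is correct and takes essentially the same route as the paper: both proofs specialize Theorem \ref{lemma:dtrmnstcxc_up} to the symmetric setting, obtaining $2\max(n_c,n_d-n_c)$ from (\ref{eqn:dtrmnstc_up5}), obtaining $2(n_d-\frac{1}{3}n_c)$ by summing (\ref{eqn:dtrmnstc_up1})--(\ref{eqn:dtrmnstc_up4}) and dividing by $3$, and handling the knife-edge $n_c=n_d$ by the separate observation that both receivers then see identical signals, so a multiple-access-type bound gives $R_\Sigma\le n_d$. The only cosmetic difference is that you dispatch the $n_c>n_d$ regimes via the swap symmetry of Lemma \ref{lemma:dtrmnstc_SumRate} (after checking the claimed expression is swap-invariant), whereas the paper instead invokes the mirror bound (\ref{eqn:dtrmnstc_up6}) directly; the two devices are interchangeable.
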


\begin{proof}
Consider any reliable coding scheme achieving sum rate $R_{\Sigma}$.
Then we can write
\begin{eqnarray}
& &
\label{eqn:dtrmnstcxc_SymUp1}
R_\Sigma \leq \frac{4}{3} \max \left( n_c, n_d \right) +
\frac{2}{3} (n_d - n_c)^+ + \frac{2}{3} (n_c - n_d)^+ \\
& &
\label{eqn:dtrmnstcxc_SymUp2}
R_\Sigma \leq 2 \max \left( n_c, n_d - n_c \right) \\
& &
\label{eqn:dtrmnstcxc_SymUp3}
R_\Sigma \leq 2 \max \left( n_d, n_c - n_d \right)
\end{eqnarray}
Inequalities (\ref{eqn:dtrmnstcxc_SymUp2}) and (\ref{eqn:dtrmnstcxc_SymUp3}) are direct results of (\ref{eqn:dtrmnstc_up5}) and (\ref{eqn:dtrmnstc_up6}). To prove inequality (\ref{eqn:dtrmnstcxc_SymUp1}), we do the following. Substituting $n_{11} = n_{22} = n_d$ and $n_{12} = n_{21} = n_c$ into (\ref{eqn:dtrmnstc_up1}) to (\ref{eqn:dtrmnstc_up4}), adding the resulting inequalities together, and dividing both sides by $3$, we obtain (\ref{eqn:dtrmnstcxc_SymUp1}).

Further, for the symmetric deterministic $X$ channel, if $n_c = n_d$, then both receivers receive the same signals. Thus, the achievable sum rate is bounded by the multiple access channel bound.
\begin{eqnarray}
\label{eqn:dtrmnstcxc_SymUp4}
R_\Sigma \leq n_d
\end{eqnarray}
The result of Corollary \ref{cor:dtrmnstc_SymSumRateUp} follows from  (\ref{eqn:dtrmnstcxc_SymUp1})-(\ref{eqn:dtrmnstcxc_SymUp4}).
\end{proof}

\subsection{Achievable Schemes}
\label{sec:dtrmnstcxc_low}
The following theorem gives the sum capacity of the symmetric deterministic $X$ channel.
\begin{theorem}
\label{thm:dtrmnstc_SymSumRateAch}
The sum-capacity upperbound given in (\ref{cor:dtrmnstc_SymSumRateUp}) is achievable. Equivalently,
\begin{eqnarray}
C_\Sigma(n_c, n_d) = R_{\Sigma,up}(n_c, n_d)
\end{eqnarray}
\end{theorem}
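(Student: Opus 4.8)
The plan is to supply the matching achievability for the converse already established in Corollary~\ref{cor:dtrmnstc_SymSumRateUp}: in every regime I will exhibit an explicit binary level assignment whose sum rate equals $R_{\Sigma,up}(n_c,n_d)$, which together with the converse yields $C_\Sigma=R_{\Sigma,up}$ as claimed in Theorem~\ref{thm:dtrmnstc_SymSumRateAch}. First I would use Lemma~\ref{lemma:dtrmnstc_SumRate}, together with the manifest invariance of $R_{\Sigma,up}$ under $n_c\leftrightarrow n_d$, to reduce to $n_c\le n_d$, i.e.\ $\alpha=n_c/n_d\le 1$; this leaves the regimes $0\le\alpha<\tfrac12$, $\tfrac12\le\alpha<\tfrac34$, $\tfrac34\le\alpha<1$, and the degenerate case $n_c=n_d$. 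Throughout I read the received symbols level by level from $\mathbf{Y}_1[j]=\mathbf{X}_1[j]\oplus\mathbf{X}_2[j-m]$ and $\mathbf{Y}_2[j]=\mathbf{X}_2[j]\oplus\mathbf{X}_1[j-m]$, with $m\define n_d-n_c$ and out-of-range terms absent, so that the top $m$ levels at each receiver are interference-free.

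The two degenerate cases are immediate. For $\alpha<\tfrac12$ I would set $W_{12}=W_{21}=\phi$ and let each transmitter carry its direct message on its top $m$ levels; these arrive cleanly on the top $m$ levels of the intended receiver, while their images at the other receiver occupy only its bottom $n_c$ levels, which carry no desired signal. This delivers $2(n_d-n_c)=2n_d-2n_c$. For $n_c=n_d$ both receivers observe the same vector $\mathbf{X}_1\oplus\mathbf{X}_2$, so the channel collapses to a single multiple-access output of capacity $n_d$, which is trivially achieved.

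The heart of the argument is the interference-alignment construction for $\tfrac12\le\alpha<1$. Writing the symmetric rates $R_{11}=R_{22}$ and $R_{12}=R_{21}$, I would use a mirror-symmetric assignment in which the labels $W_{11},W_{21}$ chosen on the levels of $\mathbf{X}_1$ are copied as $W_{22},W_{12}$ on the corresponding levels of $\mathbf{X}_2$, each transmit level being tagged ``direct bit'', ``cross bit'', or ``silent''. The tags are chosen so that at \emph{each} receiver the two interfering messages (at receiver~$1$, $W_{21}$ carried by $\mathbf{X}_1$ and $W_{22}$ carried by $\mathbf{X}_2$) are shifted onto one common set of levels and thus \emph{align}, while every desired bit lands either on a top clean level or on a contested level whose colliding term has been deliberately silenced. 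Concretely the top $m$ levels all carry direct bits, contributing $2m$ desired bits, and the remaining $n_c$ contested levels are filled with a ``cross / silent / direct'' pattern: in blocks of three for $\tfrac34\le\alpha<1$, which adds $\tfrac43 n_c$ and totals $2m+\tfrac43 n_c=2n_d-\tfrac23 n_c$, and with the cross, silent and direct positions suitably grouped for $\tfrac12\le\alpha<\tfrac34$, which totals $2n_c$. I have verified representative instances --- $(n_d,n_c)=(4,3)$ for the first subregime and $(7,5)$ for the second --- in which the interference indeed collapses onto a few levels at both receivers and all desired bits are read off the clean and silenced levels; for $\alpha\le\tfrac23$ the construction coincides with interference-channel operation \cite{Etkin_Tse_Wang}, whereas for $\tfrac23<\alpha<1$ the alignment strictly improves on it.

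Finally the regimes $\alpha>1$ follow from the symmetry reduction of the first step. The main obstacle is this alignment construction itself: one must choose the tag pattern so that the alignment requirement is met \emph{simultaneously} at both receivers, so that no desired bit ever coincides with interference or with a second desired bit, and so that the surviving bit count is exactly $R_{\Sigma,up}$ with no level wasted; and one must verify this for \emph{all} $(n_c,n_d)$ in each subregime, removing the integrality obstruction (for instance when the prescribed block length does not divide $n_c$) by passing to a short symbol extension.
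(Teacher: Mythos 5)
Your overall strategy mirrors the paper's (reduce to $n_c\le n_d$ by symmetry, operate as an interference channel for small $\alpha$, invoke the MAC bound at $n_c=n_d$, align interference in between), and your treatments of $\alpha<\tfrac12$, $n_c=n_d$, and $\tfrac12\le\alpha<\tfrac34$ are sound. The genuine gap is in the regime $\tfrac34\le \frac{n_c}{n_d}<1$, and it is not merely the bookkeeping you acknowledge: the construction you propose --- tag each \emph{level} with ``direct/cross/silent'' and repeat a block-of-three pattern --- provably cannot reach $2n_d-\tfrac23 n_c$ for all parameters. Under any scalar level assignment, each transmitted bit lands on exactly one level at each receiver, so a desired bit is decodable only if the level it occupies contains no other unknown bit. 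Writing $m=n_d-n_c$, the levels split into residue classes mod $m$ (chains under the shift), transmitter~$1$'s chain interacts only with transmitter~$2$'s chain of the same residue, and within a chain-pair the constraints are: a non-top direct bit needs a silent slot directly above it in the partner chain, and a cross bit needs a silent slot directly below its image. Counting these constraints for $(n_c,n_d)=(12,15)$ ($m=3$, three chain-pairs of length $5$) shows that \emph{any} scalar tagging, mirror-symmetric or not, delivers at most $6$ bits per chain-pair, i.e.\ $18$ in total, while the theorem requires $2n_d-\tfrac23 n_c=22$. Note that $3\mid n_c$ here, so your symbol-extension escape hatch is never triggered; and since the $3$-symbol extension is just three parallel copies of the same channel, scalar tagging on the extension inherits the same deficit. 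Your two verified instances $(4,3)$ and $(7,5)$ are exactly the cases whose chains are short enough for scalar tagging to suffice, so they do not expose this obstruction.

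What is missing is precisely the content of the paper's Lemma~\ref{lem:extdecomp}: the precoding directions must be allowed to be \emph{linear combinations} of basis vectors straddling different chains (the paper's box types 2--5, built from vectors such as $\mathbf{H}^{k}\mathbf{e}_i+\mathbf{H}^{k+1}\mathbf{e}_{i+1}$), and one must prove --- via the cyclic decomposition of $\mathcal{F}_2^{n_d}$ into $\mathbf{H}$-invariant chains and a five-step packing algorithm --- that for every $(n_c,n_d)$ in the regime there is an $n_d\times\frac{n_c}{3}$ matrix $\mathbf{V}$ making
\begin{equation*}
\left[\,\mathbf{V}\ \ \mathbf{S}^{n_d-n_c}\mathbf{V}\ \ \mathbf{S}^{2(n_d-n_c)}\mathbf{V}\ \ \mathbf{V}_{null}\,\right]
\end{equation*}
full rank $n_d$ (with the $3$-symbol extension handling $3\nmid n_c$). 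Each transmitter then sends its direct message along $[\mathbf{V}\ \mathbf{V}_{null}]$ and its cross message along $\mathbf{S}^{n_d-n_c}\mathbf{V}$, which is what forces the two interfering messages onto the common subspace $\mathrm{col}(\mathbf{S}^{n_d-n_c}\mathbf{V})$ at each receiver. Without this vector-space existence argument (or an equivalent replacement), achievability in $\tfrac34\le\alpha<1$, and hence the theorem, is not established.
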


Before we proceed to the proof, we will need the following lemma

\begin{lemma}
\label{lem:extdecomp}
Let $n_c, n_d$ be positive integers such that $\frac{3}{4} \leq \frac{n_c}{n_d} < 1$. Then
\begin{enumerate}
\item If $n_c$ is divisible by $3$, then there exists a $n_d\times \frac{n_c}{3}$ matrix $\xV$ whose entries are from $\mathcal{F}_2^{n_d}$ such that 
$$ \textrm{rank}\left(\left[ \xV ~~~ \xS^{n_d-n_c} \xV ~~~ \xS^{2n_d-2n_c} \xV ~~~ \xV_{null}\right] \right) = n_d$$
where $\xV_{null}$ is a $n_d\times(n_d-n_c)$ whose column vectors form a basis for the nullspace of $\xS^{n_d-n_c}$
\item There exists a $3n_d\times n_c$ matrix $\mathbf{\bar{V}}$ whose entries are from $\mathcal{F}_2^{3n_d}$ such that 
$$ \textrm{rank}\left(\left[ \mathbf{\bar{V}} ~~~ \mathbf{\bar{H}} \mathbf{\bar{V}}  ~~~ \mathbf{\bar{H}}^2 \mathbf{\bar{V}}  ~~ \mathbf{\bar{V}}_{null} \right] \right) = 3 n_d$$
where 
$$
\mathbf{\bar{H}} = \left[ \begin{array}{ccc}
\mathbf{S}^{n_d-n_c} & \mathbf{0}_{n_d\times n_d}           & \mathbf{0}_{n_d\times n_d} \\
\mathbf{0}_{n_d\times n_d}           & \mathbf{S}^{n_d-n_c} & \mathbf{0}_{n_d\times n_d} \\
\mathbf{0}_{n_d\times n_d}           & \mathbf{0}_{n_d\times n_d}           & \mathbf{S}^{n_d-n_c}
\end{array} \right]
$$
and $\mathbf{\bar{V}}_{null}$ represents the $3n_d \times (3 n_d-3n_c)$ matrix whose column vectors form a basis for the nullspace of $\mathbf{\bar{H}}$
\end{enumerate}
\end{lemma}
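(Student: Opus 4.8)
The plan is to recast both parts as a statement about linear independence over $\mathcal{F}_2$ in a truncated polynomial ring, and then exhibit an explicit interleaved construction. In the regime $\frac{3}{4}\le \frac{n_c}{n_d}<1$ we have $q=n_d$; write $m \define n_d-n_c$, so that $1\le m\le \frac{n_c}{3}$. Identifying $\mathcal{F}_2^{n_d}$ with $A_d \define \mathcal{F}_2[X]/(X^{n_d})$ via $\mathbf{e}_i\mapsto X^{i-1}$, the shift $\xS$ becomes multiplication by $X$, so $\xS^{m}$ is multiplication by $X^m$ and its nullspace is $X^{n_c}A_d=\mathrm{span}(X^{n_c},\dots,X^{n_d-1})$ — exactly the column span of $\xV_{null}$. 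The stacked matrix in Part 1 has $3\cdot\frac{n_c}{3}+(n_d-n_c)=n_d$ columns, so the rank condition is equivalent to these columns forming a basis of $A_d$. Quotienting by the nullspace ideal and noting that multiplication by $X^m$ descends, Part 1 reduces to the clean statement: \emph{with $k\define \frac{n_c}{3}$, find $v_1,\dots,v_k\in A\define\mathcal{F}_2[X]/(X^{n_c})$ whose $3k$ images $\{v_j,\,X^mv_j,\,X^{2m}v_j\}$ form a basis of $A$.}

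First I would try the simplest choice, letting $\xV$ be a coordinate (monomial) matrix, i.e. $U\define\mathrm{colspan}(\xV)$ spanned by a set $D$ of monomials. Since multiplication by $X^m$ preserves the residue class of a degree modulo $m$, the basis condition then splits across the $m$ residue classes, and inside each class it is exactly the $m=1$ problem of tiling a length-$\ell$ segment by $D,\,D{+}1,\,D{+}2$, which is solvable (take every third degree) iff $3\mid \ell$. A short count shows all class sizes are multiples of $3$ precisely when $m\mid k$, in which case the monomial construction finishes and independence is immediate by a leading-degree (echelon) argument.

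The hard part is the complementary case $m\nmid k$, where the residue-class sizes are not all divisible by $3$ and no monomial $\xV$ can work; here the three shifted copies are forced to overlap and genuine \emph{interference alignment} is needed. My plan is to retain the low monomials $1,X,\dots$ for most pivots and add a small number of two-term ``correction'' generators of the form $X^{a}+X^{b}$ (as in the toy case $n_c=9,\,n_d=11,\,m=2$, where $v_1=1,\ v_2=X,\ v_3=X^3+X^6$ succeeds), chosen so that each correction contributes one new high-degree pivot while its low-degree tail is cancelled by an already-chosen generator. Independence would then follow by assigning every one of the $3k$ vectors a distinct pivot degree and triangularizing; making this pivot assignment systematic for all admissible $(m,k)$, and proving the tails always cancel, is the main obstacle and the technical heart of the lemma.

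For Part 2 the same identification turns $\mathbf{\bar{H}}=\mathbf{I}_3\otimes\xS^{m}$ into multiplication by $X^m$ on each of three stacked copies, and quotienting by its nullspace reduces the claim to: find $n_c$ vectors in $A^{\oplus 3}$ whose $\{\,\cdot,\,X^m\cdot,\,X^{2m}\cdot\}$ images form a basis of $A^{\oplus 3}$. The key point is that now there are $n_c$ (rather than $n_c/3$) generators and three parallel copies of $A$, which supply exactly the slack that removes the divisibility requirement $3\mid n_c$ of Part 1 — mirroring the role of symbol extension in the Gaussian achievability. I would either apply the Part 1 construction copy-by-copy after redistributing columns so that each copy carries a multiple-of-three load, or run the interleaved/correction construction directly across the three blocks, and in both cases close the argument with the same leading-degree triangularization.
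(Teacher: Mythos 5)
Your reduction is sound and in fact mirrors the paper's setup: identifying $\mathcal{F}_2^{n_d}$ with $\mathcal{F}_2[X]/(X^{n_d})$ and splitting into residue classes of the degree modulo $m=n_d-n_c$ is exactly the paper's ``cyclic decomposition'' of $\mathcal{F}_2^{n_d}$ into $\mathbf{H}$-invariant chains $\{\mathbf{e}_i,\mathbf{H}\mathbf{e}_i,\mathbf{H}^2\mathbf{e}_i,\dots\}$, $i=1,\dots,n_d-n_c$, and your quotient-by-the-nullspace step is legitimate. Your monomial construction correctly disposes of the case where every class size is a multiple of $3$, and your two-term correction idea ($X^a+X^b$, verified on the toy case $(n_c,n_d)=(9,11)$) is precisely the paper's device of ``boxes'' whose generator is a sum of basis vectors from two adjacent chains.

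The problem is that you stop exactly where the lemma's actual content begins. You write that making the pivot assignment ``systematic for all admissible $(m,k)$, and proving the tails always cancel, is the main obstacle and the technical heart of the lemma'' --- but you never do it, so the hard case $m\nmid k$ (and with it Part 2, which you build on top of Part 1) remains unproved. This is not a routine verification one can wave at: the paper devotes its entire Appendix B to it, introducing five distinct box types (Figs.\ 6--10) and a five-step greedy covering algorithm (top-of-column vertical boxes first, using $\frac{n_c}{n_d}\geq\frac{3}{4}$ to guarantee each chain admits at least one; then alternating left-to-right and right-to-left placement of cross-chain boxes for the leftover circles), together with the argument that the resulting $n_c/3$ generators and their two shifts are jointly independent because no box's vectors lie in the span of the other boxes'. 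In particular, the hypothesis $\frac{3}{4}\leq\frac{n_c}{n_d}$ must enter the construction --- it is what makes the greedy placement feasible --- and it never appears in your argument for the hard case. For Part 2 you also only sketch two alternatives (``redistribute columns'' or ``run the correction construction across blocks'') without establishing either; the paper's route is to reorder the $3(n_d-n_c)$ chains of the extended space so the picture matches Part 1's and then note that the number of non-kernel circles, $3n_c$, is automatically a multiple of $3$, so the same algorithm applies with no divisibility hypothesis. As it stands, your proposal is a correct reformulation plus a solved special case, not a proof.
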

The proof of the lemma is placed in Appendix \ref{sec:ProofLemmaDecomp}. We now proceed to prove Theorem \ref{thm:dtrmnstc_SymSumRateAch}.\\
\begin{proof}
We only discuss the achievable scheme for the case that $n_c \leq n_d$. The achievable schemes for $n_c > n_d$ can be obtained by using Corollary \ref{lemma:dtrmnstc_SumRate}. For the case that $n_c \leq n_d$, the achievable scheme is split into four different regimes viz. $0 \leq \frac{n_c}{n_d} < \frac{2}{3}$, $\frac{2}{3} \leq \frac{n_c}{n_d} < \frac{3}{4}$, $\frac{3}{4} \leq \frac{n_c}{n_d} < 1$, and $\frac{n_c}{n_d} = 1$.

Achievability for $\frac{n_c}{n_d}=1$ is trivial, since an optimal achievable scheme sets $W_{12}=W_{21}=W_{22}=\phi$ and uses all the $n_d$ levels for $W_{11}$ at transmitter $1$.
We will treat the other $3$ cases below.

\emph{Case 1 : $0 \leq \frac{n_c}{n_d} < \frac{2}{3} $ } \\
We need to show that $\max(2n_d-2n_c,2n_c)$ is achievable. Achievability follows by setting $W_{21}=W_{12}=\phi$ so that the $X$ channel operates as an interference channel. The capacity of the two-user deterministic interference channel found in \cite{Bresler_Tse,Gamal_Costa} implies that $\max(2n_d-2n_c,2n_c)$ is achievable in this regime.

\emph{Case 2 : $\frac{2}{3} \leq \frac{n_c}{n_d} < \frac{3}{4} $ } \\
We show that a sum rate of $R_{\Sigma}=2n_c$ is achievable in this regime using interference alignment over the deterministic set up. The achievable scheme achieves a rate of $R_{ii}=2n_c-n_d $ for each of $W_{11}, W_{22}$, and a rate of $R_{ij}=n_d-n_c$ for $W_{12}$ and $W_{21}$.

At transmitter $i$, the top $n_d - n_c$ levels are used to transmit $W_{ii}$, the next $n_d - n_c$ levels are used to transmit $W_{ji}$, the next $n_d - n_c$ levels are kept zero, and the remaining $3n_c-2n_d$ levels are used to transmit $W_{ii}$ for $(i,j)=(1,2), (2,1)$ (See Figure \ref{fig:DtrAchv2by3}).
In other words, the achievable scheme transmits for $i \neq j, i,j \in \{1,2\}$, a $n_d \times 1$ column vector $\mathbf{X}_i$ which can be represented as
\begin{eqnarray*} \mathbf{X}_i &=& \left[ \begin{array}{c}\mathbf{I}_{(n_d-n_c)} \\ \mathbf{0}_{n_c \times (n_d-n_c)}\end{array}\right] \mathbf{\hat{X}}_{ii}(1)  + \left[ \begin{array}{c}\mathbf{0}_{(3 n_d-3n_c) \times (3n_c-2n_d)} \\ \mathbf{I}_{(3 n_c-2n_d)}\end{array}\right] \mathbf{\hat{X}}_{ii}(2)  \\&&+ \left[ \begin{array}{c}\mathbf{0}_{( n_d-n_c) \times (n_d-n_c)} \\ \mathbf{I}_{ (n_d-n_c)} \\ \mathbf{0}_{(2n_c-n_d)\times (n_d-n_c)}\end{array}\right] \mathbf{\hat{X}}_{ij} \end{eqnarray*}
where $\mathbf{I}_m$ represents the $m \times m$ identity matrix, $\mathbf{\hat{X}}_{ii}(1), \mathbf{\hat{X}}_{ii}(2), \mathbf{\hat{X}}_{ij}$ are column vectors of sizes $ (n_d-n_c) \times 1, (3n_c-2n_d) \times 1, (n_d-n_c) \times 1$ respectively. $\mathbf{\hat{X}}_{ii}(1),\mathbf{\hat{X}}_{ii}(2)$ are used to encode message $W_{ii}$ and $\mathbf{\hat{X}}_{ij}$ is used to encode $W_{ij}$.
As illustrated in Fig. \ref{fig:DtrAchv2by3}, receiver $i$ can recover its intended messages $W_{ii},W_{ij}$ without interference. Thus, we have $R_\Sigma = 2n_c$. Note that at receiver $i$, interference $\mathbf{\hat{X}}_{ji},\mathbf{\hat{X}}_{jj}$ align at levels $n_d-n_c+1, n_d-n_c+2, \cdots, 2(n_d-n_c)$.

\begin{figure}
\begin{center}
\includegraphics[angle=90, height=3.2in, width=6.3 in, trim=140 0 45 0, clip]{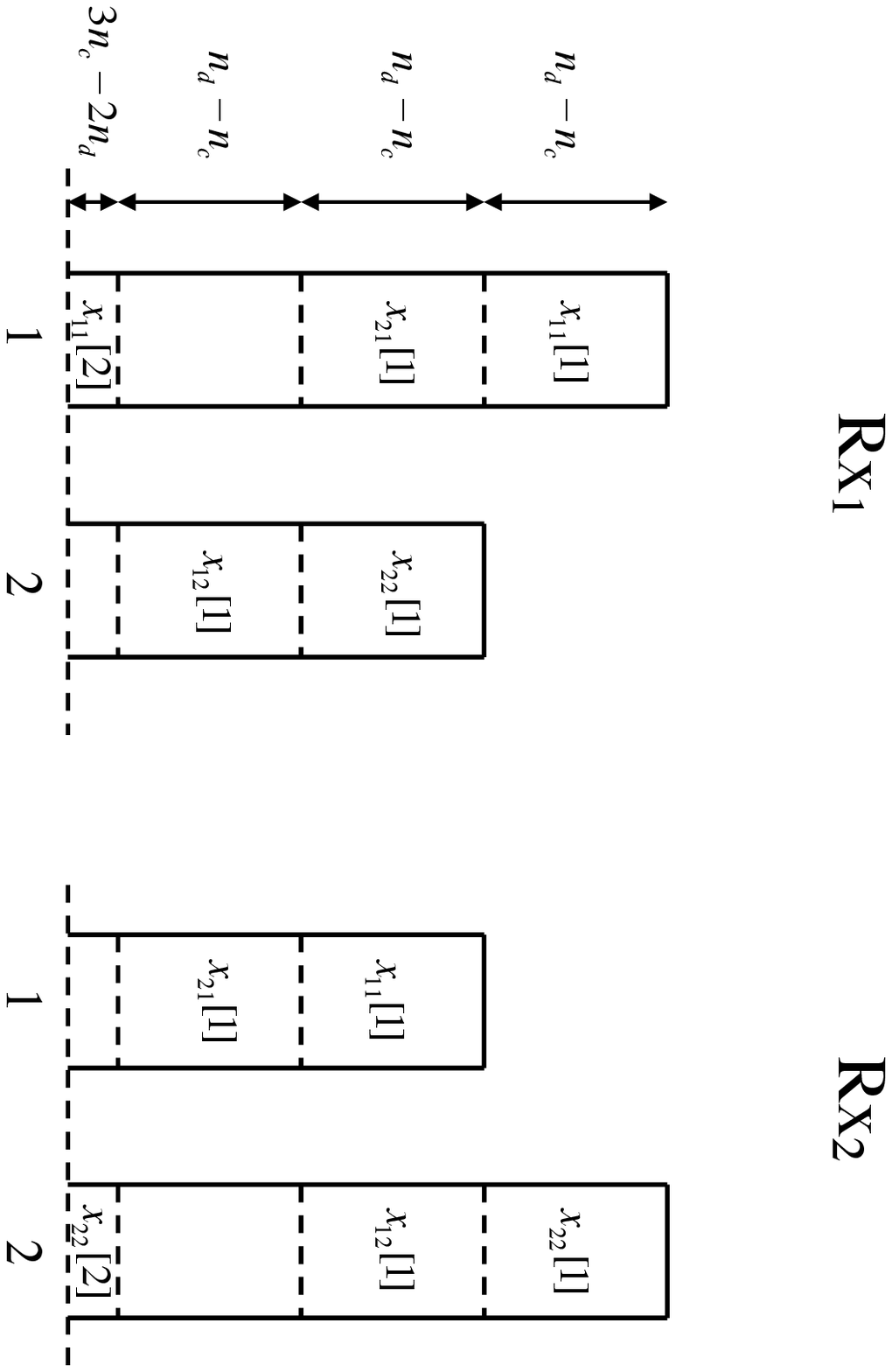}
\caption{Signal levels at receivers for $\frac{2}{3} \leq \frac{n_c}{n_d} < \frac{3}{4}$.}\label{fig:DtrAchv2by3}
\end{center}
\end{figure}

\emph{Case 3 : $\frac{3}{4} \leq \frac{n_c}{n_d} < 1$} \\
We first consider the case where $n_c$ is a multiple of $3$.
For this regime, we show that $R_{\Sigma} = 2n_d-2n_c/3$ is achievable. 

\begin{figure}
\begin{center}
\includegraphics[angle=90, width=400pt, trim=30 0 30 0, clip]{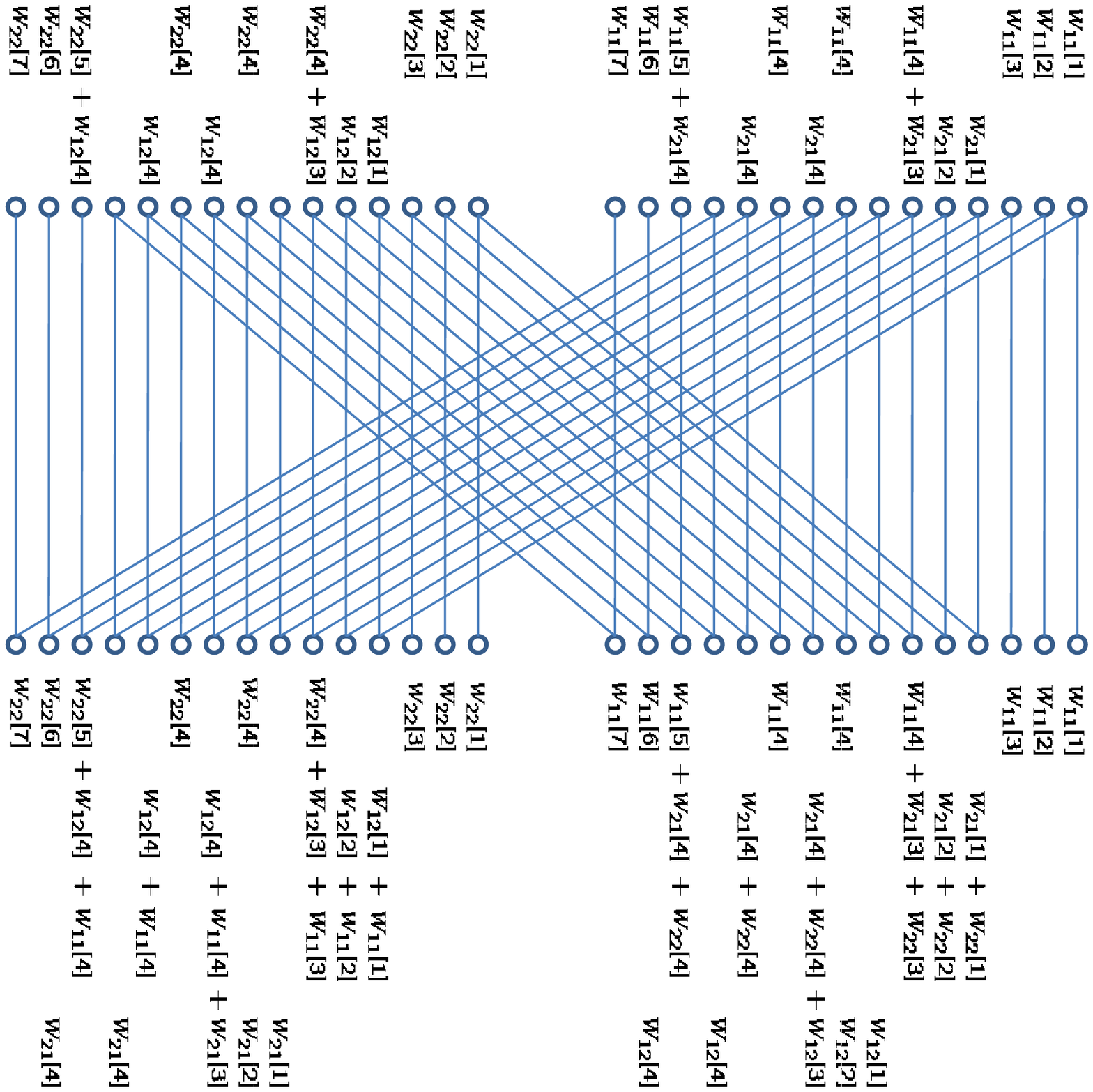}
\caption{Achievable scheme for the symmetric deterministic $X$ channel with $(n_c, n_d) = (12, 15)$}\label{fig:DtrAchv12by15}
\end{center}
\end{figure}

\textit{1) Transmit Scheme:} We use linear precoding at the transmitters. Let $\mathbf{V}_{null}$ be a $n_d \times (n_d-n_c)$ times matrix whose column vectors form a basis for the null space of $\xS^{n_d-n_c}$ meaning that 
$$ \xS^{n_d-n_c} \xV_{null} = \mathbf{0}_{n_d \times (n_d-n_c)}$$ At transmitter $i$, we use, as precoding vectors for $W_{ii}$, column vectors of the  matrix $\left[\xV~~\xV_{null}\right]$ where $\xV$ has dimension $n_d \times \frac{n_c}{3}$. We will shortly explain how $\xV$ is chosen, but here we mention that the columns of $\xV$ are linearly independent of $\xV_{null}$. Note that this implies that $\xS^{n_d-n_c} \xV$ has a full rank of $n_c/3$. For $W_{ji}$, we use $\xS^{n_d-n_c} \xV$ as the precoding matrix so that, the transmitted codeword $\mathbf{X}_i$ can be represented as 

\begin{eqnarray}
\mathbf{X}_{i} =
 \mathbf{V} \mathbf{\hat{X}}_{ii}(1) + \mathbf{V}_{null} \mathbf{\hat{X}}_{ii}(2) + \xS^{n_d-n_c} \xV \mathbf{\hat{X}}_{ji}
\end{eqnarray}
for $(i,j) \in \{(1,2),(2,1)\}$, where $\mathbf{\hat{X}}_{ii}(1)$ and $\mathbf{\hat{X}}_{ii}(2)$ are column vectors of lengths $n_c/3$ and $n_d-n_c$ representing the bits encoding $W_{ii}$. $\mathbf{\hat{X}}_{ji}$ is a $n_c/3 $ dimensional column vector of bits encoding $W_{ji}$.

\textit{2) Receive Scheme:} The received signal at receiver $1$ can be expressed as the following.
\begin{eqnarray}
\mathbf{Y}_{1}
& = &
\mathbf{X}_{1} + \xS^{n_d-n_c} \mathbf{X}_{2} \\
& = &
 \mathbf{V} \mathbf{\hat{X}}_{11}(1) + \mathbf{V}_{null} \mathbf{\hat{X}}_{11}(2) + \xS^{n_d-n_c} \xV \left( \mathbf{\hat{X}}_{21}+\mathbf{\hat{X}}_{22}(1)\right) +  \xS^{2n_d-2n_c} \xV \mathbf{\hat{X}}_{12} 
\end{eqnarray}
Now, receiver $1$ wishes to decode $\mathbf{\hat{X}}_{11}(1),\mathbf{\hat{X}}_{11}(2),\mathbf{\hat{X}}_{12}$ using linear decoding. Notice that the interference from $ \mathbf{\hat{X}}_{21},\mathbf{\hat{X}}_{22}(1)$ aligns along $\xS^{n_d-n_c} \xV$. Now, suppose we choose $\xV$ such that the columns of the matrix 
$$ \mathbf{G} =\left[ \xV ~~~ \xS^{n_d-n_c} \xV ~~~ \xS^{2n_d-2n_c} \xV ~~ \xV_{null}\right] $$
are linearly independent, then clearly  receiver $1$ can decode $W_{11},W_{12}$ using linear decoding. Therefore, in order to show achievability, we need to show that there exists $\xV$ so that the matrix $\mathbf{G}$ has a full rank of $n_d$. This is shown in Lemma \ref{lem:extdecomp}. A similar analysis shows that, if $\mathbf{G}$ has full rank, then receiver $2$ can decode its desired messages as well, using linear decoding.

Now, we consider the case where $n_c/3$ is not an integer. In this case, we use a $3$ symbol extension  of the channel represented below (Channel extensions have earlier been used in achievable schemes in \cite{Jafar_Shamai, Cadambe_Jafar_int, Lapidoth_Shamai_Wigger_IN,Weingarten_Shamai_Kramer}) 
\begin{eqnarray}
\underbrace{\left[ \begin{array}{c}
\mathbf{Y}_{i}(3t) \\
\mathbf{Y}_{i}(3t+1) \\
\mathbf{Y}_{i}(3t+2)
\end{array} \right]}_{\displaystyle{\mathbf{\bar{Y}}_{i}}}
=
\underbrace{\left[ \begin{array}{c}
\mathbf{X}_{i}(3t) \\
\mathbf{X}_{i}(3t+1) \\
\mathbf{X}_{i}(3t+2)
\end{array} \right]}_{\displaystyle{\mathbf{\bar{X}}_{i}}}
+
\underbrace{\left[ \begin{array}{ccc}
\mathbf{S}^{n_d-n_c} & \mathbf{0}           & \mathbf{0} \\
\mathbf{0}           & \mathbf{S}^{n_d-n_c} & \mathbf{0} \\
\mathbf{0}           & \mathbf{0}           & \mathbf{S}^{n_d-n_c}
\end{array} \right]}_{\displaystyle{{\mathbf{\bar{H}}}}}
\underbrace{\left[ \begin{array}{c}
\mathbf{X}_{j}(3t) \\
\mathbf{X}_{j}(3t+1) \\
\mathbf{X}_{j}(3t+2)
\end{array} \right]}_{\displaystyle{\mathbf{\bar{X}}_{j}}}
\end{eqnarray}
Notice that, over this extended channel, inputs and outputs are symbols over $\mathcal{F}_2^{3n_d}$. Like the case where $n_c$ was a multiple of $3$, a linear precoding and decoding technique is applicable over this extended channel. The only difference in this case is that, we need to show that there exists a $3n_d \times n_c$ matrix $\mathbf{\bar{V}}$  such that the matrix 
$$ \mathbf{\bar{G}} =\left[ \mathbf{\bar{V}} ~~~ \mathbf{\bar{H}} \mathbf{\bar{V}}  ~~~ \mathbf{\bar{H}}^2 \mathbf{\bar{V}}  ~~~ \mathbf{\bar{V}}_{null} \right] $$
has a full rank of $3n_d$, where $\mathbf{\bar{V}}_{null} $ represents the $(3n_d-3n_c)$ basis elements of the null space of $\mathbf{\bar{H}}$. This is shown in Lemma \ref{lem:extdecomp} as well. 
This completes the proof of achievability. An example of the scheme for the case that $(n_c,n_d)=(12,15)$ is given in Figure \ref{fig:DtrAchv12by15}.

\end{proof}

\section{Generalized Degrees of Freedom of the Symmetric Gaussian $X$ Channel}
\label{sec:gaussianxc}
The main aim of this section is to prove Theorem \ref{thm:gdof}. 

We first derive a useful property of $d(\alpha)$ - the GDOF of the symmetric two-user $X$ channel
\begin{lemma}
\label{lemma:sym}
\begin{eqnarray}
d(\alpha)=\alpha d(\frac{1}{\alpha})
\end{eqnarray}
where $d(\alpha)$ represents the number of GDOF of the symmetric $X$ channel.
\end{lemma}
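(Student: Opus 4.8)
The plan is to establish a Gaussian analog of the deterministic symmetry in Lemma~\ref{lemma:dtrmnstc_SumRate} — namely that interchanging the roles of the direct and cross links leaves the sum capacity unchanged — and then to translate this invariance into a statement about the GDOF via a change of the SNR variable $\rho$.

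First I would prove the exact (finite-SNR) identity
$$ C_\Sigma(\rho,\alpha) = C_\Sigma(\rho^\alpha, 1/\alpha), \qquad \alpha > 0. $$
By the definition of the model (\ref{eqn:syminout1})--(\ref{eqn:syminout2}), $C_\Sigma(\rho^\alpha,1/\alpha)$ is the sum capacity of the symmetric $X$ channel whose direct links carry power $\rho^\alpha$ and whose cross links carry power $(\rho^\alpha)^{1/\alpha}=\rho$; equivalently, it is the sum capacity of the channel obtained from (\ref{eqn:syminout1})--(\ref{eqn:syminout2}) by swapping the direct gain $\sqrt{\rho}$ with the cross gain $\sqrt{\rho^\alpha}$. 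It then remains to argue that this swap does not change the sum capacity, which I would do exactly as in the proof of Lemma~\ref{lemma:dtrmnstc_SumRate}: relabeling the two transmitters ($X_1\leftrightarrow X_2$) in the direct/cross-swapped channel reproduces (\ref{eqn:syminout1})--(\ref{eqn:syminout2}) verbatim, while merely permuting the four independent messages among themselves ($W_{11}\leftrightarrow W_{12}$ and $W_{21}\leftrightarrow W_{22}$). Since all four messages are present in the $X$ channel and the sum rate is invariant under permuting them, the two channels have identical sum capacity, which yields the displayed identity.

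Next I would substitute this identity into the GDOF definition (\ref{eqn:gdofDefX}) and change the SNR variable. Writing $\rho'=\rho^\alpha$, which for $\alpha>0$ is a strictly increasing continuous bijection of $(1,\infty)$ onto itself, we have $\log\rho=\tfrac{1}{\alpha}\log\rho'$ and $\rho\to\infty\Leftrightarrow\rho'\to\infty$. Hence
$$
d(\alpha)
= \limsup_{\rho\to\infty}\frac{C_\Sigma(\rho,\alpha)}{\tfrac12\log\rho}
= \limsup_{\rho\to\infty}\frac{C_\Sigma(\rho^\alpha,1/\alpha)}{\tfrac12\log\rho}
= \limsup_{\rho'\to\infty}\frac{C_\Sigma(\rho',1/\alpha)}{\tfrac{1}{2\alpha}\log\rho'}
= \alpha\, d(1/\alpha),
$$
which is the claim. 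The $\limsup$ is preserved under the substitution precisely because $\rho\mapsto\rho^\alpha$ is a monotone bijection of a neighborhood of infinity onto a neighborhood of infinity.

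The step I expect to require the most care is the first one, justifying the invariance of sum capacity under swapping the direct and cross gains. The argument is the same trivial relabeling symmetry that underlies Lemma~\ref{lemma:dtrmnstc_SumRate}, but one must state explicitly that it is the presence of \emph{all four} messages that makes it valid, so that the transmitter relabeling only permutes the message set rather than deleting messages; the identical swap would fail for the two-user interference channel. The only remaining subtlety is bookkeeping: requiring $\alpha>0$ so that $1/\alpha$ is defined and $\rho\mapsto\rho^\alpha$ is a bijection near infinity, which legitimizes the interchange of variables inside the $\limsup$.
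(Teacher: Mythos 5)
Your proof is correct and takes essentially the same route as the paper: both establish the finite-SNR identity $C_\Sigma(\rho,\alpha)=C_\Sigma(\rho^\alpha,1/\alpha)$ via a relabeling symmetry that merely permutes the four messages of the $X$ channel, and then change variables $\rho'=\rho^\alpha$ inside the $\limsup$ defining $d(\alpha)$. The only cosmetic difference is that the paper realizes the symmetry by switching the two \emph{receivers} (permuting $W_{11}\leftrightarrow W_{21}$ and $W_{12}\leftrightarrow W_{22}$) rather than swapping the gains and relabeling the \emph{transmitters} as you do; if anything, your explicit treatment of the $\limsup$ under the monotone substitution is more careful than the paper's, which writes plain limits.
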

\bigskip
\begin{proof}
Please see Appendix \ref{sec:ProofLemmaSym} for the proof.
\end{proof}
The lemma is useful since we can first study $d(\alpha)$ for $\alpha \leq 1$ and then use Lemma \ref{lemma:sym} to extend the results for $\alpha > 1$. 
\bigskip

The rest of this section is organized as follows.  In the next subsection, i.e. in Section \ref{subsec:capob}, we obtain capacity outerbounds for the Gaussian $X$ channel. These bounds are analogous to those obtained for the deterministic $X$ channel in the previous section. In Section \ref{subsec:gdofob}, we translate the capacity outerbounds obtained in the next section to obtain a GDOF outerbound of Theorem \ref{thm:gdof}. In Section \ref{subsec:gdofach}, we use the insights obtained for the deterministic $X$ channel to show achievability of $d(\alpha)$ as described in Theorem \ref{thm:gdof}.

We remind the that while the achievable schemes we describe are valid for the symmetric case only, the capacity outerbounds shown in Section \ref{subsec:capob} are valid for the general setting.

\subsection{Outerbounds for the Gaussian $X$ channel}
\label{subsec:capob}
In this section, we study outerbounds for the $X$ channel. We first present known outer-bounds of the $X$ channel using previous works in the lemma below.
\begin{lemma}
\label{lem:knownob}
The rate tuple $(R_{11},R_{12},R_{21},R_{22})$ achieved by any reliable coding scheme over the $X$ channel satisfies the following bounds
\begin{eqnarray}
R_{ij} &\leq& \frac{1}{2} \log\big(1+{H_{ij}}^2 P_j\big) \label{bound1}\\
R_{1j}+R_{2j} &\leq&\frac{1}{2} \log\big(1+\max{(H_{1j}}^2,{H_{2j}}^2)P_j\big), j=1,2 \label{mac-bound1} \\
R_{i1}+R_{i2} &\leq&\frac{1}{2} \log\big(1+{H_{i1}}^2 P_1 +{H_{i2}}^2 P_2\big),i=1,2 \label{mac-bound2} \\
R_{11}+R_{22}+R_{12} &\leq&\frac{1}{2} \log\big(1+{H_{11}}^2 P_1+ {H_{12}}^2 P_2\big)+\frac{1}{2} \log\big(1+\frac{{H_{22}}^2 P_2}{1+{H_{12}}^2 P_2}\big) \label{z-bound1} \\
R_{22}+R_{11}+R_{21} &\leq&\frac{1}{2} \log\big(1+{H_{22}}^2 P_2+ {H_{21}}^2 P_1\big)+\frac{1}{2} \log\big(1+\frac{{H_{11}}^2 P_1}{1+{H_{21}}^2 P_1}\big) \label{z-bound2} \\
R_{11}+R_{12}+R_{21} & \leq &\frac{1}{2} \log\big(1+{H_{11}}^2 P_1+ {H_{12}}^2 P_2\big)+\frac{1}{2} \log\big( 1+\frac{{H_{21}}^2 P_1}{1+{H_{11}}^2 P_1}\big) \label{z-bound3} \\
R_{22}+R_{21}+R_{12} & \leq &\frac{1}{2} \log\big(1+{H_{22}}^2 P_2+ {H_{21}}^2 P_1\big)+\frac{1}{2} \log\big( 1+\frac{{H_{12}}^2 P_2}{1+{H_{22}}^2 P_2}\big)\label{z-bound4}
\end{eqnarray}
\end{lemma}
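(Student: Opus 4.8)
The plan is to sort the seven inequalities by the information-theoretic mechanism that produces each one and to dispatch every group with a Fano-plus-genie argument, using the index symmetry of the channel so that only one representative per group needs a full derivation. All of these are standard interference-channel outer bounds (ETW/Kramer-style) adapted to the four-message $X$ channel, and the four $Z$-bounds are the Gaussian images of the deterministic bounds already proved in Theorem \ref{lemma:dtrmnstcxc_up}.

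For the single-link bounds (\ref{bound1}) I would bound $R_{ij}$ at receiver $i$ by handing it, as genie side information, the other transmitter's signal $X_{j'}^{(T)}$ ($j'\neq j$) together with the co-transmitter message $W_{i'j}$ ($i'\neq i$). After subtracting $H_{ij'}X_{j'}$ the receiver sees $H_{ij}X_j^{(T)}+Z_i^{(T)}$, and Fano gives $R_{ij}\le \frac1T I(W_{ij};H_{ij}X_j^{(T)}+Z_i^{(T)}\mid W_{i'j})\le \frac12\log(1+H_{ij}^2P_j)$, since conditioning on $W_{i'j}$ leaves $X_j$ a power-constrained input and a Gaussian of matched variance maximizes the entropy. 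For the receiver-side MAC bound (\ref{mac-bound2}) I would give receiver $i$ the two messages $W_{i'1},W_{i'2}$ destined for the other receiver; then all four messages determine $X_1,X_2$, so $T(R_{i1}+R_{i2}-\epsilon)\le I(W_{i1},W_{i2};Y_i^{(T)}\mid W_{i'1},W_{i'2})=h(Y_i^{(T)}\mid W_{i'1},W_{i'2})-h(Z_i^{(T)})\le \frac{T}{2}\log(1+H_{i1}^2P_1+H_{i2}^2P_2)$. For the transmitter-side bound (\ref{mac-bound1}), both $W_{1j},W_{2j}$ ride on $X_j$ alone, so I would route both to whichever receiver has the larger gain $\max(H_{1j}^2,H_{2j}^2)$ (the stronger branch can simulate the weaker), collapsing the sum to a single point-to-point link of that gain and power $P_j$.

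The core of the lemma is the family (\ref{z-bound1})–(\ref{z-bound4}), for which I would imitate the genie construction of Theorem \ref{lemma:dtrmnstcxc_up}. Taking (\ref{z-bound1}): first bound $R_{11}+R_{12}$ at receiver $1$ exactly as in the deterministic derivation of (\ref{eqn:dtrmnstc_up1}), namely $T(R_{11}+R_{12}-\epsilon)\le h(Y_1^{(T)})-h(H_{12}X_2^{(T)}+Z_1^{(T)}\mid W_{12})$ after conditioning additionally on $W_{21}$ and on $X_1^{(T)}$. Then bound $R_{22}$ by handing receiver $2$ the genie $\big(H_{12}X_2^{(T)}+Z_1^{(T)},\,W_{12},\,X_1^{(T)}\big)$; removing $H_{21}X_1$ and expanding with the chain rule reproduces the term $h(H_{12}X_2^{(T)}+Z_1^{(T)}\mid W_{12})$, which cancels against the first bound when the two are summed. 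What survives is $h(Y_1^{(T)})-h(Z_1^{(T)})\le \frac{T}{2}\log(1+H_{11}^2P_1+H_{12}^2P_2)$ (the first RHS term) plus the residual $h(H_{22}X_2^{(T)}+Z_2^{(T)}\mid H_{12}X_2^{(T)}+Z_1^{(T)})-h(Z_2^{(T)})$. The other three $Z$-bounds follow by relabeling transmitters and receivers.

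The main obstacle is that residual term: I must show $h(H_{22}X_2^{(T)}+Z_2^{(T)}\mid H_{12}X_2^{(T)}+Z_1^{(T)})-h(Z_2^{(T)})\le \frac{T}{2}\log\big(1+\frac{H_{22}^2P_2}{1+H_{12}^2P_2}\big)$, i.e. that a Gaussian $X_2$ is the worst input. I would single-letterize via $h(U^{(T)}\mid V^{(T)})\le \sum_t h(U(t)\mid V(t))$ with $U=H_{22}X_2+Z_2$ and $V=H_{12}X_2+Z_1$, then bound each $h(U(t)\mid V(t))$ by the log of its linear-MMSE, which for input power $\sigma_t^2=\mathbf{E}[X_2(t)^2]$ equals $1+\frac{H_{22}^2\sigma_t^2}{1+H_{12}^2\sigma_t^2}$, and finish with Jensen's inequality together with $\frac1T\sum_t\sigma_t^2\le P_2$, using that $\log\big(1+\frac{H_{22}^2\sigma^2}{1+H_{12}^2\sigma^2}\big)=\log(1+(H_{22}^2+H_{12}^2)\sigma^2)-\log(1+H_{12}^2\sigma^2)$ is increasing and concave in $\sigma^2$. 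This worst-case-Gaussian covariance estimate is the only genuinely delicate point; the rest is bookkeeping that parallels the deterministic proof of Theorem \ref{lemma:dtrmnstcxc_up}.
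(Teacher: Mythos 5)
Your proposal is correct. For the bounds the paper does not prove in detail --- (\ref{bound1}), (\ref{mac-bound1}), (\ref{mac-bound2}) --- your genie arguments are the standard justifications; the only point you should make explicit is that for (\ref{mac-bound1}) the two receivers' channels are not degraded versions of one another until the other transmitter's interference is removed, so a genie must first hand $X_{j'}^{(T)}$ (or the messages $W_{1j'},W_{2j'}$) to both receivers before ``the stronger branch can simulate the weaker'' applies. For (\ref{z-bound1})--(\ref{z-bound4}) your route differs from the paper's Appendix \ref{append:zbound} in organization, though both are genie-aided Fano converses that arrive at the identical expression $h(Y_1^{(T)})+h(S_{22}^{(T)}\mid S_{12}^{(T)})-h(Z_1^{(T)})-h(Z_2^{(T)})$ with $S_{ij}=H_{ij}X_j+Z_i$. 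You run a two-bound cancellation --- receiver $1$ bounded after extra conditioning on $(W_{21},X_1^{(T)})$, receiver $2$ given the genie $(S_{12}^{(T)},W_{12},X_1^{(T)})$, with the terms $\pm h(S_{12}^{(T)}\mid W_{12})$ cancelling upon addition --- which is the literal Gaussian translation of the deterministic proof of (\ref{eqn:dtrmnstc_up1}) and mirrors the paper's own proof of Theorem \ref{thm:etw_X}; you correctly account for the fact that the terms that vanish in the deterministic case become the noise entropies $h(Z_1^{(T)}),h(Z_2^{(T)})$ here. The paper instead sets $W_{21}=\phi$ (arguing this costs nothing in the converse) and gives receiver $2$ the genie $(Y_1^{(T)},W_{11},W_{12})$, yielding a single shorter chain of inequalities. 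Your variant keeps all four messages alive and stays structurally parallel to the deterministic section; the paper's is more compact. Your finishing step --- single-letterizing $h(S_{22}^{(T)}\mid S_{12}^{(T)})\le\sum_t h(S_{22}(t)\mid S_{12}(t))$, bounding each term by the entropy of a Gaussian with the LMMSE error variance $1+\tfrac{H_{22}^2\sigma_t^2}{1+H_{12}^2\sigma_t^2}$, and closing with monotonicity and concavity in $\sigma_t^2$ plus Jensen --- is a correct, fully explicit rendering of the step the paper compresses into ``circularly symmetric Gaussian variables maximize differential and conditional entropy under a covariance constraint'' together with convexity.
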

The bound in (\ref{bound1}) is trivial. (\ref{mac-bound1}) and (\ref{mac-bound2}) respectively follow from the bounds on the rates in the multiple access and broadcast channels contained in the $X$ channel. (\ref{z-bound1})-(\ref{z-bound4}) follow from the outerbound shown in \cite{Cadambe_Jafar_XFB} in the more general context of the $X$ channel with relays, feedback, noisy co-operation and full-duplex operation. For completeness we prove (\ref{z-bound1})-(\ref{z-bound4}) in Appendix \ref{append:zbound}.

In the following theorem we show Etkin-Tse-Wang bound for the Gaussian interference channel can be extended to the Gaussian $X$ channel. 
\begin{theorem}
\label{thm:etw_X}
The sum rate $R_{\Sigma}$ achieved by any reliable coding scheme over the $X$ channel satisfies the following bounds
\begin{eqnarray}
R_{\Sigma} &\leq&\frac{1}{2} \log\left(1+ {H_{12}}^2 P_2+ \frac{{H_{11}}^2 P_1}{1+{H_{21}}^2 P_1}\right)+ \frac{1}{2} \log\left(1+{H_{21}}^2 P_1+ \frac{{H_{22}}^2 P_2}{1+{H_{12}}^2 P_2} \right) \label{xoutbound:1} \\
R_{\Sigma} &\leq&\frac{1}{2} \log\left(1+ {H_{11}}^2 P_1+ \frac{{H_{12}}^2 P_2}{1+{H_{22}}^2 P_2}\right)+  \frac{1}{2}  \log\left(1+{H_{22}}^2 P_1+ \frac{{H_{21}}^2 P_2}{1+{H_{11}}^2 P_1} \label{xoutbound:2} \right)
\end{eqnarray}
\end{theorem}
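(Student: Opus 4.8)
The plan is to mimic the Etkin--Tse--Wang genie argument, compensating for the fact that in the $X$ channel neither transmit signal is a function of a single receiver's messages. I would give receiver $1$ the genie signal $S_1 = H_{21}X_1 + Z_2$ together with the cross message $W_{21}$, and symmetrically give receiver $2$ the genie $S_2 = H_{12}X_2 + Z_1$ together with $W_{12}$. The signal $S_1$ is exactly the noise-corrupted contribution of transmitter $1$ to $Y_2$, so it plays the role of the ETW side information, while handing over $W_{21}$ is precisely what the deterministic proof of (\ref{eqn:dtrmnstc_up5}) does, and it is the ingredient absent in the pure interference channel.

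First I would invoke Fano's inequality at each receiver. Since receiver $1$ decodes $W_{11},W_{12}$ from $Y_1^{(T)}$ alone, adding the genie only increases mutual information, and since $W_{21}$ is independent of $(W_{11},W_{12})$ the conditioning contributes nothing, giving $T(R_{11}+R_{12}-\epsilon) \le I(W_{11},W_{12}; Y_1^{(T)}, S_1^{(T)} \mid W_{21})$, and symmetrically $T(R_{21}+R_{22}-\epsilon) \le I(W_{21},W_{22}; Y_2^{(T)}, S_2^{(T)} \mid W_{12})$. Adding these two accounts for all four rates, so their sum upper bounds $T(R_\Sigma - 2\epsilon)$.

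Next I would expand each conditional mutual information into a difference of differential entropies. The crucial simplification is that, conditioned on $(W_{11},W_{21})$, the input $X_1$ is deterministic, and conditioned on $W_{12}$ the remaining randomness of $X_2$ comes only from $W_{22}$; since $Z_1,Z_2$ are independent of each other and of the inputs, the subtracted term at receiver $1$ factors as $h(H_{12}X_2^{(T)} + Z_1^{(T)} \mid W_{12}) + h(Z_2^{(T)})$. The key step is the cross cancellation: $h(S_2^{(T)} \mid W_{12}) = h(H_{12}X_2^{(T)} + Z_1^{(T)} \mid W_{12})$ is exactly this subtracted term, and likewise $h(S_1^{(T)} \mid W_{21}) = h(H_{21}X_1^{(T)} + Z_2^{(T)} \mid W_{21})$ cancels the corresponding term from receiver $2$. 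After these cancellations, and dropping the now-harmless extra conditioning, the sum collapses to $T R_\Sigma \le h(Y_1^{(T)} \mid S_1^{(T)}) + h(Y_2^{(T)} \mid S_2^{(T)}) - h(Z_1^{(T)}) - h(Z_2^{(T)})$, which is structurally identical to the interference-channel ETW bound.

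Finally I would single-letterize via $h(Y_1^{(T)} \mid S_1^{(T)}) \le \sum_{t} h(Y_1(t) \mid S_1(t))$ and maximize each conditional entropy over admissible inputs. A routine MMSE computation with $\mathbf{E}[X_i^2] \le P_i$ yields $\mathrm{Var}(Y_1 \mid S_1) = 1 + H_{12}^2 P_2 + \frac{H_{11}^2 P_1}{1 + H_{21}^2 P_1}$, and subtracting $h(Z_1)$ produces the first logarithmic term of (\ref{xoutbound:1}), the receiver-$2$ term giving the second. The bound (\ref{xoutbound:2}) then follows from the mirror-image genie assignment, in which receiver $1$ instead receives $H_{22}X_2 + Z_2$ together with $W_{22}$ and receiver $2$ receives $H_{11}X_1 + Z_1$ together with $W_{11}$. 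I expect the genie design---recognizing that the noisy cross-signal alone is insufficient and that the cross message must also be supplied so that conditioning renders one transmit signal deterministic---to be the main obstacle; once the right genie is fixed, the entropy cancellations and the Gaussian maximization are routine.
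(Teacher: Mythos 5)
Your proposal is correct and follows essentially the same route as the paper's proof: the identical genie (the noisy cross-signal $H_{21}X_1^{(T)}+Z_2^{(T)}$ together with $W_{21}$ to receiver $1$, and $H_{12}X_2^{(T)}+Z_1^{(T)}$ together with $W_{12}$ to receiver $2$), the identical cancellation of the $h\left(S_{21}^{(T)}\,|\,W_{21}\right)$ and $h\left(S_{12}^{(T)}\,|\,W_{12}\right)$ terms when the two Fano bounds are added, and the same single-letterization followed by Gaussian maximization of the conditional entropies. The only discrepancy is in the mirror genie for (\ref{xoutbound:2}), where you give $W_{22}$ to receiver $1$ and $W_{11}$ to receiver $2$; this is in fact the assignment that makes the conditioning render $X_2$ (respectively $X_1$) deterministic so that the subtracted entropies factor correctly, whereas the paper's text states the swapped assignment ($W_{11}$ to receiver $1$, $W_{22}$ to receiver $2$), which appears to be a typo--so your version is the one under which the symmetric argument actually goes through.
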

\begin{proof}
\begin{figure}
\begin{center}\setlength{\unitlength}{0.00054167in}
\begingroup\makeatletter\ifx\SetFigFont\undefined%
\gdef\SetFigFont#1#2#3#4#5{%
  \reset@font\fontsize{#1}{#2pt}%
  \fontfamily{#3}\fontseries{#4}\fontshape{#5}%
  \selectfont}%
\fi\endgroup%
{\renewcommand{\dashlinestretch}{30}
\begin{picture}(7596,5029)(0,-10)
\path(5175,3958)(5175,3808)
\path(5100,3883)(5250,3883)
\path(5175,1183)(5175,1033)
\path(5100,1108)(5250,1108)
\put(2550,3883){\ellipse{336}{336}}
\put(2475,1108){\ellipse{336}{336}}
\put(5175,1108){\ellipse{336}{336}}
\put(5175,3883){\ellipse{336}{336}}
\path(1800,1108)(2325,1108)
\path(2205.000,1078.000)(2325.000,1108.000)(2205.000,1138.000)
\path(1875,3883)(2400,3883)
\path(2280.000,3853.000)(2400.000,3883.000)(2280.000,3913.000)
\path(6300,3883)(6825,3883)
\path(6705.000,3853.000)(6825.000,3883.000)(6705.000,3913.000)
\path(2700,3883)(5025,3883)
\blacken\path(4905.000,3838.000)(5025.000,3883.000)(4905.000,3928.000)(4941.000,3883.000)(4905.000,3838.000)
\path(2700,3883)(5025,1183)
\blacken\path(4912.598,1244.569)(5025.000,1183.000)(4980.797,1303.296)(4970.188,1246.653)(4912.598,1244.569)
\path(2625,1108)(5025,1108)
\blacken\path(4905.000,1063.000)(5025.000,1108.000)(4905.000,1153.000)(4941.000,1108.000)(4905.000,1063.000)
\path(5325,3883)(5850,3883)
\path(5730.000,3853.000)(5850.000,3883.000)(5730.000,3913.000)
\path(6225,1108)(6750,1108)
\path(6630.000,1078.000)(6750.000,1108.000)(6630.000,1138.000)
\path(900,1108)(1425,1108)
\path(1305.000,1078.000)(1425.000,1108.000)(1305.000,1138.000)
\path(975,3883)(1500,3883)
\path(1380.000,3853.000)(1500.000,3883.000)(1380.000,3913.000)
\path(2625,1108)(5025,3808)
\blacken\path(4978.910,3688.415)(5025.000,3808.000)(4911.643,3748.207)(4969.193,3745.218)(4978.910,3688.415)
\path(5175,3133)(5175,3733)
\blacken\path(5220.000,3613.000)(5175.000,3733.000)(5130.000,3613.000)(5175.000,3649.000)(5220.000,3613.000)
\blacken\path(5130.000,1378.000)(5175.000,1258.000)(5220.000,1378.000)(5175.000,1342.000)(5130.000,1378.000)
\path(5175,1258)(5175,1858)
\dashline{60.000}(6525,4633)(6525,3883)
\path(6495.000,4003.000)(6525.000,3883.000)(6555.000,4003.000)
\path(6555.000,988.000)(6525.000,1108.000)(6495.000,988.000)
\dashline{60.000}(6525,1108)(6525,358)
\path(5325,1108)(5850,1108)
\path(5730.000,1078.000)(5850.000,1108.000)(5730.000,1138.000)
\put(5925,1033){\makebox(0,0)[lb]{{\SetFigFont{9}{10.8}{\rmdefault}{\mddefault}{\updefault}$Y_2$}}}
\put(6900,3808){\makebox(0,0)[lb]{{\SetFigFont{9}{10.8}{\rmdefault}{\mddefault}{\updefault}$\hat{W}_{11},\hat{W}_{12}$}}}
\put(5925,3808){\makebox(0,0)[lb]{{\SetFigFont{9}{10.8}{\rmdefault}{\mddefault}{\updefault}$Y_1$}}}
\put(6825,1033){\makebox(0,0)[lb]{{\SetFigFont{9}{10.8}{\rmdefault}{\mddefault}{\updefault}$\hat{W}_{21},\hat{W}_{22}$}}}
\put(1575,3808){\makebox(0,0)[lb]{{\SetFigFont{9}{10.8}{\rmdefault}{\mddefault}{\updefault}$X_1$}}}
\put(1500,1033){\makebox(0,0)[lb]{{\SetFigFont{9}{10.8}{\rmdefault}{\mddefault}{\updefault}$X_2$}}}
\put(5025,2008){\makebox(0,0)[lb]{{\SetFigFont{8}{9.6}{\rmdefault}{\mddefault}{\updefault}$Z_2$}}}
\put(5025,2833){\makebox(0,0)[lb]{{\SetFigFont{8}{9.6}{\rmdefault}{\mddefault}{\updefault}$Z_1$}}}
\put(6750,583){\makebox(0,0)[lb]{{\SetFigFont{8}{9.6}{\rmdefault}{\mddefault}{\updefault}Genie}}}
\put(6600,4333){\makebox(0,0)[lb]{{\SetFigFont{8}{9.6}{\rmdefault}{\mddefault}{\updefault}Genie}}}
\put(3450,4033){\makebox(0,0)[lb]{{\SetFigFont{8}{9.6}{\rmdefault}{\mddefault}{\updefault}$H_{11}$}}}
\put(3225,3283){\makebox(0,0)[lb]{{\SetFigFont{8}{9.6}{\rmdefault}{\mddefault}{\updefault}$H_{21}$}}}
\put(3375,733){\makebox(0,0)[lb]{{\SetFigFont{8}{9.6}{\rmdefault}{\mddefault}{\updefault}$H_{22}$}}}
\put(2925,2008){\makebox(0,0)[lb]{{\SetFigFont{8}{9.6}{\rmdefault}{\mddefault}{\updefault}$H_{12}$}}}
\put(6150,58){\makebox(0,0)[lb]{{\SetFigFont{8}{9.6}{\rmdefault}{\mddefault}{\updefault}$W_{12},S_{12}$}}}
\put(6150,4858){\makebox(0,0)[lb]{{\SetFigFont{8}{9.6}{\rmdefault}{\mddefault}{\updefault}$W_{21},S_{21}$}}}
\put(150,3808){\makebox(0,0)[lb]{{\SetFigFont{9}{10.8}{\rmdefault}{\mddefault}{\updefault}$W_{11},W_{21}$}}}
\put(0,1033){\makebox(0,0)[lb]{{\SetFigFont{9}{10.8}{\rmdefault}{\mddefault}{\updefault}$W_{12},W_{22}$}}}
\end{picture}
}
\caption{Genie aided $X$ channel used in proof of Theorem \ref{thm:etw_X}}
\end{center}
\end{figure}

Let
\begin{eqnarray*}
S_{ij}(t) &=& H_{ij} X_{j}(t) + Z_{i}(t), i,j \in \{1,2\}
\end{eqnarray*}
Note that $S_{ij}$ are auxiliary variables similar to the ETW outerbound of the interference channel.
Consider any reliable coding scheme. Now, let a genie provide ${S_{12}}^{(T)}= {H_{12}} {X_{2}}^{(T)} + {Z_{1}}^{(T)}$ and $W_{12}$ to receiver $2$.
From Fano's inequality, for any codeword of length $T$, we can write
\begin{eqnarray}
T(R_{22}+R_{21}  - \epsilon) &\leq& I\left(W_{22}, W_{21}; {Y_{2}}^{(T)}, {S_{12}}^{(T)},W_{12}\right)\\
&\leq& I\left( W_{22}, W_{21}; W_{12}\right) + I\left(W_{22}, W_{21}; {Y_{2}}^{(T)}, {S_{12}}^{(T)}~|~ W_{12}\right) \label{Rx2:ineq} \\
&\leq& h\left({Y_{2}}^{(T)}, {S_{12}}^{(T)}~|~ W_{12}\right)  - h\left({Y_{2}}^{(T)}, {S_{12}}^{(T)}|~W_{21}, W_{12}, W_{22}\right) \label{Rx2:ineq1}\\
&\leq& h\left({S_{12}}^{(T)}~|~ W_{12}\right) + h\left({Y_{2}}^{(T)}~|~ {S_{12}}^{(T)}, W_{12}\right)  - h\left({Y_{2}}^{(T)}~|~ W_{12}, W_{22}, W_{21}\right) \nonumber \\&&-h\left({S_{12}}^{(T)}~|~ {Y_{2}}^{(T)}, W_{12}, W_{22}, W_{21}\right)\label{Rx2:ineq2}\\
&\leq& h\left({S_{12}}^{(T)}~|~ W_{12}\right) + h\left({Y_{2}}^{(T)}~|~ {S_{12}}^{(T)}\right)  - h\left({Y_{2}}^{(T)}~|~ {X_{2}}^{(T)},W_{12}, W_{22},  W_{21}, \right) \nonumber \\&&-h\left({S_{12}}^{(T)}~|~{Y_{2}}^{(T)}, {X_{1}}^{(T)}, {X_{2}}^{(T)}, W_{12}, W_{22}, W_{21}\label{Rx2:ineq3}\right)\\
&\leq& h\left({S_{12}}^{(T)}~|~ W_{12}\right) + h\left({Y_{2}}^{(T)}~|~ {S_{12}}^{(T)}\right) - h\left({H_{21}} {X_{1}}^{(T)}+{Z_{2}}^{(T)}~|~ W_{12}, W_{22}, W_{21}, {X_{2}}^{(T)}\right) \\&&- h\left({Z_{1}}^{(T)}~|~ {Y_{2}}^{(T)}, {X_{1}}^{(T)},X_2^{(T)},W_{12}, W_{22}, W_{21}, W_{21}, \right)\label{Rx2:ineq4}\\
&\leq& h\left({S_{12}}^{(T)}~|~ W_{12}\right) + h\left({Y_{2}}^{(T)}~|~ {S_{12}}^{(T)}\right)  - h\left({S_{21}}^{(T)}~|~W_{21}\right) - h\left({Z_{1}}^{(T)}\right) \label{Rx2:ineq5}
\end{eqnarray}
In (\ref{Rx2:ineq}), the first summand is zero because of $W_{12}$ is independent of $W_{21},W_{22}$. We have used the chain rule in (\ref{Rx2:ineq1}). In (\ref{Rx2:ineq2}), we have used the fact that conditioning does not reduce the entropy on the second, third and fourth summands on the right hand side.  In (\ref{Rx2:ineq5}), we have used the fact that $S_{21}^{(T)}={H_{21}}{X_{1}}^{(T)}+{Z_{2}}^{(T)}$ is independent of messages $W_{12},W_{22}$ and the codeword  ${X_{2}}^{(T)}$.

Similarly, if a genie provides receiver $1$ with $S_{21}^{(T)}$ and $W_{21}$, we can bound rates at receiver $1$ as
\begin{equation} T R_{12}+T R_{11} -T \epsilon \leq h\left({S_{21}}^{(T)}| W_{21}\right) + h\left({Y_{1}}^{(T)}| {S_{21}}^{(T)}\right)  - h\left({S_{12}}^{(T)}|W_{12}\right) - h\left({Z_{1}}^{(T)}\right) \label{Rx1:outerbound}\end{equation}
Adding  (\ref{Rx1:outerbound}) and (\ref{Rx2:ineq5}), we get
\begin{eqnarray*} T (R_{\Sigma}-\epsilon) &\leq& h\left({Y_{1}}^{(T)}~| {S_{21}}^{(T)}\right) + h\left({Y_{2}}^{(T)}|{S_{12}}^{(T)}\right) - h\left({Z_{1}}^{(T)}\right) - h\left({Z_{2}}^{(T)}\right) \\
T (R_{\Sigma}-\epsilon) &\leq& \sum_{t=1}^{T} \left[h\left(Y_{1}(t)| S_{21}(t)\right) + h\left(Y_{2}(t)|S_{12}(t)\right)\right] - T h\left({Z_{1}}\right) - T h\left({Z_{2}}\right)
	\end{eqnarray*}
The second inequality above uses the chain rule combined with fact that conditioning does not increase entropy.  Therefore, dividing by $T$, taking $T \to \infty$, and using the fact that Gaussian variables maximize conditional entropy, we get
$$ R_{\Sigma} \leq \frac{1}{2} \log\left(1+ {H_{12}}^2 P_2+ \frac{{H_{11}}^2 P_1}{1+{H_{21}}^2 P_1}\right)+ \frac{1}{2} \log\left(1+{H_{21}}^2 P_1+ \frac{{H_{22}}^2 P_2}{1+{H_{12}}^2 P_2} \right)$$
Note that the above bound on the sum capacity is identical to the ETW bound for the sum capacity of the weak interference channel \cite{Etkin_Tse_Wang}.
Furthermore, we can get another bound on the sum capacity of the $X$ channel, symmetric to the above bound by allowing a genie to provide $S_{22}^{(T)}, W_{11}$ to receiver $1$ and  $S_{11}^{(T)}, W_{22}$ to receiver $2$. In this case, we get
$$ R_{\Sigma} \leq \frac{1}{2} \log\left(1+ {H_{11}}^2 P_1+ \frac{{H_{12}}^2 P_2}{1+{H_{22}}^2 P_2}\right)+ \frac{1}{2} \log\left(1+{H_{22}}^2 P_2+ \frac{{H_{21}}^2 P_1}{1+{H_{11}}^2 P_1} \right)$$
\end{proof}

\subsection{Generalized Degrees of Freedom Outerbound}
\label{subsec:gdofob}
We now translate the capacity outerbounds stated above to a generalized degrees of freedom outerbound. We only find an GDOF outerbound for $\alpha \leq 1$ below, since for $\alpha \geq 1$, we can use lemma \ref{lemma:sym} along with the following theorem to bound the GDOF.
\begin{theorem}
The GDOF of the $X$ channel for $\alpha \leq 1 $ can be bounded as
$$ d(\alpha) \leq 2 \min\left(\max(\alpha,1-\alpha), 1-\alpha/3\right)$$
\end{theorem}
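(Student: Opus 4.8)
The plan is to prove the two halves of the minimum separately and then take the smaller of the two resulting bounds. Writing the target as $d(\alpha)\le 2\min\bigl(\max(\alpha,1-\alpha),\,1-\alpha/3\bigr)$, it suffices to establish, for $\alpha\le 1$, both (i) $d(\alpha)\le 2\max(\alpha,1-\alpha)$ and (ii) $d(\alpha)\le 2(1-\alpha/3)$; these correspond respectively to the four-message and three-message outer bounds already available in the excerpt.

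For part (i), I would substitute the symmetric parameters $H_{11}^2P_1=H_{22}^2P_2=\rho$ and $H_{12}^2P_2=H_{21}^2P_1=\rho^\alpha$ into the ETW-type bound (\ref{xoutbound:1}) of Theorem \ref{thm:etw_X}. Both summands then become identical, giving $R_\Sigma\le\log\bigl(1+\rho^\alpha+\tfrac{\rho}{1+\rho^\alpha}\bigr)$. For $\alpha\le 1$ the bracket is dominated by $\max(\rho^\alpha,\rho^{1-\alpha})=\rho^{\max(\alpha,1-\alpha)}$, so $R_\Sigma\le\max(\alpha,1-\alpha)\log\rho+o(\log\rho)$; dividing by $\tfrac12\log\rho$ and taking the $\limsup$ yields part (i).

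For part (ii), I would add the four ``Z-channel'' bounds (\ref{z-bound1})--(\ref{z-bound4}) of Lemma \ref{lem:knownob}. Each of $R_{11},R_{22},R_{12},R_{21}$ appears in exactly three of the four inequalities, so the left-hand sides sum to $3R_\Sigma$. On the right, after the symmetric substitution, (\ref{z-bound1}) and (\ref{z-bound2}) each scale as $\tfrac12(2-\alpha)\log\rho$, since their second summand $\tfrac{\rho}{1+\rho^\alpha}\sim\rho^{1-\alpha}$, whereas (\ref{z-bound3}) and (\ref{z-bound4}) each scale as $\tfrac12\log\rho$, since their second summand $\tfrac{\rho^\alpha}{1+\rho}\sim\rho^{\alpha-1}\to 0$ for $\alpha\le 1$. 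Summing gives $3R_\Sigma\le(3-\alpha)\log\rho+o(\log\rho)$, hence $d(\alpha)\le 2(1-\alpha/3)$, exactly mirroring the deterministic derivation of (\ref{eqn:dtrmnstcxc_SymUp1}).

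The only delicate step is the asymptotic bookkeeping: in each logarithm one must identify which term controls the leading $\log\rho$ coefficient and confirm that the subdominant terms contribute only $o(\log\rho)$ — in particular that $\tfrac{\rho^\alpha}{1+\rho}$ vanishes, which is precisely where the hypothesis $\alpha\le 1$ enters. Once both GDOF bounds are in hand, combining them gives $d(\alpha)\le 2\min\bigl(\max(\alpha,1-\alpha),1-\alpha/3\bigr)$; a quick check confirms the minimum reproduces the three regimes $2(1-\alpha)$, $2\alpha$, and $2-\tfrac23\alpha$, with the crossover between the ETW piece and the Z-bound piece occurring at $\alpha=\tfrac34$ (and the internal $\max$ switching at $\alpha=\tfrac12$).
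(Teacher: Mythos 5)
Your proposal is correct and follows essentially the same route as the paper: the bound $d(\alpha)\le 2\max(\alpha,1-\alpha)$ is obtained from the ETW-type bound (\ref{xoutbound:1}) under the symmetric substitution, and the bound $d(\alpha)\le 2(1-\alpha/3)$ is obtained by summing the four Z-channel bounds (\ref{z-bound1})--(\ref{z-bound4}), whose left-hand sides add to $3R_\Sigma$. Your asymptotic bookkeeping (including the role of $\alpha\le 1$ in making $\tfrac{\rho^\alpha}{1+\rho}$ vanish) matches the paper's computation exactly.
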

\bigskip
\begin{proof}
Now, for the $X$ channel where
${H_{11}}={H_{22}}=\sqrt{\rho}$, ${H_{21}}={H_{12}}=\sqrt{\rho^{\alpha}}$, and $P_1 = P_2 = 1$ the outerbound in (\ref{xoutbound:1}) leads to

$$R_{\Sigma}(\rho, \alpha) \leq \log(1+\rho^{\alpha}+\frac{\rho}{1+\rho^\alpha})$$
Dividing the above inequality by $\frac{1}{2} \log \rho$ and then taking limits as $\rho \to \infty$, we get
\begin{equation} d(\alpha) \leq 2 \max(\alpha,1-\alpha) \label{gdofbound:1} \end{equation}

Similarly, the bounds in (\ref{z-bound1})-(\ref{z-bound4}) lead to
\begin{eqnarray*}
R_{11}+R_{12}+R_{22} &\leq& \frac{1}{2} \log\big(1+ \rho + \rho^{\alpha}\big) + \frac{1}{2} \log\big(1+\frac{\rho}{1+\rho^\alpha}\big) \\
R_{22}+R_{11}+R_{21} &\leq& \frac{1}{2} \log\big(1+\rho+\rho^{\alpha}\big)+\frac{1}{2} \log\big(1+\frac{\rho}{1+\rho^{\alpha}}\big)  \\
R_{11}+R_{12}+R_{21} & \leq &\frac{1}{2} \log\big(1+\rho + \rho^\alpha \big)+\frac{1}{2} \log\big( 1+\frac{\rho^{\alpha}}{1+\rho}\big)  \\
R_{22}+R_{21}+R_{12} & \leq &\frac{1}{2} \log\big(1+\rho+\rho^{\alpha} \big)+\frac{1}{2} \log\big( 1+\frac{\rho^{\alpha}}{1+\rho}\big)
\end{eqnarray*}

Adding, we get
\begin{eqnarray*} 3(R_{11}+R_{12}+R_{21}+R_{22}) \leq 2 \log(1+\rho+\rho^{\alpha})+\log(1+\frac{\rho}{1+\rho^\alpha})+ \log(1+\frac{\rho^\alpha}{1+\rho})\\
		\Rightarrow 3 C_{\Sigma}(\rho, \alpha) \leq 4 \log(1+\rho+\rho^{\alpha})-\log(1+\rho)- \log(1+\rho^\alpha)\\
	\end{eqnarray*}
Dividing the above inequality by $\frac{1}{2} \log \rho$ and then taking limits as $\rho \to \infty$, we get
\begin{equation}
d(\alpha) \leq  2 - \frac{2 \alpha}{3}\label{gdofbound:2}
\end{equation}

Therefore, from (\ref{gdofbound:1}), (\ref{gdofbound:2}), we get
$$ d(\alpha) \leq 2\min\left(\max(\alpha,1-\alpha),1 - \frac{ \alpha}{3}\right)$$
\end{proof}

\subsection{Achievability of Generalized Degrees of Freedom}
\label{subsec:gdofach}
In this section, we provide an outline of the proof for the achievability of Theorem \ref{thm:gdof}. The main idea of the proof is to transform the symmetric Gaussian $X$ channel to the deterministic $X$ channel by imposing some structure on the transmit signal. Then, we apply the achievable scheme derived in the previous subsection to obtain the GDOF result of the Gaussian case. The proof follows the similar arguments used in \cite{Jafar_Vishwanath_GDOF,Cadambe_Jafar_Shamai}, and we include an outline of the proof for the sake of the completeness. We only consider the case that $0 \leq \alpha \leq 1$ here. The GDOF characteriation for $ \alpha >1 $ follows from Lemma \ref{lemma:sym}.
The symmetric Gaussian $X$ channel is defined by (\ref{eqn:syminout1}),(\ref{eqn:syminout2}). 

For a given $\alpha \in [0,1]$, we can find a pair of non-negative integers $(n_d, n_c)$ and a very small nonnegative value $\epsilon$ such that
\begin{eqnarray}
\label{eqn:alpah}
\alpha = \frac{1}{n_d} \left( n_c + \epsilon (n_d - n_c) \right).
\end{eqnarray}
Note that when $\alpha$ is a rational number, $\epsilon$ is chosen to be zero. But when $\alpha$ is not rational, $\epsilon (n_d - n_c)/n_d$ is used to compensate the difference between $\alpha$ and a rational number $\frac{n_c}{n_d}$ that is very close to $\alpha$. Also note that $(n_c, n_d)$ is chosen such that (\ref{eqn:gdof}) can be achieved without symbol extension for the symmetric deterministic channel with parameter $(n_c, n_d)$.

Consider the sequence of channels, i.e. $\rho$ indexed by $N$ such that
\begin{eqnarray}
\label{eqn:rho}
\rho = Q^{\frac{2Nn_d}{1-\epsilon}}
\end{eqnarray}
where $Q$ is a very large but fixed positive integer and $N$ is a positive integer whose value grows to infinity. Note that $\rho$ grows to infinity as $N$ grows to infinity.

For this channel, we describe the GDOF optimal achievable scheme, for a given $\alpha$ below.

\textit{1) Transmit Scheme:} We impose the following structure on the Q-ary representation of the transmit signal $X_{i}$ at transmitter $i$ for $i \in \{1,2\}$. 
\begin{eqnarray}
\label{eqn:XStruct}
X_{i}
&=& \frac{1}{\sqrt{\rho}}\sum_{k=0}^{Nn_d-1}x_{i,k}Q^k
\end{eqnarray}
In other words, the Q-ary representation of the transmit signal $X_{i}$ looks like $(x_{i,Nn_d - 1}  x_{i,Nn_d - 2}  \ldots  x_{i,2}  x_{i,1}  x_{i,0} . 0 0 \ldots)_{Q}$
 The values of $x_{i,k}$ are restricted to the set $\{ 1,\ldots,\lfloor \frac{Q-1}{4} - 1 \rfloor \}$ to ensure that addition of interference does not produce carry over.
 Since $\epsilon$ is a small non-negative value, we have
\begin{eqnarray}
\mathbf{E} \left[ X_{i}^2\right]&=& \mathbf{E} \left[ \frac{1}{\rho} \left(\sum_{k=0}^{Nn_d-1} x_{i,k} Q^k\right)^2 \right]\\
&\leq &  \frac{1}{\rho} \mathbf{E} \left[ \sum_{k=0}^{Nn_d-1}  (Q-1) Q^{k} \right]\\
&\leq &  \frac{Q^{2Nn_d}}{\rho}\\
& \leq & 1
\end{eqnarray}
Thus, the encoding scheme satisfies the power constraint.

Since the achievable scheme developed in the previous subsection also works in $\mathcal{F}_{\lfloor \frac{Q-1}{4}\rfloor -2}^{Nn_d}$, we can use it to find the transmit signals $\mathbf{X}_{1}, \mathbf{X}_{2} \in \mathcal{F}_{\lfloor \frac{Q-1}{4}\rfloor -2}^{Nn_d}$ for the symmetric deterministic channel with parameter $(n_c,n_d)$ and then obtain the corresponding $X_{1},X_{2} \in \mathbb{R}^+$ by
\begin{eqnarray}
X_{i}
= \frac{1}{\sqrt{\rho}}
\left[ \begin{array}{cccccc}
Q^{Nn_d - 1} & Q^{Nn_d - 2} & \cdots & Q^2 &
Q^1 & 1
\end{array} \right]
\mathbf{X}_{i}
\end{eqnarray}
where the last term $\mathbf{X}_{i}$ is the $Nn_d \times 1$ transmit vector for the deterministic channel.

\textit{2) Receive Scheme :} Each receiver takes the magnitude of the received signal, reduces to modulo $Q^{Nn_d}$, discards the value below the decimal point, and expresses the result in Q-ary representation as
\begin{eqnarray}
\overline{Y}_{i}
&=&
\left\lfloor
\begin{array}{ccc}
\left| Y_{i} \right| & \textrm{mod} & Q^{Nn_d}
\end{array}
\right\rfloor \\
&=&
\begin{array}{cc}
\displaystyle{
\sum_{k=0}^{Nn_d-1}} y_{i,k}Q^k,
&
y_{k,i} \in \{ 0,1,\ldots,Q-1\}
\end{array}
\end{eqnarray}
Substituting (\ref{eqn:alpah}) and (\ref{eqn:rho}) into (\ref{eqn:syminout1}) and (\ref{eqn:syminout2}), we can rewrite the input output equation as
\begin{eqnarray}
\begin{array}{cc}
\label{eqn:NewSymInout}
Y_{i} = \overline{X}_{i} +  Q^{N(n_c - n_d)} \overline{X}_{j} + Z_{i},
&
(i,j) \in \left\{(1,2),(2,1) \right\}
\end{array}
\end{eqnarray}
where 
$$ \overline{X}_i \define \sqrt{\rho} X_i$$
Note that multiplication by $Q^{N(n_c - n_d)}$ shifts the decimal point in the Q-ary representation of $\overline{X}_{j}$ by $N(n_d - n_c)$ places to the left. Therefore, in the absence of noise, the $Nn_d$ digits of $\overline{X}_{1}, \overline{X}_{2}, \overline{Y}_{1}$, and $\overline{Y}_{2}$ behave exactly like the symmetric deterministic channel with parameter $(Nn_c, Nn_d)$. Next, we will consider the effect of AWGN. Let $P_k^e$ be the probability that
\begin{eqnarray}
\overline{Y}_k^{i} \neq \overline{X}_k^{i} + \overline{X}_{k+N(n_d-n_c)}^{j}
\end{eqnarray}
happens for any $(i,j)\in \{(1,2),(2,1) \}$. Due to fact that any additive noise with magnitude no greater than $Q^{k-1}$ does not affect the coefficient of $Q^{k}$, we have
\begin{eqnarray}
1 - P_k^e \geq \textrm{Prob}
\left(\left| Z_{1} \right| \leq Q^{k-1}, \left| Z_{2} \right| \leq Q^{k-1} \right).
\end{eqnarray}
Thus, $P_k^e$ monotonically decreases to $0$ as $k$ grows to infinity. A key result of this observation is that the multi-level coding approach \cite{Cadambe_Jafar_Shamai} approximates the deterministic channel within $o(N)$. Thus, we have
\begin{eqnarray}
R_\Sigma
&=&
R_{\Sigma,det}(Nn_c, Nn_d)
\log_Q \left( \left\lfloor \frac{Q-1}{4} - 2 \right\rfloor \right) + o(N) \\
&=&
\label{eqn:Rsum}
NR_{\Sigma,det}(n_c, n_d)
\log_Q \left( \left\lfloor \frac{Q-1}{4} - 2 \right\rfloor \right) + o(N)
\end{eqnarray}
Combining (\ref{eqn:rho}), (\ref{eqn:Rsum}), and (\ref{eqn:gdofDef}), we have
\begin{eqnarray}
d(\alpha)
&\geq&
\limsup_{N \rightarrow \infty}
\frac{NR_{\Sigma,det} \left( n_d \left( \frac{\alpha - \epsilon}{1 - \epsilon} \right), n_d \right)
\log_Q \left( \left\lfloor \frac{Q-1}{4} - 2 \right\rfloor \right) + o(N)}{\frac{Nn_d}{1-\epsilon}} \\
&=&
\frac{1-\epsilon}{n_d} R_{\Sigma,det}\left( n_d \left( \frac{\alpha - \epsilon}{1 - \epsilon} \right), n_d \right)
\log_Q \left( \left\lfloor \frac{Q-1}{4} - 2 \right\rfloor \right)
\end{eqnarray}
Carrying out the substitution of $R_{\Sigma,det}(\cdot,\cdot)$, choosing $Q$ and $\epsilon$ to be arbitrarily large and small respectively, and comparing with the outerbound, we finish the proof of Theorem \ref{thm:gdof}.

\section{Capacity of the Noisy $X$ Channel}\label{sec:noisyX}
We state the result for the general (asymmetric) case as follows.
\begin{theorem}
\label{thm:AsymNoisyX}
If
\begin{eqnarray}
\left| \frac{H_{12}}{H_{22}} \left( 1 + H_{21}^2 P_1 \right) \right| +
\left| \frac{H_{21}}{H_{11}} \left( 1 + H_{12}^2 P_2 \right) \right| \leq 1, \label{eq:noisyint1}
\end{eqnarray}
then the sum capacity of the Gaussian $X$ channel is given by
\begin{eqnarray}
C_{\Sigma} =
\frac{1}{2} \log \left( 1 + \frac{H_{11}^2 P_1}{1 + H_{12}^2 P_2} \right) +
\frac{1}{2} \log \left( 1 + \frac{H_{22}^2 P_2}{1 + H_{21}^2 P_1} \right).
\end{eqnarray}
Similarly, if
\begin{eqnarray}
\label{eqn:prt2cndt}
\left| \frac{H_{22}}{H_{12}} \left( 1 + H_{11}^2 P_1 \right) \right| +
\left| \frac{H_{11}}{H_{21}} \left( 1 + H_{22}^2 P_2 \right) \right| \leq 1, \label{eq:noisyint2}
\end{eqnarray}
then the sum capacity of the Gaussian $X$ channel is given by
\begin{eqnarray}
\label{eqn:prt2cpcty}
C_{\Sigma} =
\frac{1}{2} \log \left( 1 + \frac{H_{21}^2 P_1}{1 + H_{22}^2 P_2} \right) +
\frac{1}{2} \log \left( 1 + \frac{H_{12}^2 P_2}{1 + H_{11}^2 P_1} \right).
\end{eqnarray}
\end{theorem}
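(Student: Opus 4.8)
The plan is to prove the first assertion in full and deduce the second from it by relabelling the two transmitters. Interchanging transmitter indices $1\leftrightarrow 2$ sends $H_{i1}\leftrightarrow H_{i2}$ and $P_1\leftrightarrow P_2$, which carries condition (\ref{eq:noisyint1}) to (\ref{eq:noisyint2}) and the first capacity expression to (\ref{eqn:prt2cpcty}); hence a single converse suffices. Achievability of the first formula is immediate: setting $W_{12}=W_{21}=\phi$ turns the $X$ channel into a two-user Gaussian interference channel on the direct links, and (\ref{eq:noisyint1}) is precisely the (asymmetric) noisy-interference condition, so Gaussian codebooks decoded while treating interference as noise achieve $\frac{1}{2}\log\left(1+\frac{H_{11}^2 P_1}{1+H_{12}^2 P_2}\right)+\frac{1}{2}\log\left(1+\frac{H_{22}^2 P_2}{1+H_{21}^2 P_1}\right)$ by \cite{MK_int,Shang_Kramer_Chen,Sreekanth_Veeravalli}. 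Everything therefore rests on the converse, i.e. on showing that the two extra messages $W_{12},W_{21}$ cannot raise the sum capacity beyond this value.

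For the converse I would use a \emph{smart-genie} outer bound in the spirit of \cite{Sreekanth_Veeravalli}, reusing the telescoping structure already exploited in the proof of Theorem \ref{thm:etw_X}. Define Gaussian genie signals $S_{12}^{(T)}=H_{12}X_2^{(T)}+N_1^{(T)}$ and $S_{21}^{(T)}=H_{21}X_1^{(T)}+N_2^{(T)}$, where $N_1,N_2$ are auxiliary Gaussian noises jointly distributed with $(Z_1,Z_2)$ through variance and correlation parameters left free for the moment. Giving $(S_{21}^{(T)},W_{21})$ to receiver $1$ and $(S_{12}^{(T)},W_{12})$ to receiver $2$ can only enlarge the rate region, so Fano's inequality bounds $R_{11}+R_{12}$ at receiver $1$ and $R_{21}+R_{22}$ at receiver $2$, whose sum is $R_{\Sigma}$. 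Expanding both mutual informations by the chain rule and adding, the conditional entropies $h(S_{12}^{(T)}\mid W_{12})$ and $h(S_{21}^{(T)}\mid W_{21})$ telescope away exactly as in Theorem \ref{thm:etw_X}, once the genie noise is tuned so that the genie is \emph{useful}; what remains is a bound of the shape $T(R_{\Sigma}-\epsilon)\leq h\left(Y_1^{(T)}\mid S_{21}^{(T)}\right)+h\left(Y_2^{(T)}\mid S_{12}^{(T)}\right)-h\left(N_1^{(T)}\right)-h\left(N_2^{(T)}\right)$.

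The crux is to pick the parameters of $N_1,N_2$ so that two requirements hold at once. First, the useful-genie identity must be met, so that the cross terms cancel and, after a worst-case additive-noise argument together with the fact that Gaussians maximize conditional entropy (the same principle invoked at the end of the proof of Theorem \ref{thm:etw_X}), each surviving differential-entropy difference equals the corresponding single-user treat-as-noise rate $\frac{1}{2}\log\left(1+\frac{H_{ii}^2 P_i}{1+H_{ij}^2 P_j}\right)$ evaluated at i.i.d. Gaussian inputs. Second, the genie-noise covariance matrices must be legitimate, i.e. positive semidefinite. The main obstacle, and the sole place where the hypothesis enters, is this feasibility check: I expect the two covariance constraints to collapse exactly to $\left|\frac{H_{12}}{H_{22}}\left(1+H_{21}^2 P_1\right)\right|+\left|\frac{H_{21}}{H_{11}}\left(1+H_{12}^2 P_2\right)\right|\leq 1$, so that a valid smart genie exists if and only if (\ref{eq:noisyint1}) holds. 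Under that condition the outer bound meets the achievable sum rate, which establishes the theorem.
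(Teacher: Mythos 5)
Your achievability argument (treating interference as noise on the embedded interference channel) and your reduction of the second claim to the first by relabelling transmitters are both correct, and your converse follows the same smart-genie program as the paper, which likewise adapts \cite{Sreekanth_Veeravalli}. The genuine gap is in the middle of your converse. You assert that $h\left({S_{12}}^{(T)}\mid W_{12}\right)$ and $h\left({S_{21}}^{(T)}\mid W_{21}\right)$ ``telescope away exactly as in Theorem \ref{thm:etw_X}.'' In Theorem \ref{thm:etw_X} the cancellation is exact precisely because the genie noise there \emph{is} the other receiver's actual channel noise ($S_{12}=H_{12}X_2+Z_1$), and what that produces is the ETW bound (\ref{xoutbound:1}), which keeps the interference power as a separate additive term inside the logarithm; in the symmetric case it reads $\log\left(1+\xINR+\frac{\xSNR}{1+\xINR}\right)$ versus the treat-as-noise rate $\log\left(1+\frac{\xSNR}{1+\xINR}\right)$, so an ETW-type exact cancellation can never establish this theorem. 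With free genie noises $N_1,N_2$ there is no exact cancellation: the positive term $h\left(H_{12}{X_2}^{(T)}+{N_1}^{(T)}\mid W_{12}\right)$ from receiver $2$'s expansion and the negative term $h\left(H_{12}{X_2}^{(T)}+{Z_1}^{(T)}\mid W_{12},{N_2}^{(T)}\right)$ from receiver $1$'s expansion involve \emph{different} noises. In the paper's proof these differences (its terms $U_1,U_2$) are exactly what survives, and handling them is the heart of the argument: Gaussian inputs are shown to maximize them via the worst-case noise lemma, valid only under the variance conditions (\ref{eq:worstnoisecond}), (\ref{eq:worstnoisecond2}), and only then do the smart-genie equalities (\ref{eq:noisyintcond1}), (\ref{eq:noisyintcond2}) make the genie useless for Gaussian inputs. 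Your sketch asserts both exact telescoping and surviving differences requiring a worst-case-noise argument; these are mutually exclusive, and the intermediate bound you write down is not what either consistent accounting yields. Moreover, the step where the hypothesis actually enters --- showing the constraints are simultaneously feasible exactly under (\ref{eq:noisyint1}) --- is the part the paper proves (by combining the four conditions as in \cite{Sreekanth_Veeravalli}), whereas in your proposal it is only an expectation.

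The plan is repairable, and in a way that shows your instinct was close: if you insist on exact telescoping, you must pin $\mathrm{Var}(N_1)$ to the conditional variance of $Z_1$ given $N_2$ (and symmetrically), i.e.\ sit on the boundary of (\ref{eq:worstnoisecond}); then the two entropies cancel for \emph{arbitrary} input distributions, the worst-case noise lemma becomes unnecessary, and the correlations $E\left[Z_1N_2\right]$, $E\left[Z_2N_1\right]$ remain free to enforce the smart-genie identities. In the symmetric case this reduces to $c=h\left(1+h^2P\right)$ together with $c^2=\sigma^2\left(1-\sigma^2\right)\leq\frac{1}{4}$, which is feasible exactly when $\left|h\left(1+h^2P\right)\right|\leq\frac{1}{2}$, matching the stated condition. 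But this computation (or the paper's inequality version of it) \emph{is} the proof; without it, what you have is a correct outline whose decisive step is missing.
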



\begin{proof}
Let 
$$ \tilde{S}_i(t) = X_i(t) + \tilde{Z}_i(t),~i=1,2$$
$\tilde{Z}_i$ is white Gaussian with zero mean and variance $\sigma_i^2$. Also, let $\tilde{Z}_i(t)$ be correlated with $Z_i(t)$ as $$E\left[Z_i(t) \tilde{Z}_i (t) \right] = \sigma_i \eta_i.$$ Let a genie provide $\tilde{S}_1$ to receiver $1$ and $\tilde{S}_2$ to receiver $2$. Now, we can write using Fano's inequality for a codeword spanning $T$ symbols,
\begin{eqnarray}
T(R_{22}+ R_{21}-\epsilon)
&\leq& I\left(W_{22}, W_{21}; Y_2^{(T)}, \tilde{S}_{2}^{(T)}~|~ W_{12}\right)\\
&\leq& h\left(Y_2^{(T)}, \tilde{S}_{2}^{(T)}~|~ W_{12}\right) - h\left(Y_2^{(T)},\tilde{S}_{2}^{(T)}~|~ W_{12}, W_{22}, W_{21} \right) \\
&\leq& h\left(\tilde{S}_{2}^{(T)}~|~ W_{12}\right) + h\left(Y_2^{(T)}~|~ \tilde{S}_{2}^{(T)} \right)  - h\left(\tilde{S}_2^{(T)}~|~ W_{12}, W_{22}, W_{21}, X_2^{(T)}\right) \nonumber \\&&- h\left(Y_{2}^{(T)}~|~ \tilde{S}_2^{(T)}, W_{12}, W_{22}, W_{21},X_2^{(T)}\right)\label{eq:conditioning}\\
&\leq& h\left(\tilde{S}_{2}^{(T)}~|~ W_{12}\right) + h\left(Y_2^{(T)}~|~ \tilde{S}_{2}^{(T)} \right)  - h\left(X_2^{(T)}+\tilde{Z}_2^{(T)}~|~ W_{12}, W_{22}, W_{21},X_2^{(T)}\right) \nonumber \\&&- h\left(H_{21} X_1^{(T)} + H_{22} X_2^{(T)} + Z_2^{(T)}~|~ Z_2^{(T)}, W_{12}, W_{22}, W_{21}, \tilde{S}_2^{(T)},X_2^{(T)}\right)\\
&\leq& h\left(\tilde{S}_{2}^{(T)}~|~ W_{12}\right) + h\left(Y_2^{(T)}~|~ \tilde{S}_{2}^{(T)} \right)  - h\left(H_{21}X_1^{(T)}+Z_2^{(T)}~|~ W_{21}, \tilde{Z}_2{(T)}\right) - h\left(\tilde{Z}_2^{(T)} \label{eq:R22R21bound} \right)
\end{eqnarray}
where (\ref{eq:conditioning}) holds because we have applied the fact that conditioning reduces entropy in the second,third and fourth terms. In (\ref{eq:R22R21bound}) we have used the fact that $X_{2}^{(T)}, W_{22}, W_{12}$ are independent of $X_{1}^{(T)}$, $Z_1^{(T)}$ and $\tilde{Z}_1^{(T)}$.

Similarly, we can bound rates $R_{12}$ and $R_{11}$ as 
\begin{eqnarray}
T(R_{12}+ R_{11}-\epsilon) & \leq & h\left(\tilde{S}_{1}^{(T)}~|~ W_{21}\right) + h\left(Y_1^{(T)}~|~ \tilde{S}_{1}^{(T)} \right)  - h\left(H_{12}X_2^{(T)}+Z_1^{(T)}~|~ W_{12}, \tilde{Z}_1{(T)}\right) - h\left(\tilde{Z}_1^{(T)}\right) \label{eq:R12R11bound}
\end{eqnarray}

Adding (\ref{eq:R22R21bound}) and (\ref{eq:R12R11bound})  we get
\begin{eqnarray*} T(R_{11}+R_{12}+R_{21}+R_{22} -\epsilon) &\leq&  \underbrace{h\left(X_{2}^{(T)}+\tilde{Z}_2^{(T)}~|~ W_{12}\right) - h(H_{12} X_{2}^{(T)} + Z_1^{(T)}~|~W_{12}, \tilde{Z}_1^{(T)})}_{U_1}  \\ && + 
\underbrace{h\left(X_{1}^{(T)}+\tilde{Z}_1^{(T)}~|~ W_{21}\right) - h(H_{21} X_{1}^{(T)} + Z_2^{(T)}~|~W_{21}, \tilde{Z}_2^{(T)}) }_{U_2} \\ && + 
\underbrace{h(Y_1^{(T)}| S_1^{(T)}) + h(Y_2^{(T)}| S_2^{(T)}) - h(\tilde{Z}_1^{(T)}) - h(\tilde{Z}_2^{(T)})}_{U_{3}}
	\end{eqnarray*}
The rest of the proof goes along the same lines as described in \cite{Sreekanth_Veeravalli}. We only highlight the differences here. 

We first notice that $U_3$ is maximized if we choose $X_{1}$ to have a Gaussian distribution, since conditional entropy $h(Y_i^{(T)}|S_i^{(T)})$ is maximized by the Gaussian distribution.  Therefore, we can write
$$ U_{3} \leq {h(Y_{1G}^{(T)}| \tilde{S}_{1G}^{(T)}) + h(Y_{2G}^{(T)}| \tilde{S}_{2G}^{(T)}) - h(\tilde{Z}_1^{(T)}) - h(\tilde{Z}_2^{(T)})}$$
where for $i=1,2$, $X_{iG}^{(T)},Y_{iG}^{(T)}, \tilde{S}_{iG}^{(T)}$ are variables obtained by using a Gaussian i.i.d sequence of power $P_i$ for $X_i$. 
Now, following the proof of \cite{Sreekanth_Veeravalli}, we derive conditions on $\eta_i, \sigma_i$ for $i=1,2$ so that circularly symmetric Gaussian distribution on $X_i^{(T)}$ maximizes $U_1$ and $U_2$ as well. Specifically, we choose 
\begin{eqnarray} 
\sigma_1^2 &\leq& \frac{1-\eta_2^2}{H_{21}^2} \label{eq:worstnoisecond}
\end{eqnarray}
and circularly symmetric independent Gaussian variables $V^{(T)},V_1^{(T)}$ such that $V \sim \mathcal{N}(0, 1-\eta_2^2 - \sigma_1^2)$ and $V_1 \sim \mathcal{N}(0,\sigma_1^2)$.
Now, we observe that 
\begin{eqnarray*}
U_{2} &=& h\left(X_{1}^{(T)}+\tilde{Z}_1^{(T)}~|~ W_{21}\right) - h\left(H_{21} X_{1}^{(T)} + Z_2^{(T)}~|~W_{21}, \tilde{Z}_2^{(T)}\right) \\
&=&  -I\left(V^{(T)}; X_{1}^{(T)}+V^{(T)}+V_1^{(T)}|W_{21}\right) \\
&=& -h(V^{(T)}|W_{12}) + h(V^{(T)}|X_{1}^{(T)}+V^{(T)}+V_1^{(T)},W_{12}) \\
&\stackrel{(a)}{\leq}& -h(V^{(T)}) + h(V^{(T)}|X_{1}^{(T)}+V^{(T)}+V_1^{(T)}) \\
&\leq& - I\left(V^{(T)}:X_{1}^{(T)}+V^{(T)}+V_1^{(T)} \right)  \label{eq:worstnoiseapp}\\
&\stackrel{(b)}{\leq}& - I\left(V^{(T)}:X_{1G}^{(T)}+V^{(T)}+V_1^{(T)} \right)  \\
&\leq& h\left(X_{1G}^{(T)}+\tilde{Z}_1^{(T)}~\right) - h\left(H_{21} X_{1G}^{(T)} + Z_2^{(T)}~|, \tilde{Z}_2^{(T)}\right) 
\end{eqnarray*}
 In the first term of the summand in $(a)$, we have used the fact that $V^{(T)}$ is independent of $W_{12}$ and in the second summand of $(a)$, we have used the fact that conditioning reduces entropy. Inequality $(b)$ holds because of the worst case noise lemma \cite{Diggavi} as long as  (\ref{eq:worstnoisecond}) is satisfied.
Along the same lines, by choosing 
\begin{eqnarray} 
\sigma_2^2 &\leq& \frac{1-\eta_1^2}{H_{12}^2} \label{eq:worstnoisecond2}
	\end{eqnarray}
we can bound $U_1$ in a similar manner.
Therefore, we can write 
\begin{eqnarray*} T(R_{11}+R_{12}+R_{21}+R_{22}-\epsilon) & \leq & h\left(X_{1G}^{(T)}+\tilde{Z}_1^{(T)}\right) - h\left(H_{21} X_{1G}^{(T)} + Z_2^{(T)}~|~ \tilde{Z}_2^{(T)}\right) \\
&&+h\left(X_{2G}^{(T)}+\tilde{Z}_2^{(T)}\right) - h\left(H_{12} X_{2G}^{(T)} + Z_1^{(T)}~| \tilde{Z}_1^{(T)}\right) \\
&&+h\left(Y_{1G}^{(T)}| \tilde{S}_{1G}^{(T)}\right) + h\left(Y_{2G}^{(T)}| \tilde{S}_{2G}^{(T)}\right) - h\left(\tilde{Z}_{1}^{(T)}\right) - h\left(\tilde{Z}_2^{(T)}\right)\\
&\leq& I\left(X_{1G}^{(T)};Y_{1G}^{(T)},\tilde{S}_{1G}^{(T)}\right)+ I\left(X_{2G}^{(T)};Y_{2G}^{(T)},\tilde{S}_{2G}^{(T)}\right)
\end{eqnarray*}
The rest of the proof follows Lemma 10 in \cite{Sreekanth_Veeravalli}. Specifically, it can be shown that if 
\begin{eqnarray}
H_{11}\sigma_1\eta_1 &=& H_{12}^2 P_2 + 1 \label{eq:noisyintcond1}\\
H_{22}\sigma_2\eta_2 &=& H_{21}^2 P_1 + 1 \label{eq:noisyintcond2}
\end{eqnarray}
then, 
\begin{eqnarray*}
I\left(X_{1G}^{(T)};Y_{1G}^{(T)},\tilde{S}_{1G}^{(T)}\right)  &=& I\left(X_{1G}^{(T)};Y_{1G}^{(T)}\right)\\
I\left(X_{2G}^{(T)};Y_{2G}^{(T)},\tilde{S}_{2G}^{(T)}\right)  &=& I\left(X_{2G}^{(T)};Y_{2G}^{(T)}\right)
\end{eqnarray*}
implying that 
\begin{eqnarray*}
R_{11}+R_{12}+R_{21}+R_{22} &\leq&  \frac{1}{2} \log \left( 1 + \frac{H_{11}^2 P_1}{1 + H_{12}^2 P_2} \right) +
\frac{1}{2} \log \left( 1 + \frac{H_{22}^2 P_2}{1 + H_{21}^2 P_1} \right)
\end{eqnarray*}
Also as shown in \cite{Sreekanth_Veeravalli},  (\ref{eq:worstnoisecond}), (\ref{eq:worstnoisecond2}), (\ref{eq:noisyintcond1}), (\ref{eq:noisyintcond2}) can be combined as 
\begin{equation}
\left| \frac{H_{12}}{H_{22}} \left( 1 + H_{21}^2 P_1 \right) \right| +
\left| \frac{H_{21}}{H_{11}} \left( 1 + H_{12}^2 P_2 \right) \right| \leq 1,
\end{equation}
Equations (\ref{eqn:prt2cpcty}), (\ref{eqn:prt2cndt}) can be derived similarly. This completes the proof.
\end{proof}

\section{Conclusions}
\label{sec:conclusion}
We found the generalized degrees of freedom(GDOF) of the symmetric two-user Gaussian $X$ channel. To find the GDOF of the $X$ channel, we first found the sum capacity of a deterministic $X$ channel and extended insights gained from the deterministic case to obtain the GDOF of the Gaussian channel.  In the process, we found an outerbound for the sum capacity of the two-user Gaussian $X$ channel that coincides with the bound on the sum capacity of the two-user interference channel derived by Etkin, Tse and Wang in reference \cite{Etkin_Tse_Wang}. The implication of the bound is that, for certain regimes, the performance of the $X$ channel  is identical to the performance of the two-user interference channel from a GDOF perspective. However, for other regimes, we showed that the $X$ channel outperforms the interference channel through an interference alignment based achievable scheme. Our result therefore characterizes the benefits obtained from interference alignment from a GDOF perspective. While our results characterize the GDOF of the $X$ channel in the symmetric setting, an interesting and important area of future work lies in extending the study of the general setting which is not symmetric. In particular, there lies open the question of whether new outerbounds are required to characterize the GDOF in the asymmetric case, or whether the current bounds are tight. In the Gaussian multiple access, broadcast and two-user interference networks, the capacity of the appropriate deterministic channel is within a constant number of bits of the corresponding Gaussian channel. It is an imporant open question whether the solution to the deterministic $X$ channel provided in this work leads to useful approximations of the capacity of the Gaussian $X$ channel.

As a by-product of the main result, we also extended bounds derived for the interference channel in \cite{MK_int,Shang_Kramer_Chen,Sreekanth_Veeravalli} to the two-user $X$ channel. The bound implies that, for certain class of channel coefficients, it is capacity optimal in the two-user $X$ channel to set two messages to null so that it forms an interference channel, encode both non-null messages using Gaussian codebooks and decode at both receivers by treating interference as noise. Therefore, interestingly, for a class of channel coefficients, certain messages do not contribute to the sum capacity in the $X$ channel. An interesting open question related to this result is whether there exist channel coefficients in the two-user and/or larger $X$ networks, where setting other sets of messages to null is sum-rate optimal.

\newpage
\appendices

\section{Proof of Lemma \ref{lemma:sym}}
\label{sec:ProofLemmaSym}
The symmetric interference channel maybe represented as 
\begin{eqnarray*} 
Y_1(\tau) &=& \rho X_1(\tau) + \rho^{\alpha} X_2(\tau) + Z_1(\tau)\\
Y_2(\tau) &=& \rho^{\alpha} X_1(\tau) + \rho X_2(\tau) + Z_2(\tau)
\end{eqnarray*}

Now, by simply switching the two receivers in the $X$ channel, the input-output relations maybe alternately described as 
\begin{eqnarray*} 
Y_1^{'}(\tau) &=& \rho^{'} X_1(\tau) + (\rho^{'})^{\alpha^{'}} X_2(\tau) + Z_1^{'}(\tau) \label{xcheq1} \\
Y_2^{'}(\tau) &=&  (\rho^{'})^{\alpha^{'}} X_1(\tau) + \rho^{'} X_2(\tau) +Z_2^{'}(\tau) \label{xcheq2} 
\end{eqnarray*}
where
\begin{eqnarray*}
Y_1^{'} = Y_{2}&,& Y_{2}^{'} = Y_{1}\\ 
Z_{1}^{'} = Z_{2}&,& Z_{2}^{'} = Z_{1}\\
\rho^{'}=\rho^{\alpha}&,& \alpha^{'} = \frac{1}{\alpha}
\end{eqnarray*}
Note that the capacity of the $X$ channel described in equations (\ref{xcheq1}),(\ref{xcheq2}) is $C_{\Sigma}(\rho^{'},\alpha^{'})$. Further more, since simply switching the receivers of the original $X$ channel does not alter the sum capacity, we can write 
$$ C_{\Sigma}(\rho^{'},\alpha^{'}) = C_{\Sigma}(\rho,\alpha)$$
\begin{eqnarray*}
\Rightarrow \lim_{\rho \to \infty} \frac{C_{\Sigma}(\rho^{'},\alpha^{'})}{\frac{1}{2}\log(\rho)} &=& \lim_{\rho \to \infty} \frac{C_{\Sigma}(\rho,\alpha)}{\frac{1}{2}\log(\rho)}\\
\Rightarrow \lim_{\bar{\rho} \to \infty} \alpha \frac{C_{\Sigma}(\rho^{'},\alpha^{'})}{\frac{1}{2}\log(\rho^{'})} &=& d(\alpha)\\
\Rightarrow \alpha d(\alpha^{'}) &=& d(\alpha)\\
\Rightarrow \alpha d(\frac{1}{\alpha}) &=& d(\alpha)
\end{eqnarray*}

\section{Proof of Lemma \ref{lem:extdecomp}}
\label{sec:ProofLemmaDecomp}
	We start with a proof of part 1 of the lemma. 
Before going into the detail, we want to point that Lemma \ref{lem:extdecomp} is for $(n_c, n_d)$ such that $\frac{3}{4} \leq \frac{n_c}{n_d} < 1$. For part 1, it should also be noted that $\frac{n_c}{3}$ is a positive integer. To simply notation usage, let
\begin{eqnarray}
\mathbf{H} = \mathbf{S}^{n_d - n_c}.
\end{eqnarray}
$\mathcal{F}_2^{n_d}$ can be expressed as the following
\begin{eqnarray}
\label{eqn:Fnd}
\mathcal{F}_2^{n_d}
&=&
\textrm{span} \left( \mathbf{e}_1,\mathbf{e}_2,\mathbf{e}_3,\ldots,\mathbf{e}_{n_d-1},
\mathbf{e}_{n_d} \right) \\
& \stackrel{(a)}{=} &
\label{eqn:recur}
\textrm{span}
\left(
\mathbf{e}_1,\mathbf{e}_2,\ldots,\mathbf{e}_{n_d-n_c},
\mathbf{H} \mathbf{e}_1,
\mathbf{H} \mathbf{e}_2, \ldots,
\mathbf{H} \mathbf{e}_{n_d-n_c}, \ldots,
\mathbf{H}^n \mathbf{e}_k
\right) \\
& \stackrel{(b)}{=} &
\textrm{span}
\left(
\mathbf{e}_1, \mathbf{H} \mathbf{e}_1,
\mathbf{H}^2 \mathbf{e}_1,\ldots,
\mathbf{H}^{\left\lfloor \frac{n_d - 1}{n_d - n_c} \right\rfloor} \mathbf{e}_1
\right)
\oplus
\textrm{span}
\left(
\mathbf{e}_2, \mathbf{H} \mathbf{e}_2,
\mathbf{H}^2 \mathbf{e}_2, \ldots,
\mathbf{H}^{\left\lfloor \frac{n_d - 2}{n_d - n_c} \right\rfloor} \mathbf{e}_2
\right) \notag \\
& &
\oplus \cdots \oplus
\textrm{span}
\left(
\label{eqn:CyclcDec}
\mathbf{e}_{n_d - n_c}, \mathbf{H} \mathbf{e}_{n_d - n_c},
\mathbf{H}^2 \mathbf{e}_{n_d - n_c},\ldots,
\mathbf{H}^{\left\lfloor \frac{n_d - (n_d - n_c)}{n_d - n_c} \right\rfloor} \mathbf{e}_{n_d - n_c}
\right)
\end{eqnarray}
where $\mathbf{e}_i$ is the $i_{\textrm{th}}$ column vector of $\mathbf{I}_{n_d}$, an identity matrix in $\mathcal{F}_2^{n_d \times n_d}$, and $\oplus$ is the direct sum operator for subspaces. Note that in step (a), we recursively use the property that when $0 < i \leq n_c$, we have
\begin{eqnarray}
\mathbf{e}_{n_d - n_c + i} = \mathbf{S}^{n_d - n_c} \mathbf{e}_i.
\end{eqnarray}
The $(n,k) \in \mathbb{N}^2$ in (\ref{eqn:recur}) satisfies $n(n_d-n_c)+k=n_d$ and $1 \leq k \leq n_d - n_c$, and can be uniquely determined. In step (b), we reorganize the basis and divide the basis into several subsets. An example of the case that $(n_c, n_d) = (10, 13)$ is illustrated in Fig. \ref{fig:CyclcDcmp}.

\begin{figure}[!tb]
\begin{center}
\includegraphics[angle = 90, width=400pt, trim=80 0 30 0, clip]{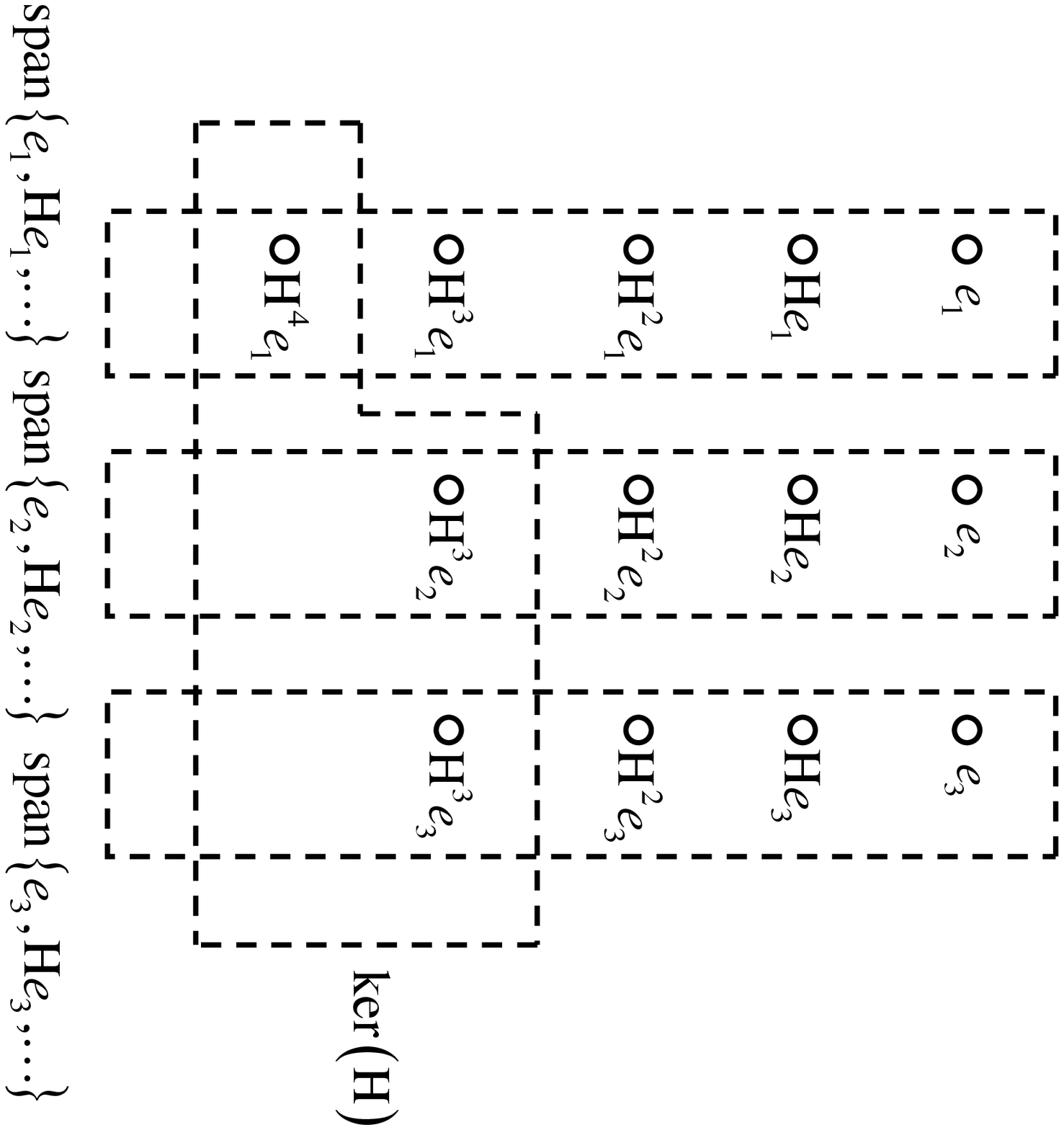}
\caption{A pictorial representation of the cyclic decomposition of $\mathcal{F}_2^{n_d}$ with $(n_c, n_d) = (10,13)$.}\label{fig:CyclcDcmp}
\end{center}
\end{figure}

\begin{figure}[!tb]
\begin{center}
\includegraphics[angle = 90, width=400pt, trim=80 0 60 180, clip]{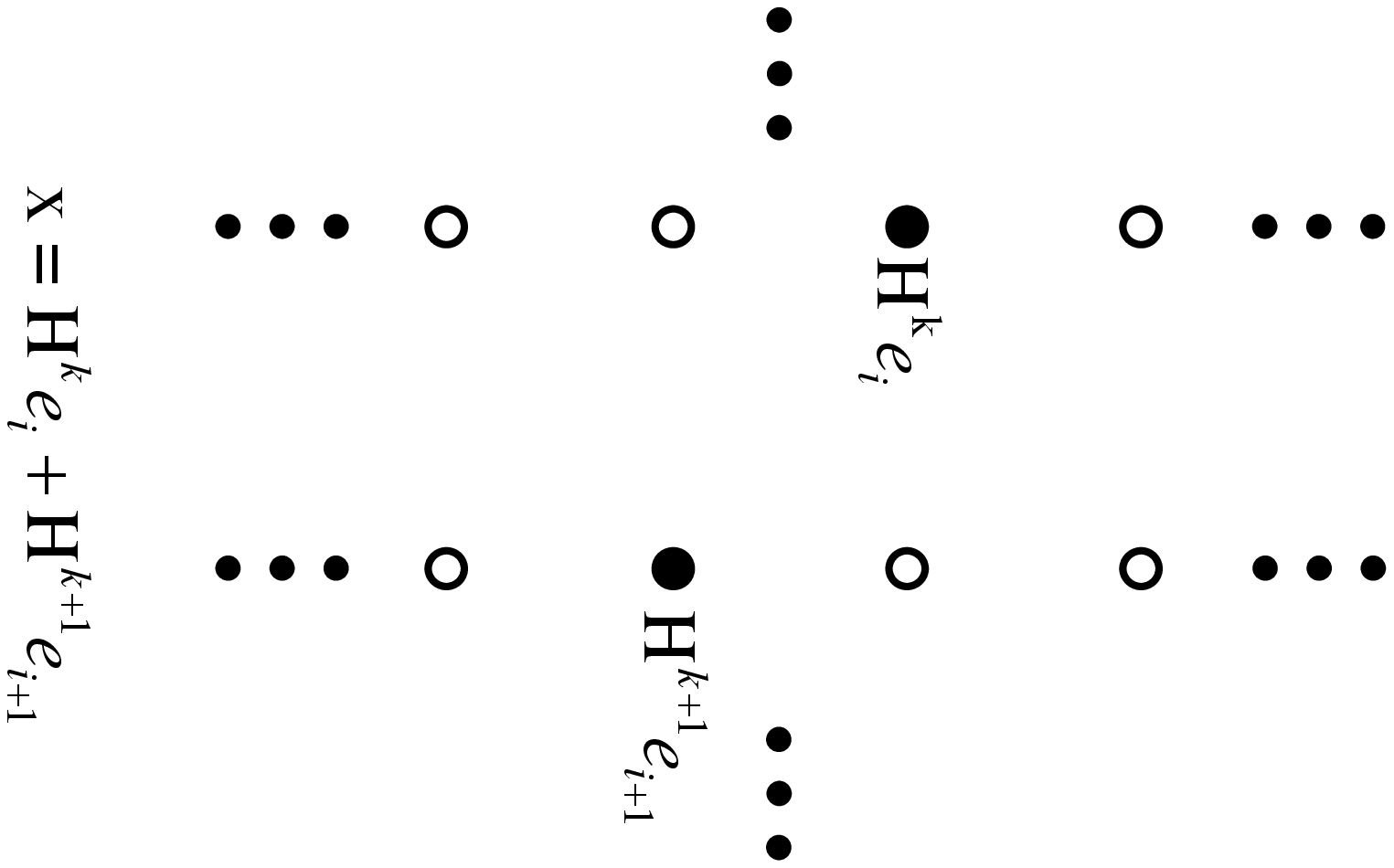}
\caption{The pictorial representation of the linear combination of two vectors: $\mathbf{H}^k \mathbf{e}_i$ and $\mathbf{H}^{k+1} \mathbf{e}_{i+1}$.}\label{fig:LnrCmb}
\end{center}
\end{figure}

\begin{figure}[!tb]
\begin{center}
\includegraphics[angle=90, width=400pt, trim=250 0 50 0, clip]{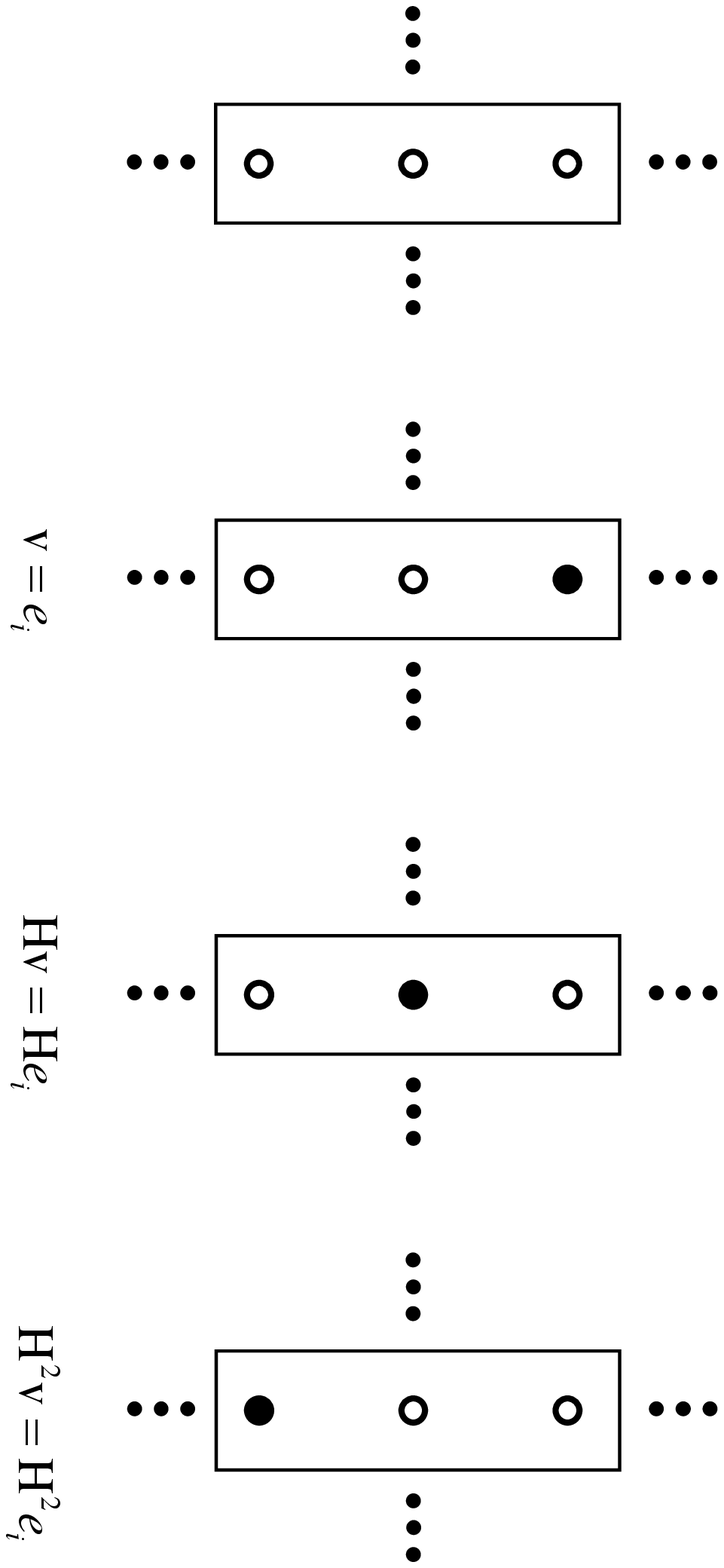}
\caption{Box 1 and the vectors associated with it.}\label{fig:bx1}
\end{center}
\end{figure}

\begin{figure}[!tb]
\begin{center}
\includegraphics[angle=90, width=400pt, trim=190 0 65 0, clip]{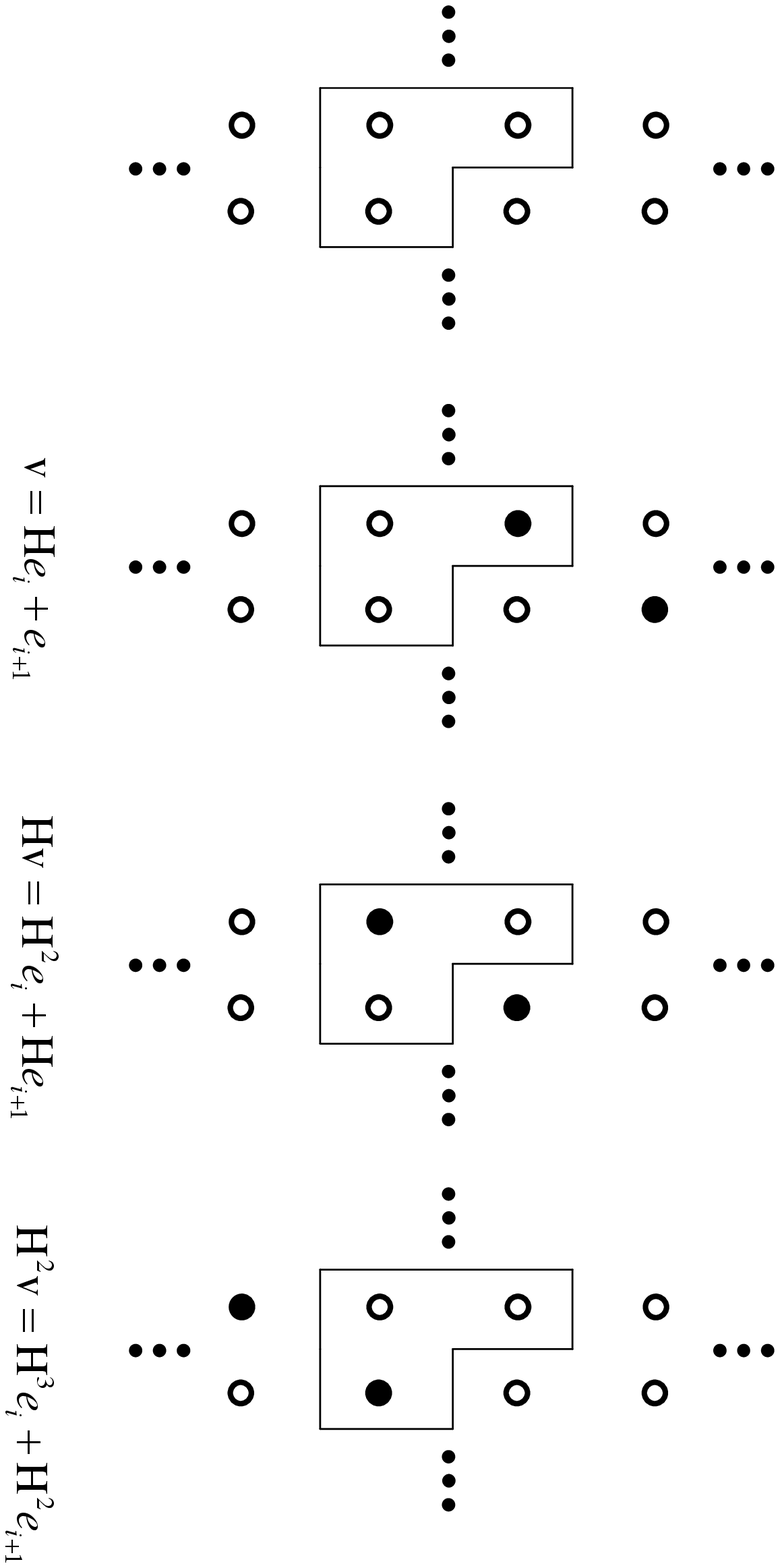}
\caption{Box 2 and the vectors associated with it.}\label{fig:bx2}
\end{center}
\end{figure}

\begin{figure}[!tb]
\begin{center}
\includegraphics[angle=90, width=400pt, trim=190 0 65 0, clip]{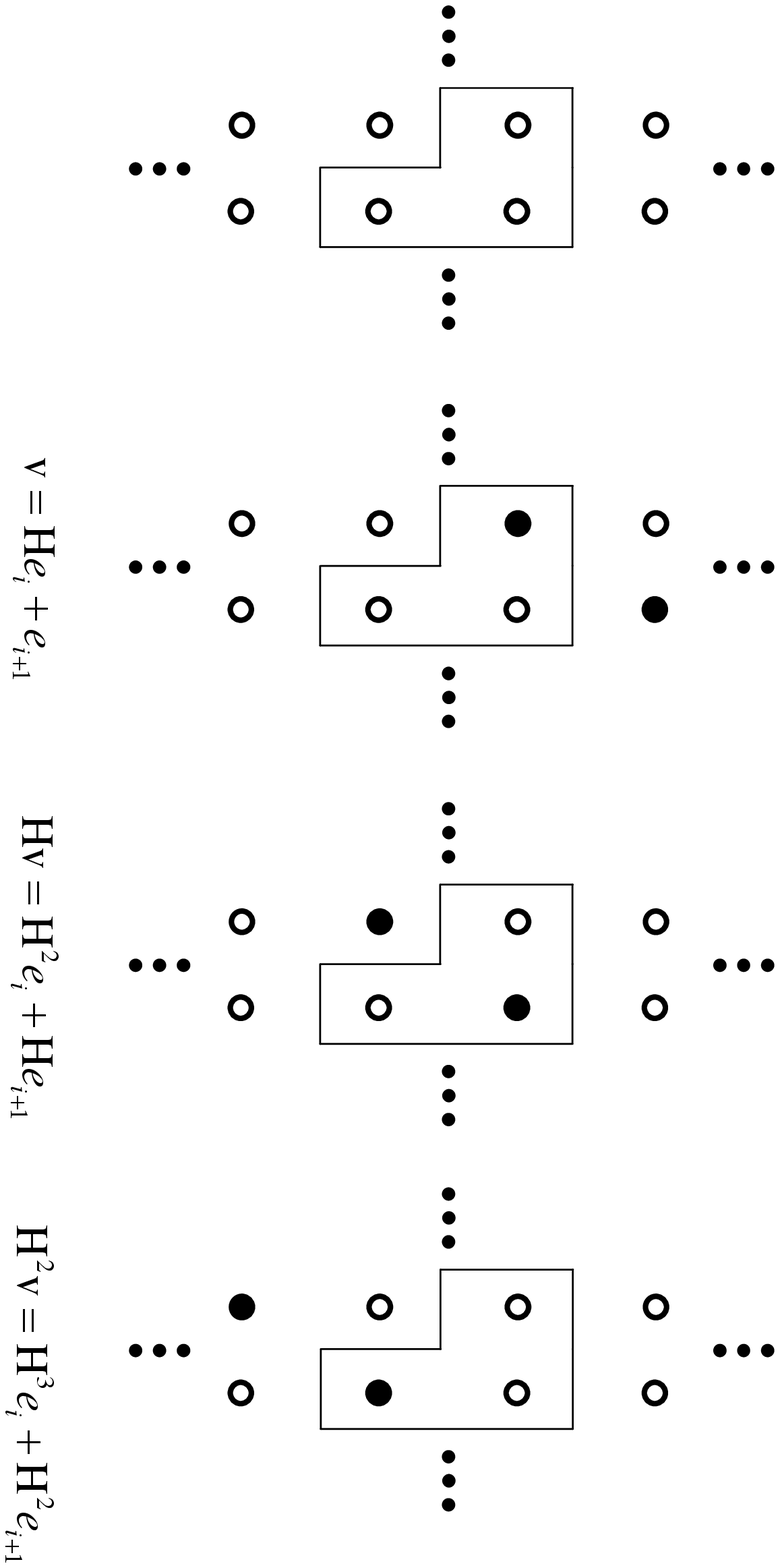}
\caption{Box 3 and the vectors associated with it.}\label{fig:bx3}
\end{center}
\end{figure}

\begin{figure}[!tb]
\begin{center}
\includegraphics[angle=90, width=400pt, trim=130 0 65 0, clip]{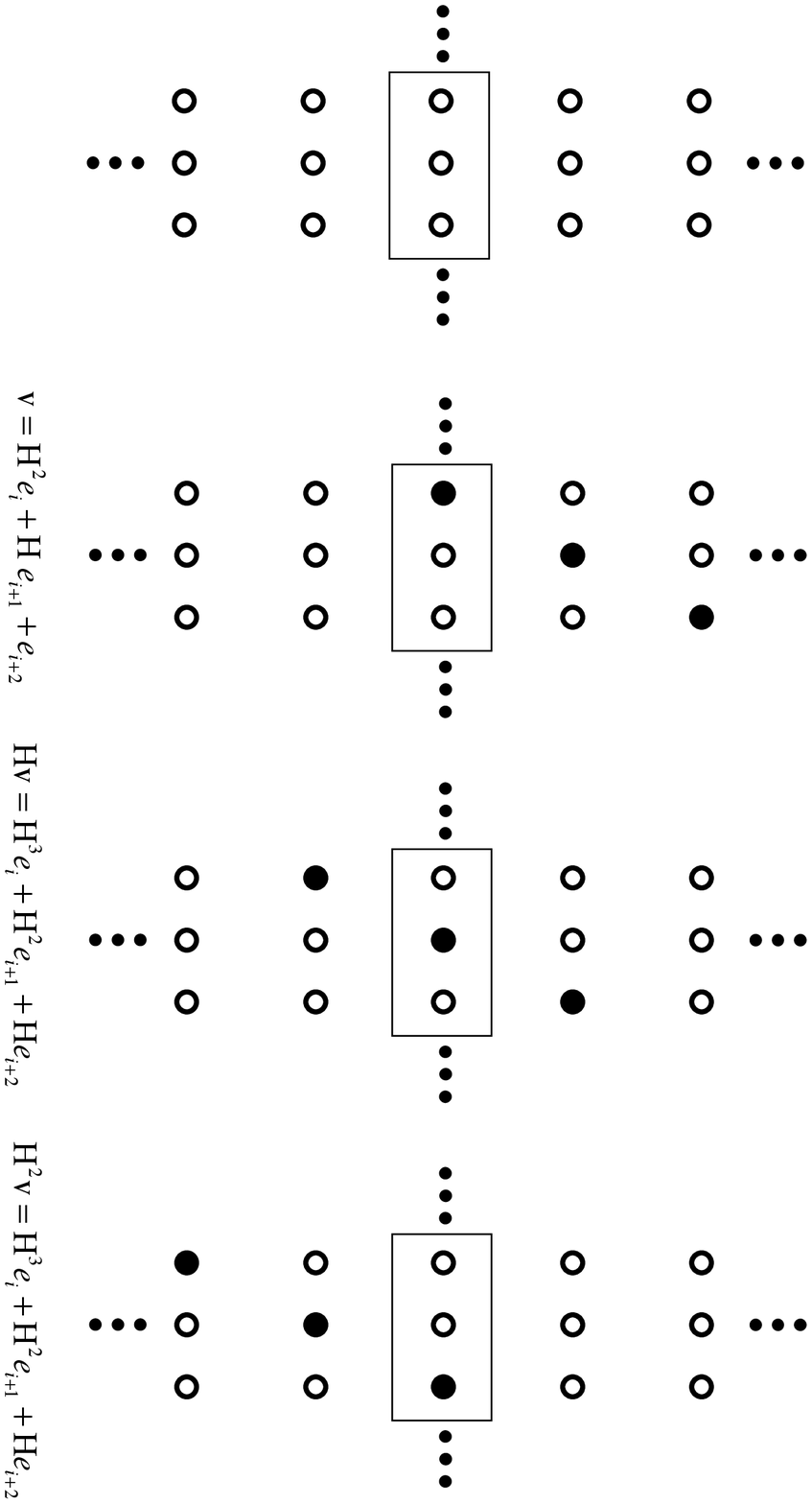}
\caption{Box 4 and the vectors associated with it.}\label{fig:bx4}
\end{center}
\end{figure}
\begin{figure}[!tb]
\begin{center}
\includegraphics[angle=90, width=400pt, trim=190 0 65 0, clip]{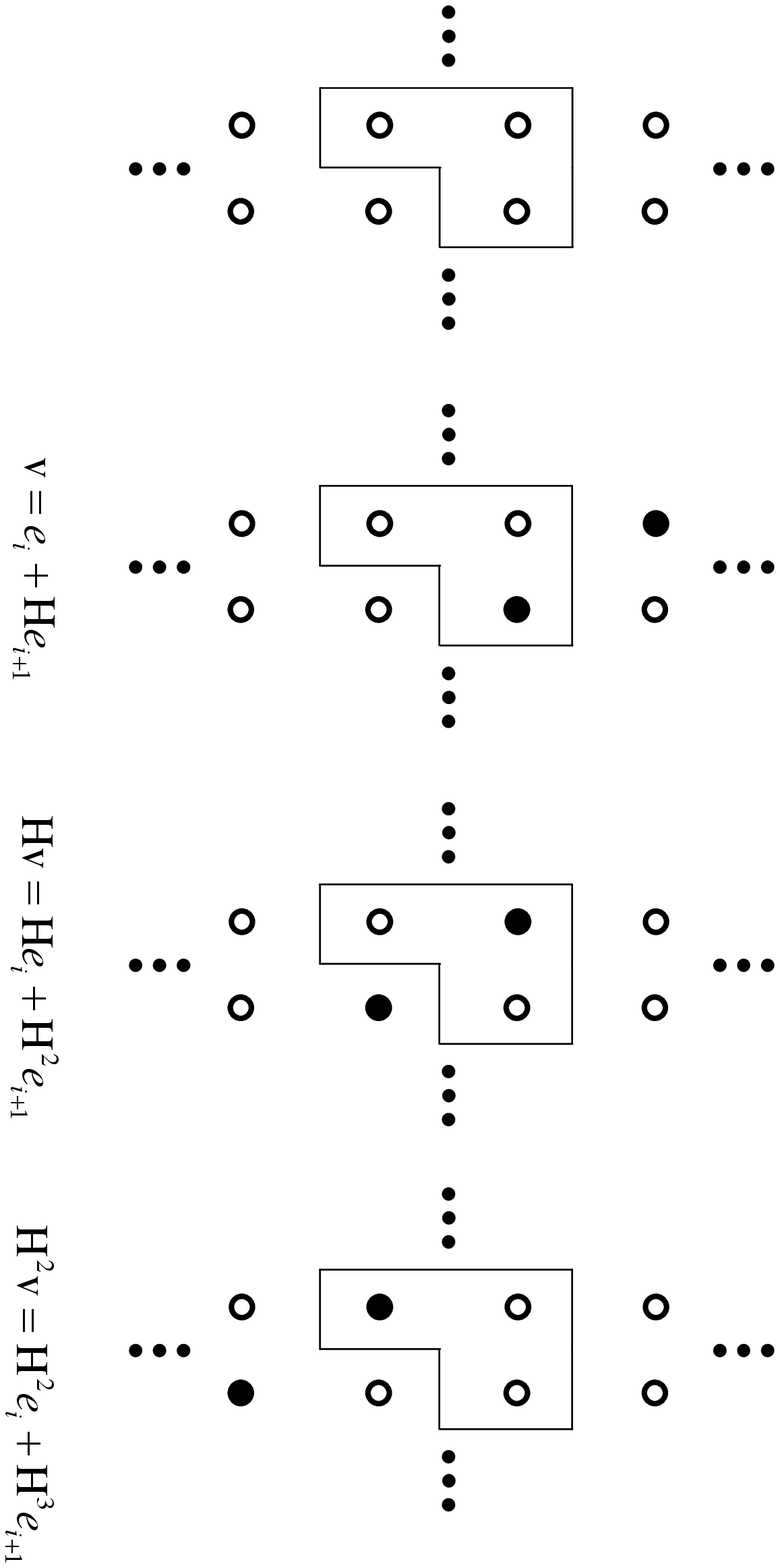}
\caption{Box 5 and the vectors associated with it.}\label{fig:bx5}
\end{center}
\end{figure}
\begin{figure}[!tb]
\begin{center}
\includegraphics[angle=90, width=400pt, trim=50 0 80 0, clip]{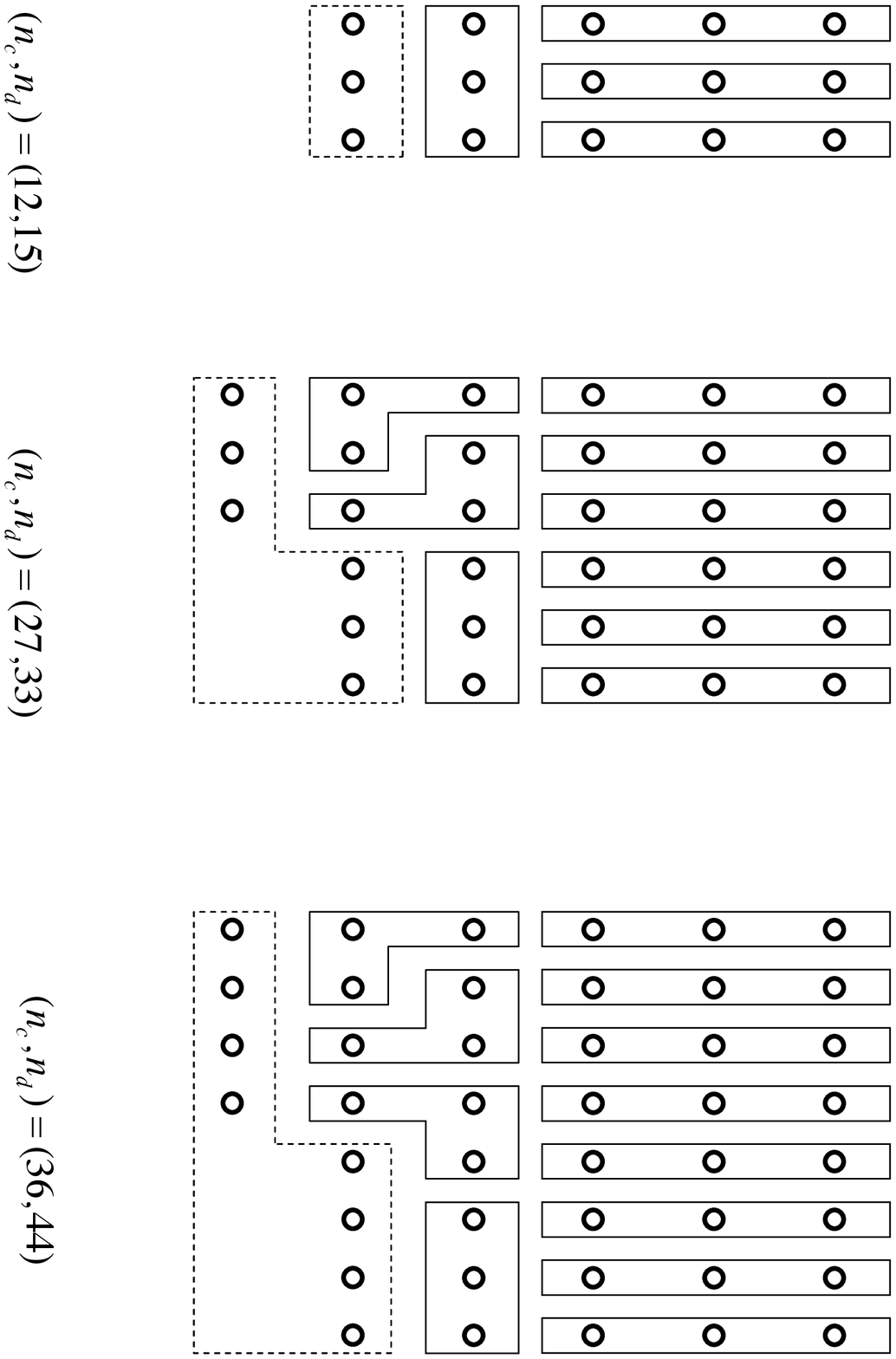}
\caption{Illustration of the algorithm.}\label{fig:GrpCyclcDcmp}
\end{center}
\end{figure}

\begin{figure}[!tb]
\begin{center}
\includegraphics[angle=90, width=400pt, trim=150 0 65 0, clip]{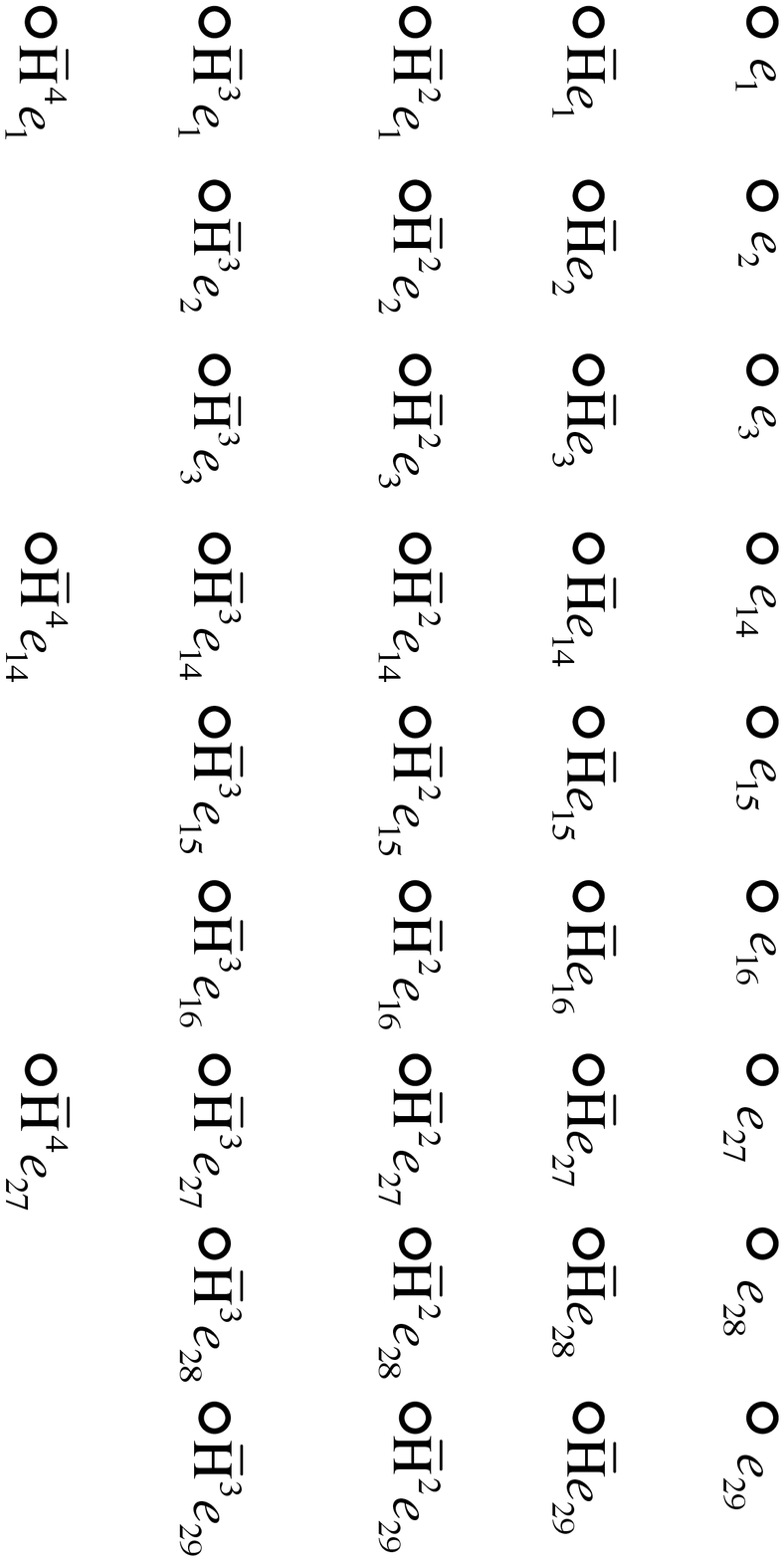}
\caption{A pictorial representation of the cyclic decomposition of $\mathcal{F}_2^{n_d}$ representing the signal space of the 3-symbol extension of the case that $(n_c, n_d) = (10, 13)$.}\label{fig:ExtdCyclcDcmp}
\end{center}
\end{figure}

\begin{figure}[!tb]
\begin{center}
\includegraphics[angle=90, width=400pt, trim=150 0 80 0, clip]{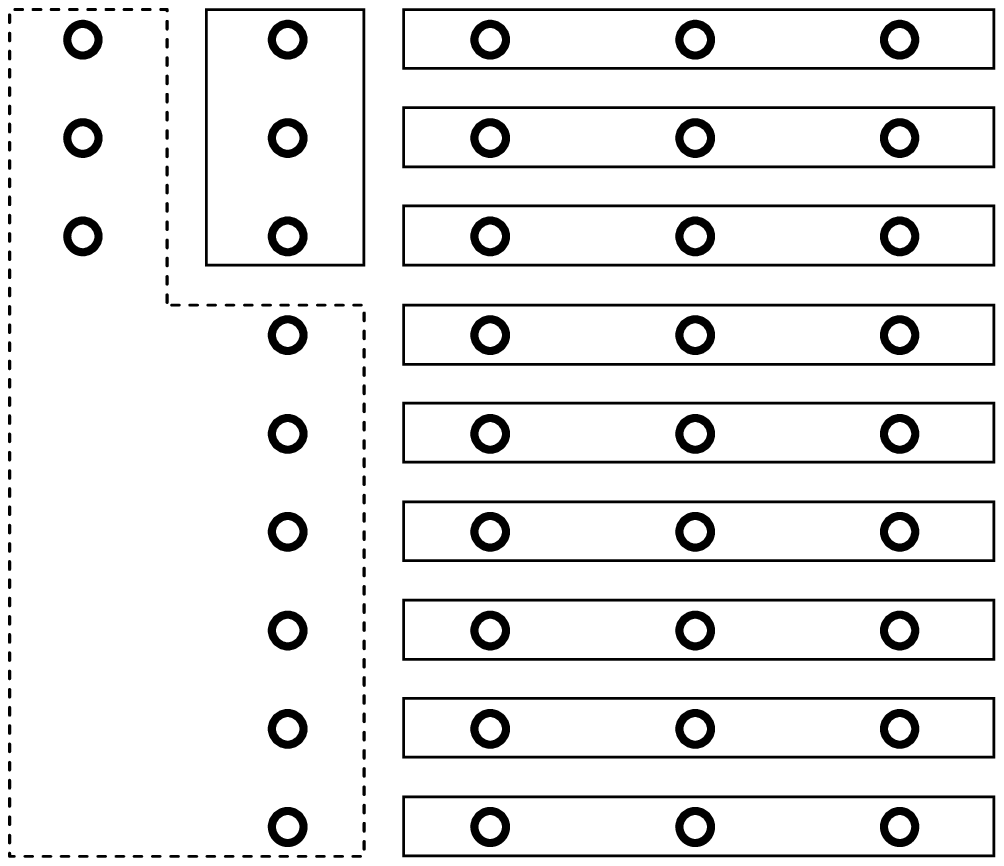}
\caption{A pictorial representation of the cyclic decomposition of $\mathcal{F}_2^{n_d}$ representing the signal space of the 3-symbol extension of the case that $(n_c, n_d) = (10, 13)$ after reordering and grouping.}\label{fig:GrpExtdCyclicDcmp}
\end{center}
\end{figure}

Conceptually, (\ref{eqn:CyclcDec}) decomposes $\mathcal{F}_2^{n_d}$ into several disjoint subspaces. This decomposition is called cyclic decomposition in the content of linear algebra \cite{linear_algebra}. There are some interesting properties for this decomposition. First, multiplying $\mathbf{H}$ to any vector lying in the subspace spanned by $\{\mathbf{e}_i, \mathbf{H} \mathbf{e}_i, \mathbf{H}^2 \mathbf{e}_i,\ldots \}$ results in another vector lying in the same subspace. Thus, $\textrm{span} \{\mathbf{e}_i, \mathbf{H} \mathbf{e}_i, \mathbf{H}^2 \mathbf{e}_i,\ldots \}$ is called an $\mathbf{H}$-invariant subspace of $\mathcal{F}_2^{n_d}$. Second, the total number of the $\mathbf{H}$-invariant subspaces for $\mathcal{F}_2^{n_d}$ is equal to the number of the dimensions of $\ker( \mathbf{H} )$ (Theorem 8.2.19 in \cite{linear_algebra}). Third, $\mathcal{F}_2^{n_d}$ can be expressed as the direct sum of all $\mathbf{H}$-invariant subspaces. Due to the specific structure of $\mathbf{S}^{n_d - n_c}$, the difference between the number of dimensions of any two $\mathbf{H}$-invariant subspaces is less than or equal to one.

After obtaining a new basis for $\mathcal{F}_2^{n_d}$, we can express any vector in $\mathcal{F}_2^{n_d}$ as a linear combination of the vectors in the basis. We use Fig. \ref{fig:LnrCmb} to illustrate our usage of notations. We then introduce five different boxes illustrated in Fig. \ref{fig:bx1} to \ref{fig:bx5}. Each box contains three circles and represents a set of three vectors $\{ \mathbf{v}, \mathbf{H} \mathbf{v}, \mathbf{H}^2 \mathbf{v} \}$ for some $\mathbf{v}$ in $\mathcal{F}_2^{n_d}$. Note that these vectors are linearly independent.

We then use the following algorithm, including five steps, to decompose a plot representing a cyclic decomposition of $\mathcal{F}_2^{n_d}$ into a set of circles representing the basis of $\ker(\mathbf{H})$ and a collection of the boxes shown in Fig. \ref{fig:bx1} to \ref{fig:bx5}.

\textit{STEP 1:} Collect the circles located in the bottom of each column. This gives us the set of circles representing the basis for $\ker(\mathbf{H})$.

\textit{STEP 2:} Starting from the top of each column, put as many boxes shown in Fig. \ref{fig:bx1} as possible in each column. Note that the condition $\frac{n_c}{n_d} \geq \frac{3}{4}$ ensures that at least one such box can be put in each column. After this step, each column has at most two unassigned circles.

\textit{STEP 3:} For the remaining unassigned circles, starting from the left-most place and then gradually moving to the right, alternatively put as many boxes shown in Fig. \ref{fig:bx2} and boxes shown in Fig. \ref{fig:bx3} as possible.

\textit{STEP 4:} For the remaining unassigned circles, starting from the right-most place and then gradually moving to the left, put as many boxes shown in Fig. \ref{fig:bx4} as possible.

\textit{STEP 5:} After steps 1 to 4, if there are still some unassigned circles, they would have the exactly same shape as the box shown in Fig. \ref{fig:bx5} Thus, we could use the box to group the remaining circles. This is the end of all steps.

The algorithm is illustrated in Fig. \ref{fig:GrpCyclcDcmp}. Note that at the end of steps 2 to 4, if there are no unassigned circle, the algorithm is terminated immediately. After step 1, there are $n_c$ unassigned circles. Because each box contains three circles, we need a total number of $\frac{n_c}{3}$ boxes to assign all the circles. Now we are ready to find the column vectors of $\mathbf{V}$.

Let $\mathbf{v}_i$ be the first vector represented by a specific box for $i=1,2,\ldots,\frac{n_c}{3}$. Let $\mathbf{V} \in \mathcal{F}_2^{n_d \times \frac{n_c}{3}}$ be constructed as
\begin{eqnarray}
\mathbf{V} =
\left[
\begin{array}{cccc}
\mathbf{v}_1 & \mathbf{v}_2 & \cdots & \mathbf{v}_{\frac{n_c}{3}}
\end{array}
\right].
\end{eqnarray}
Now consider the following matrix
\begin{eqnarray}
&&
\left[
\begin{array}{ccc}
\mathbf{V} & \mathbf{HV} & \mathbf{H}^2 \mathbf{V}
\end{array}
\right] \notag \\
&=&
\left[
\begin{array}{c|c|c}
\begin{array}{cccc}
\mathbf{v}_1 & \mathbf{v}_2 & \cdots & \mathbf{v}_{\frac{n_c}{3}}
\end{array}
&
\begin{array}{cccc}
\mathbf{H} \mathbf{v}_1 & \mathbf{H} \mathbf{v}_2 & \cdots & \mathbf{H} \mathbf{v}_{\frac{n_c}{3}}
\end{array}
&
\begin{array}{cccc}
\mathbf{H}^2 \mathbf{v}_1 & \mathbf{H}^2 \mathbf{v}_2 & \cdots & \mathbf{H}^2 \mathbf{v}_{\frac{n_c}{3}}
\end{array}
\end{array}
\right]
\end{eqnarray}
Using the fact that$\{ \mathbf{v}_i, \mathbf{H} \mathbf{v}_i, \mathbf{H}^2 \mathbf{v}_i \}$ are linearly independent for $i \in \{ 1,2,\ldots,\frac{n_c}{3} \}$ and the fact that vectors represented by a specific box can not be written as a linear combination of vectors represented by the other boxes, we have the result that all column vectors of $[ \begin{array}{ccc} \mathbf{V} & \mathbf{HV} & \mathbf{H}^2 \mathbf{V} \end{array} ]$ are linearly independent. Therefore, we have
\begin{eqnarray}
\textrm{rank}\left( \left[
\begin{array}{ccc}
\mathbf{V} & \mathbf{HV} & \mathbf{H}^2 \mathbf{V}
\end{array}
\right] \right) = n_c
\end{eqnarray}
and
\begin{eqnarray}
\mathcal{F}_2^{n_d} = \ker(\mathbf{H}) \oplus \textrm{col}(\mathbf{V}) \oplus \textrm{col}(\mathbf{HV}) \oplus \textrm{col}(\mathbf{H}^2 \mathbf{V}).
\end{eqnarray}
This concludes the proof of part 1 of the lemma. We now proceed to part 2.


The proof follows the similar steps with those for part 1, but we need some extra arrangements to deal with channel extension.

$\mathcal{F}_2^{3n_d}$ can be expressed as the following
\begin{eqnarray}
\label{eqn:F3nd}
\mathcal{F}_2^{3n_d}
&=&
\textrm{span} \left( \mathbf{e}_1,\mathbf{e}_2,\mathbf{e}_3,\ldots,\mathbf{e}_{n_d-1},
\mathbf{e}_{n_d} \right)
\oplus
\textrm{span} \left( \mathbf{e}_{n_d+1},\mathbf{e}_{n_d+2},\mathbf{e}_{n_d+3},\ldots,\mathbf{e}_{2n_d-1},
\mathbf{e}_{2n_d} \right) \notag \\
&&
\oplus
\textrm{span} \left( \mathbf{e}_{2n_d+1},\mathbf{e}_{2n_d+2},\mathbf{e}_{2n_d+3},\ldots,\mathbf{e}_{3n_d-1},
\mathbf{e}_{3n_d} \right) \\
& \stackrel{(a)}{=} &
\label{eqn:recurext}
\textrm{span}
\left(
\mathbf{e}_1,\ldots,\mathbf{e}_{n_d-n_c},
\mathbf{\bar{H}} \mathbf{e}_1, \ldots,
\mathbf{\bar{H}} \mathbf{e}_{n_d-n_c}, \ldots,
\mathbf{\bar{H}}^n \mathbf{e}_k
\right) \notag \\
&&
\oplus
\textrm{span}
\left(
\mathbf{e}_{n_d+1}, \ldots,\mathbf{e}_{2n_d-n_c},
\mathbf{\bar{H}} \mathbf{e}_{n_d+1}, \ldots,
\mathbf{\bar{H}} \mathbf{e}_{2n_d-n_c}, \ldots,
\mathbf{\bar{H}}^n \mathbf{e}_{n_d+k}
\right) \notag \\
&&
\oplus
\textrm{span}
\left(
\mathbf{e}_{2n_d+1},\ldots,\mathbf{e}_{3n_d-n_c},
\mathbf{\bar{H}} \mathbf{e}_{2n_d+1}, \ldots,
\mathbf{\bar{H}} \mathbf{e}_{3n_d-n_c}, \ldots,
\mathbf{\bar{H}}^n \mathbf{e}_{2n_d+k}
\right)
\end{eqnarray}
where the notation usage is similar with those used in previous section. The $(n,k) \in \mathbb{N}^2$ in (\ref{eqn:recurext}) satisfies $n(n_d-n_c)+k=n_d$ and $1 \leq k \leq n_d - n_c$, and can be uniquely decided. Note that there are three disjoint subspaces in (\ref{eqn:recurext}), and we can apply the similar decomposition used in (\ref{eqn:CyclcDec}) to decompose each subspace. An example of the 3-symbol extension of the case that $(n_c, n_d) = (10, 13)$ is illustrated in Fig. \ref{fig:ExtdCyclcDcmp}.

One can easily observe that a part of the plot is duplicated twice to form the whole plot, and the part that is duplicated has the same structure with those in Appendix \ref{sec:ProofLemmaDecomp}. Since each column represents a basis for an $\mathbf{\bar{H}}$-invariant subspace, we can simply reorder the columns to let the new plot have the same structure with those in Appendix \ref{sec:ProofLemmaDecomp}. The idea is illustrated in Fig. \ref{fig:GrpExtdCyclicDcmp}. Also note that the number of circles not at the bottom of each column is $3n_c$ which is a multiple of three. Thus we can use the algorithm introduced in Appendix \ref{sec:ProofLemmaDecomp} to decompose the plot and obtain a $\mathbf{V} \in \mathcal{F}_2^{3n_d \times n_c}$ such that
\begin{eqnarray}
\textrm{rank}\left( \left[
\begin{array}{ccc}
\mathbf{V} & \mathbf{\bar{H}} \mathbf{V} & \mathbf{\bar{H}}^2 \mathbf{V}
\end{array}
\right] \right) = 3n_c
\end{eqnarray}
and
\begin{eqnarray}
\mathcal{F}_2^{3n_d} = \ker(\mathbf{\bar{H}}) \oplus \textrm{col}(\mathbf{V}) \oplus \textrm{col}(\mathbf{\bar{H}} \mathbf{V}) \oplus \textrm{col}(\mathbf{\bar{H}}^2 \mathbf{V}).
\end{eqnarray}
This concludes our proof.

\section{Proof of Lemma \ref{lem:knownob}}
\label{append:zbound}
We intend to prove (\ref{z-bound1})-(\ref{z-bound4}) here. We only show (\ref{z-bound1}). All the other bounds follow by symmetry. Since we intend to bound $R_{11}+R_{22}+R_{12}$, we set $W_{21}=\phi$ and show 
\begin{eqnarray*}R_{11}+R_{22}+R_{12} &\leq& \log\big(1+H_{11}^2 P_1+ H_{12}^2 P_2\big)+ \log\big(1+\frac{H_{22}^2 P_2}{1+H_{12}^2 P_2}\big) \end{eqnarray*}
Note that setting $W_{21}=\phi$ does not affect the converse argument since it does not reduce the rates of the other messages. Now, we let a genie provide 
$Y_{1}^{(T)},W_{11},W_{12}$ to receiver $2$. Now, using Fano's inequality, we can bound the sum-rate $R_{11}+R_{12}+R_{22}$ as follows
\begin{eqnarray}
T R_{22} + T R_{11} + T R_{12} - T\epsilon &\leq& I(Y_1^{(T)}; W_{11}, W_{12})+ I(Y_{2}^{(T)}, Y_{1}^{(T)}, W_{11},W_{12}; W_{22})\\
&\leq& I\left(Y_1^{(T)}; W_{11}, W_{12}\right)+I\left(Y_{2}^{(T)}, Y_{1}^{(T)}, ; W_{22}|W_{11},W_{12}\right)+I\left(W_{11},W_{12}; W_{22}\right)\label{zbound:ineq1}\\
&\leq& I\left(Y_1^{(T)}; W_{11}, W_{12}\right)+I\left(Y_{2}^{(T)}, Y_{1}^{(T)}, ; W_{22}|W_{11},W_{12}\right)\label{zbound:ineq2} \\
&\leq& h\left(Y_1^{(T)}\right) - h\left(Y_1^{(T)}| W_{11}, W_{12}\right)+ h\left(Y_{2}^{(T)}, Y_{1}^{(T)}| W_{11},W_{12} \right) \nonumber \\ && - h\left(Y_{2}^{(T)}, Y_{1}^{(T)}| W_{22}, W_{11},W_{12}\right) \label{zbound:ineq3}\\
&\leq& h\left(Y_1^{(T)}\right) + h\left(Y_{2}^{(T)}| Y_{1}^{(T)} W_{11},W_{12} \right) - h\left(Y_{2}^{(T)}, Y_{1}^{(T)}| W_{22}, W_{11},W_{12}\right) \label{zbound:ineq4}\\
&\leq& h\left(Y_1^{(T)}\right) +h\left(Y_{2}^{(T)} | Y_{1}^{(T)}, W_{11},W_{12},X_1^{(T)}\right)  \nonumber \\&&- h\left(Y_{2}^{(T)}, Y_{1}^{(T)}| W_{22}, W_{11},W_{12},X_2^{(T)}, X_1^{(T)}\right) \label{zbound:ineq5}\\
&\leq& h\left(Y_1^{(T)}\right)+h\left(S_{22}^{(T)}| S_{12}^{(T)}, W_{11},W_{12},X_1^{(T)}\right) \nonumber \\ &&- h\left(Z_{2}^{(T)}, Z_{1}^{(T)}| W_{22}, W_{11},W_{12},X_2^{(T)}, X_1^{(T)}\right) \label{zbound:ineq6}\\
&\leq& h\left(Y_{1}^{(T)}\right)+h\left(S_{22}^{(T)}|S_{12}^{(T)}\right) - h\left(Z_{2}^{(T)}, Z_{1}^{(T)}\right) \label{zbound:ineq7}\\
&\leq& \displaystyle\sum_{\tau=1}^{T} h\left(Y_1(\tau)\right)+\sum_{\tau=1}^{T} h\left(S_{22}(\tau)|S_{12}(\tau)\right) - h\left(Z_{2}^{(T)}, Z_1^{(T)}\right)\label{zbound:ineq8}\\
&\leq& T \log\left(1+H_{11}^2 P_1+H_{12}^2 P_2\right) + T\log\left(1+\frac{H_{22}^2 P_2}{1+H_{12}^2 P_2}\right)
\end{eqnarray}
where, in (\ref{zbound:ineq1}), the second term is zero since all messages in the system are independent of each other. The second term in (\ref{zbound:ineq4}) is obtained by combining the second and third summands of (\ref{zbound:ineq3}) using the chain rule. In the first summand on the right hand side in (\ref{zbound:ineq5}), we have used the fact that given $W_{11}$, $X_1$ is known at receiver $2$, since $W_{21}=\phi$. In the second term in (\ref{zbound:ineq5}), we have used the fact that conditioning on $X_1^{(T)}$ and $X_2^{(T)}$ does not reduce entropy. In (\ref{zbound:ineq6}), we have cancelled the effect of $X_1^{(T)}$ from $Y_{1}^{(T)},Y_{2}^{(T)}$. 	 In (\ref{zbound:ineq7}), we have used the fact that conditioning does not reduce entropy, and in the final term, we use the independence of the noise terms w.r.t the inputs and messages in the systems. In the final two steps, we have used the convexity of mutual information, and the fact that that circularly symmetric Gaussian variables maximize differential and conditional entropy under a covariance constraint. The bounds in (\ref{z-bound2})-(\ref{z-bound4}) can be shown by applying similar arguments as above to the appropriate $Z$ channel.
This completes the proof.

\newpage
\bibliography{Thesis}
\end{document}